\documentclass[10pt,reqno]{amsart}
 \usepackage[foot]{amsaddr}
\usepackage{amsmath}
\usepackage{latexsym}
\usepackage{amsfonts}
\usepackage{amssym b}
\usepackage{color}
\usepackage{bbm,dsfont}
\usepackage{graphicx}
\usepackage{hyperref}
%\usepackage{tikz}
%\usepackage{textcomp}
%\usepackage{ulem}

%\allowdisplaybreaks[1]

%\usepackage{showkeys}
 
%%%%%%%%%%%%%%%%%%%%%%%%%%%%%%%%%%%%%%%%%%%%%%%%%%%%%%%%%%%%%%%%%%%%%%%%%%%%%

\newtheorem{proposition}{Proposition}
\newtheorem{proposition?}{Proposition?}
\newtheorem{theorem}{Theorem}
\newtheorem{lemma}{Lemma}
\newtheorem{corollary}{Corollary}

\newtheorem{definition}{Definition}

%%%%%%%%%%%%%%%%%%%%%%%

%colors

%comment
 %comments visible
%\renewcommand{\comment}[1]{} %comments not visible

%numbers
 %rational
 %real
 %complex
 %natural
 %integer
 %half
 %modulus
 %imaginary unit

%general Hilbert space
\newcommand{\hi}{\mathcal{H}} %Hilbert space H
 %Hilbert space K
 %Hilbert space V
 %bounded linear operators
 %bounded linear operators on K
 %trace class operators on H
 %trace class operators on K
 %states
 %effects
 %projections
 %inner product
 %ket
 %bra
 %ketbra
 %norm
 %trace
 %partial trace_1
 %partial trace_2
 %partial trace over V
 %partial trace over V
 %partial trace over V
 %commutation domain
 %lower bounds
 %range
\newcommand{\id}{\mathbbm{1}} %identity operator
 %null operator

%measures
 %measurement outcomes
 %sigma-algebra
 %variance
 % Borel sigma-algebra
 % L^2 space
 %fidelity

\newcommand{\manif}{\mathcal{D}}

\newcommand{\dm}{\dot{\mathsf D}}

\usepackage[left=2cm,top=2cm,right=2cm, bottom=1.95cm, a4paper]{geometry}

\setlength{\parindent}{0 pt}
%\countwithin{equation}{section}
\setlength{\parskip}{\baselineskip}
%%%%%%%%%%%%%%%%%%%%%%%%%%

%address on title page
%\usepackage{xpatch}
%
%\makeatletter
%\xpatchcmd{%
%\@maketitle}{%
%\ifx\@empty\authors \else \@setauthors \fi
%}{%
%  \ifx\@empty\authors \else \@setauthors \fi
%
%  \ifx\@empty\addresses \else\@setaddresses\fi
%}{\typeout{Patch successful}}{\typeout{Patch failed}}
%\makeatother
%
%
%
%%%%%%%%%%%%%%%%%%%%%%%%%%%

\begin{document}
\title[]{Information geometry and local asymptotic normality for multi-parameter estimation of quantum Markov dynamics}

\author[Guta]{Madalin Guta}
\email{madalin.guta@nottingham.ac.uk}
\address{School of Mathematical Sciences, University of Nottingham, University Park,
Nottingham, NG7 2RD, UK}

\author[Kiukas]{Jukka Kiukas}
\email{jek20@aber.ac.uk}

\address{Department of Mathematics, Aberystwyth University, Penglais, Aberystwyth, Ceredigion, SY23 3BZ, UK}

\date{}

\begin{abstract} {This paper deals with the problem of identifying and estimating dynamical parameters of continuous-time quantum open systems, in the input-output formalism. 
First, we characterise the space of identifiable parameters for ergodic dynamics, assuming full access to the output state for arbitrarily long times, and show that the equivalence classes of undistinguishable parameters are orbits of a Lie group acting on the space of dynamical parameters. 
Second, we define an information geometric structure on this space, including a principal bundle given by the action of the group, as well as a compatible connection, and a Riemannian metric based on the quantum Fisher information of the output. We compute the metric explicitly in terms of the Markov covariance of certain "fluctuation operators", and relate it to the horizontal bundle of the connection. 
Third, we show that the system-output and reduced output state satisfy local asymptotic normality, i.e. they can be approximated by a Gaussian model consisting of coherent states of a multimode continuos variables system constructed from the Markov covariance ``data". We illustrate the result by working out the details of the information geometry of a physically relevant two-level system.
}
\end{abstract}

%\tableofcontents

\maketitle

%%%%%%%%%%%%%%%%%%%%%%%%%%%%%%%%%%%%%%%
%%%%%%%%%%%%%%%%%%%%%%%%%%%%%%%%%%%%%%%
\section{Introduction}

The input-output formalism \cite{GardinerZoller,Breuer&Petruccione} is fundamental to key areas of quantum open systems theory such as Markov dynamics, continuous-time measurements and filtering theory \cite{Bel92b,Bouten&vanHandel&James}, quantum networks \cite{Gough&James} and feedback control \cite{Bel99,Nurdin&James&Petersen}. The formalism serves as a platform which integrates in  a common language methods from control engineering, classical and quantum stochastic processes, non-equilibrium statistical mechanics, and quantum information.  In this paper we aim to further expand this platform by adopting a system identification  \cite{Ljung} perspective. Concretely, we investigate which dynamical parameters of an open system can be estimated from the output state (identifiability problem), how the associated quantum Fisher information arises from the structure of the parameter manifold (information geometry), and how the multi-parameter statistical model defined by the output state can be approximated by a quantum Gaussian model (local asymptotic normality).

\begin{figure}[h]
\includegraphics[width=10cm]{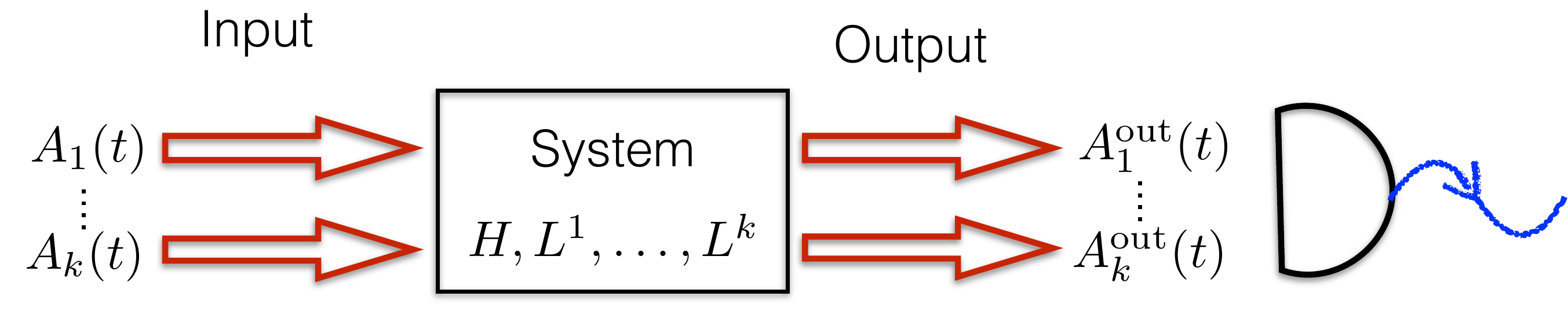}
\caption{Continuous-time Markovian dynamics of an open quantum system in the input-output formalism. 
Input fields $A_i(t)$ interact with the system, so that the joint unitary transformation $U_\mathsf{D}(t)$ depends on the dynamical parameter $\mathsf{D}:= (H,L^1,\dots,  L^k)$ where $H$ is the system Hamiltonian and $L^i$ are the coupling (jump) operators with the input fields. 
%The output fields $A_i^{\rm out}(t)$ carry information about $\mathsf{D}$, which can be estimated by measuring the output. In the Schr\"{o}dinger picture, the system-output state $|\Psi^{\rm out}_\mathsf{D}(t)\rangle$ is a continuous matrix product state.
The output state carry information about $\mathsf{D}$, which can be estimated by measuring the output fields. 
%In the Schr\"{o}dinger picture, the system-output state $|\Psi^{\rm out}_\mathsf{D}(t)\rangle$ is a continuous matrix product state.
%with a joint state $|\Psi(n)\rangle$.
}\label{fig.markov}
\end{figure}

In a typical quantum input-output set-up, an open system (e.g. an atom, or a cavity mode) is driven by an input consisting of the vacuum or coherent state of the electromagnetic field, the latter being modelled by a continuum of Bosonic modes representing the incoming ``quantum noise", see Figure \ref{fig.markov}. The input interacts with the system in a Markovian fashion, with joint unitary evolution $U_{\mathsf{D}}(t)$ determined by the ``dynamical parameters" $\mathsf{D}:= (H, L^1, \dots,  L^k)$, where $H$ is the system Hamiltonian and $L^i$ is the coupling operator to the $i$-th input mode. 
%(cf. quantum stochastic differential equation \eqref{QSDE}). 
%$$
%\label{QSDE}
%dU(t) = \left(\sum_{i=1}^{k}(L^i \otimes dA_i^*(t)-L^{i*}\otimes dA_i(t))-iH_{\rm eff} \otimes \id_\mathcal{F} \, dt\right)U(t), 
%$$
The output fields carry information about the dynamical parameter $\mathsf{D}$, and can be monitored by means of continuous-time measurements, or may be ``post-processed", e.g. by using feed-forward or feedback schemes \cite{Wiseman&Milburn}. 
However, since such schemes often rely on the knowledge of the dynamics, it is important to develop efficient methods 
for estimating the unknown parameters entering the dynamics. Our goal here is not to propose or analyse specific measurement and estimation schemes (see e.g. \cite{Mab96,Wiseman&Gambetta,Molmer&Gammelmark,CatanaGutaKypraios} for related results), but rather investigate the statistical properties of the output state, which will provide the ultimate limits in estimation precision. We envisage that the structure of the output state uncovered here will be relevant not only for designing efficient measurement schemes (cf. \cite{Guta&Janssen&Kahn} for optimal estimation of qubit states) but also for applications in quantum metrology \cite{MGGL} and quantum control, including feedback.

In our analysis we assume that the system is finite dimensional, and the input is stationary (time independent). We also assume that the dynamics is ergodic, i.e. the system has a unique strictly positive stationary state 
$\rho_{ss}$, in which case any initial state converges to $\rho_{ss}$ and the output becomes stationary in time. From a quantum information perspective, the system-output state $|\Psi^{\rm s+o}(t)\rangle$ associated to the time interval $[0, t]$ is a continuous matrix product state \cite{Verstraete&Cirac}, and the output state 
$\rho^{\rm out}(t)$  is a continuous version of a purely generated finitely correlated state \cite{FNW92}. 
Our results are therefore relevant for the problem of estimating such states, whose discrete version was considered in \cite{BaGr13} from the perspective of quantum tomography of spin chains.

Since we deal with a multi-parameter statistical problem, we adopt a differential geometry viewpoint in the spirit of the theory of information geometry \cite{Amari&Nagaoka}. This allows us to characterise the manifold of identifiable parameters as the quotient of the parameter space with respect to a group of transformations leaving the output state invariant (see Theorem \ref{equiv_thm}), thus extending our previous results for discrete time quantum Markov chains \cite{GutaKiukas}.  An analogous differential geometric construction has been presented in \cite{HaMa12, HaOsVe13} for parametrisations of discrete matrix product states, and a related approach has been used in studying the manifold of correlation matrices for stationary states of certain specific open quantum systems \cite{BaGiZa13}.

Furthermore, we show that the quantum Fisher information (QFI) \cite{Holevo,Braunstein&Caves} of the output is closely related to the covariance of certain ``fluctuation operators", which we study in detail in section \ref{sec.info.geom}. %, cf. also \cite{Guta11} and \cite{Macieszczak}. 
%We show that the covariance is associated with a natural principal connection with the horizontal bundle corresponding to the identifiable directions in the tangent space. 
The covariance defines a Riemannian metric on the space of identifiable parameters, and provides a complex structure and a positive inner product on the tangent space of identifiable parameters. 
An alternative approach to computing the quantum Fisher information is described in 
\cite{molmer}, see also \cite{MGGL} and \cite{Cozzini&Ionicioiu&Zanardi}.

With the help of this differential geometric structure we construct an associated algebra canonical commutation relations (CCR), %(continuous variables system). 
and a family of coherent states whose QFI is equal to the QFI per time unit of the output state. 
The latter will play the role of limit Gaussian model below.

Local asymptotic normality (LAN) is a key concept in asymptotic statistics, that describes how certain 
statistical models can be approximated by  simpler Gaussian models, with vanishing error in the limit of large ``sample size".  This phenomenon occurs for instance in the case of models consisting of independent, identically distributed samples \cite{LeCam}, but also for multiple observations from an ergodic Markov process \cite{HopfnerJacodLadelli}, or hidden Markov process \cite{BickelRitovRyden2}. In quantum statistics, the general theory of convergence of models was discussed in \cite{GutaJencova,GillGuta}, and LAN for ensembles independent finite dimensional systems was established in \cite{KahnGuta}. For quantum Markov dynamics, LAN for one-dimensional  parameter models was discussed in \cite{Guta11,GutaKiukas} for discrete time, and in \cite{CatanaGutaBouten} for continuous-time. Here we extend the latter to the multi-dimensional model where all identifiable parameters are assumed to be unknown; this brings forward the information-geometric aspects, which do not play a significant role in a one-parameter setting. Theorem \ref{th.lan} shows that the system-output state and (reduced) output state models converge to the Gaussian model consisting of a family of coherent states of the above mentioned CCR algebra, in the limit of large times. 

The present investigation suggests several interesting future lines of research.  One direction is to understand the physical significance of the geodesic distance of the Fisher metric and the relation to quantum speed limit \cite{Taddei&Escher&Davidovich&MatosFilho} and thermodynamic metrics \cite{Sivak&Crooks}. Another direction is to show that fluctuation operators satisfy the Central Limit Theorem, and identify the measurement which achieves the optimal estimation precision. 
Building on \cite{Guta&Yamamoto}, one can develop a similar theory for the identification of quantum linear input-output  
systems in the stationary regime, i.e. from the ``power spectrum".
Moreover, the extension of the current theory to non-ergodic dynamics and the analysis of ``metastable" \cite{MGGL2} or ``near phase transition" \cite{MGGL} systems is important due to its relevance for quantum metrology. Finally, our framework has a number of interesting generalisations connected with other ongoing mathematical work on quantum stochastic evolutions. In particular, when the stationary state manifold is nontrivial (non-ergodic case), one can discuss conserved quantities and adiabatic transport \cite{AvFrGr12,GoRaSm15,Albert}. From the more technical point of view, our manifold of dynamical parameters actually has a natural Lie group structure \cite{EvGoJa12}; reformulation of our results in this more structured framework could be useful especially for applications to control theory.

In order to increase the accessibility of the paper, we collect the main constructions and results in the next section.

%%%%%%%%%%%%%%%%%%%%
\section{Overview of results}\label{sec.overview}
%%%%%%%%%%%%%%%%%%%%%%%

Section \ref{sec.Markov.dynamics} introduces the input-output formalism of quantum open dynamics, as illustrated in Figure \ref{fig.markov}. For a given dynamical parameter 
$\mathsf{D}:= (H, L^1, \dots,  L^k)$, the system-output state is given by $|\Psi_\mathsf{D}^{\rm s+o}(t)\rangle = U_{\mathsf{D}}(t)|\varphi\otimes \Omega\rangle$ where $|\varphi\rangle$ is the initial system state, 
$|\Omega\rangle$ is the input state (taken to be the vacuum), and $U_{\mathsf{D}}(t)$ is the joint unitary evolution given by the quantum stochastic differential equation 
$$
dU_{\mathsf{D}}(t) = 
\left(\sum_{i=1}^{k}(iH \otimes \id_\mathcal{F} \, dt + L^i \otimes dA_i^*(t)-L^{i*}\otimes dA_i(t))-\frac{1}{2}\sum_{i=1}^{k}L^{i*}L^i \otimes \id_\mathcal{F} \, dt\right)U_{\mathsf{D}}(t).
$$
Above, $dA_i(t)$ and $dA^*_i(t)$ are the time increments of input annihilation and creation operators of $k$ Bosonic input channels, acting on the Fock space $\mathcal{F}$ over $L^2(\mathbb{R}_+)\otimes \mathbb{C}^k$. 
The reduced system evolution is governed by an ergodic Markov semigroup with Lindblad generator $\mathbb{W}_\mathsf{D}$, and unique stationary state $\rho^{ss}_\mathsf{D}$. The output state after time $t$ is obtained by tracing out the system, $\rho^{\rm out}_\mathsf{D}(t)= {\rm tr}_s( |\Psi_\mathsf{D}^{\rm s+o}(t)\rangle \langle \Psi_\mathsf{D}^{\rm s+o}(t)|)$. For long times the system converges to the stationary state, and the output becomes stationary in time.

\begin{figure}[h]
\includegraphics[width=6cm]{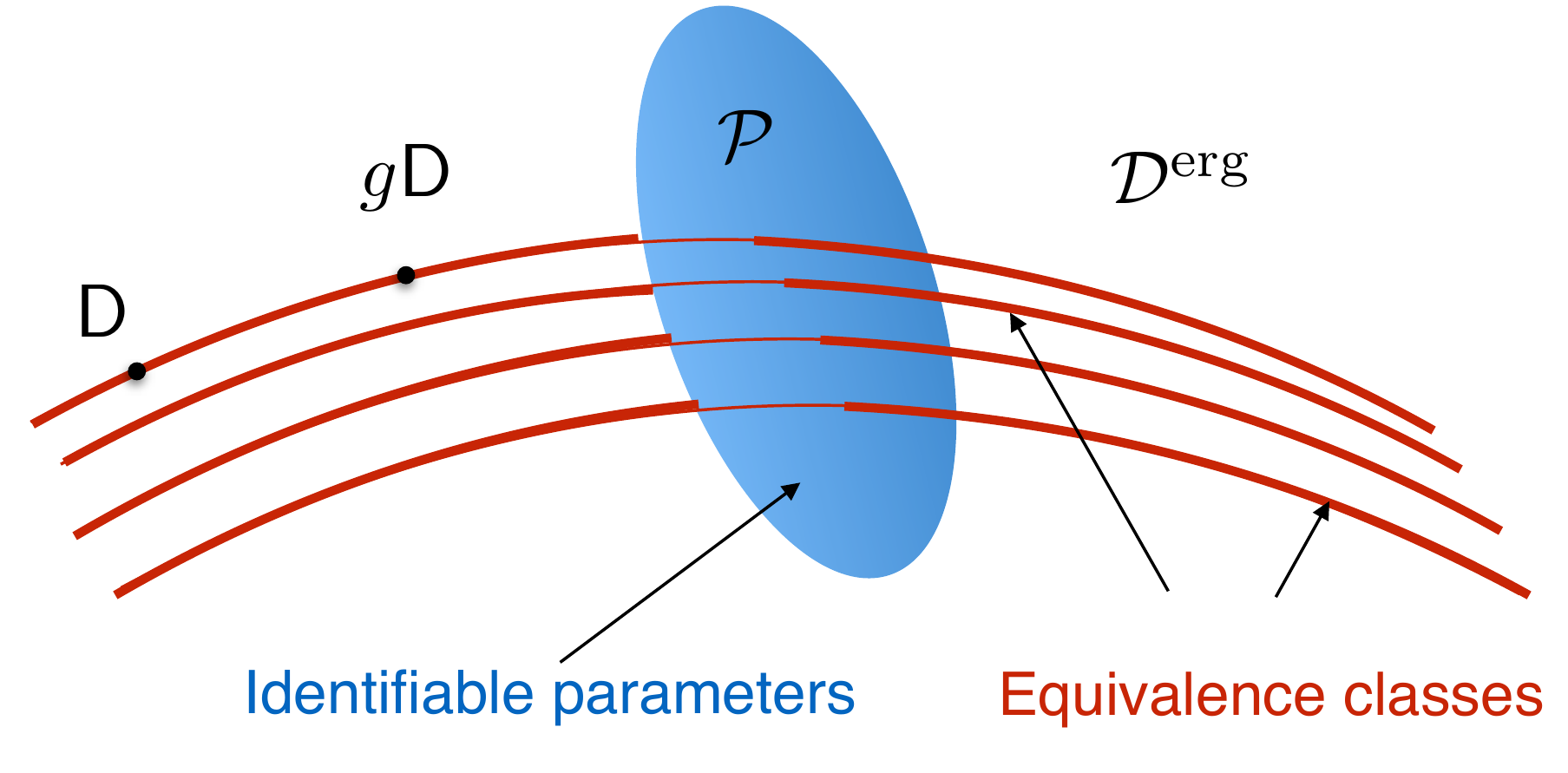}\hspace{2cm}
\includegraphics[width=6cm]{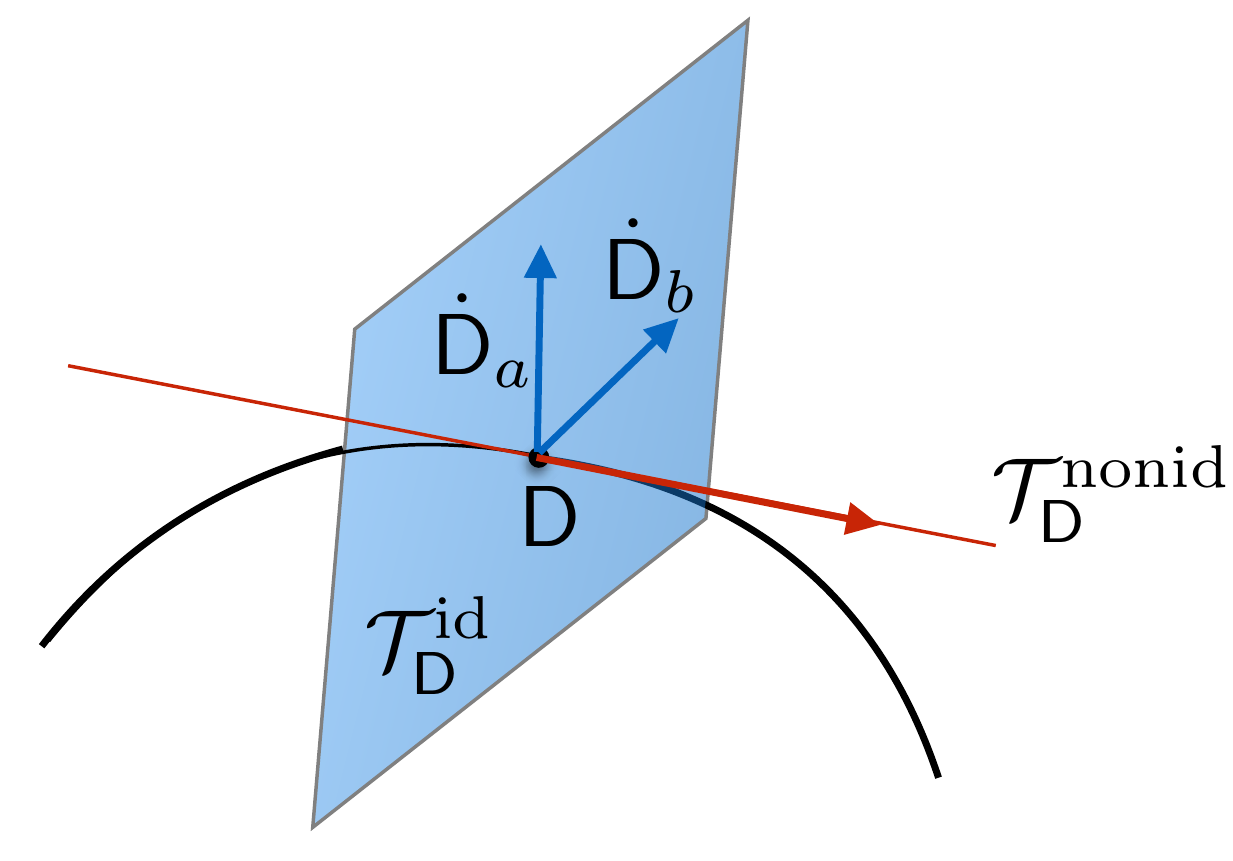}
\caption{Left panel: the space of ergodic dynamical parameters $\mathcal{D}^{\rm erg}$ as principle $G$-bundle over the base manifold $\mathcal{P}$ of identifiable parameters. 
Equivalence classes (red lines) of dynamical parameters with identical outputs. 
 Right panel: the tangent space at the point $\mathsf{D}$ decomposes as direct sum of the 
 tangent space $\mathcal{T}^{\rm nonid}_\mathsf{D}$ to the orbit of the group action, and the space 
 $\mathcal{T}^{\rm id}_\mathsf{D}$ of ``identifiable directions" defined by the identity $\mathcal E^0_{\mathsf{D}}(\dm)=0$. The Markov covariance defines a complex structure and an inner product on $\mathcal{T}^{\rm id}_\mathsf{D}$, such that the QFI rate is $f_{a,b} =4 {\rm Re}(\dot{\mathsf{D}}_a ,\dot{\mathsf{D}}_a )_\mathsf{D}$.}
 \label{fig.identifiability.classes}
\end{figure}

Section \ref{sec.identifiability} discusses the identifiability problem in the stationary setting, see Figure \ref{fig.identifiability.classes}. 
We define an equivalence relation  between dynamical parameters for which the \emph{stationary output states} are identical for all times $t$. 
In Theorem \ref{equiv_thm} we show that two dynamical parameters $\mathsf{D}$ and $\mathsf{D}^\prime$ are equivalent \emph{if and only if} they are related by the ``gauge transformation" $H^\prime = W^*HW+  r\id$ and $ L^\prime_i = W^* L_i W$, where $W$ is unitary and $r$ is a real constant. From a differential geometry viewpoint, the space of \emph{identifiable parameters} is the quotient 
$\mathcal{P}:=\mathcal{D}^{erg}/G$, where $\mathcal{D}^{erg}$ is the manifold of dynamical parameters $\mathsf{D}$ with ergodic dynamics, and $G= PU(d)\times \mathbb{R}$ is the group of ``gauge transformations" whose orbits are the equivalence classes of parameters.
In particular, we show that $\mathcal{D}^{erg}$ is a principal $G$-bundle over the base manifold $\mathcal{P}$. 
The vertical bundle over $\mathcal{D}^{erg}$ consists of subspaces $\mathcal{T}^{\rm nonid}_{\mathsf{D}}$ of the tangent space 
$\mathcal{T}_{\mathsf{D}}$ at $\mathsf{D}$, corresponding to un-identifiable changes of parameters, i.e. infinitesimal changes induced by the action of the group 
$G$. Although in general there is no canonical decomposition of the tangent space into ``identifiable" and ``non-indentifiable" components (i.e.  $\mathcal{T}_{\mathsf D} = \mathcal{T}^{\rm nonid}_{\mathsf{D}} \oplus \mathcal{T}^{\rm id}_{\mathsf{D}}$), such a decomposition can be obtained from a principal connection, in a covariant way. A natural choice of connection is provided by the information geometry, as discussed below. This approach to system identification often appears in the classical setting, and the advantage is that one gains insight in the geometric structure of the parameter manifold, beyond the direct computation of the Fisher information. For instance, the connection can be useful  for developing recursive estimation algorithms based on geodesics of the manifold \cite{hanzon}. For the standard theory of connections on principal bundles, see e.g. \cite{isham}.

In Section \ref{sec.info.geom} we derive the information geometric structure of the statistical estimation problem at hand. Before discussing the statistical aspects, we describe the basic elements of a theory of ``output fluctuations" which is essential for information geometry, but has an interest in its own and deserves to be further investigated. For each  $(k+1)$-tuple of system operators ${\bf X}=(X^0, X^1, \dots, X^k)\in M(\mathbb{C}^d)^{k+1} $ we define the associated 
\emph{fluctuation operator} $\mathbb F_t({\bf X})$ given by the quantum stochastic integral 
$$
\mathbb F_t({\bf X})= \frac {1}{\sqrt t}\int_0^t 
\left(i\sum_{i=1}^{k} j_s(X^i) dA^*_{i}(s)+j_s\circ \mathcal C_{\mathsf D}(X^0) ds\right),\qquad\mathcal C_{\mathsf D}(X):= X-{\rm tr}[\rho^{ss}_{\mathsf D}X]\id,
$$
where $j_s(X):= U_\mathsf{D}^*(s)X U_\mathsf{D}(s)$ is the time-evolved operator $X$. The covariance of $\mathbb F_t({\bf X})$ converges in the limit of large times,  and defines a positive (but degenerate) inner product  on $ M(\mathbb{C}^d)^{k+1}$ (cf. Proposition \ref{continuous_mc} for the explicit formula)
$$
({\bf X}, {\bf Y})_\mathsf{D} = \lim_{t\to \infty}  \langle \mathbb F_t({\bf X})^*\mathbb F_t({\bf Y})\rangle.
$$ 
Furthermore, in Propositions \ref{continuous_mc} and \ref{projection} we construct a linear map  
$R_\mathsf{D}:M(\mathbb{C}^d)^{k+1} \to M(\mathbb{C}^d)^{k+1}$ such that $R_\mathsf{D}$ is a projection onto the subspace of operators of the form $(0, Y^1,\dots, Y^k)\in M(\mathbb{C}^d)^{k+1}$, and the kernel of $R_\mathsf{D}$ is the subspace of degenerate vectors of the inner product. With this definition, 
the inner product take the following simple form 
$$
({\bf X}, {\bf Y})_\mathsf{D}  = \sum_{i=1}^k {\rm tr}\left[ \rho_{ss} R_\mathsf{D} ({\bf X})^{i*} R_\mathsf{D} ({\bf Y})^{i}   \right].
$$
We denote by $\dot{\mathsf{D}} = (\dot{H},  \dot{L}^1,\ldots , \dot{L}^k)$ an element of the tangent space 
$\mathcal{T}_\mathsf{D}$. The \emph{real} linear map ${\bf X}_{\mathsf D}$ defined below plays an important role in connecting fluctuation operators with the information geometry
\begin{eqnarray*}
{\bf X}_{\mathsf D}: \mathcal{T}_{\mathsf{D}} &\to& M(\mathbb{C}^d)^{k+1} \\
\dot{\mathsf{D}} %= (\dot{H},  \dot{L}^1,\ldots , \dot{L}^k)
&\mapsto&  (\mathcal{E}_{\mathsf D}(\dot{\mathsf{D}}),  \dot{L}^1,\ldots , \dot{L}^k)
\end{eqnarray*} 
where $\mathcal{E}_{\mathsf{D}}$ is the map
\begin{eqnarray*}
\mathcal{E}_{\mathsf{D}}: \mathcal{T}_\mathsf{D}&\to& M(\mathbb{C}^d)\\
%\mathcal E^0_{\mathsf{D}}(\dm)&:=& \mathcal{E}_{\mathsf{D}}(\dm)- {\rm Tr}(\rho_{ss} \mathcal{E}_{\mathsf{D}}(\dm)) \id , \qquad
%\mathcal{E}_{\mathsf{D}}(\dm) := 
%\dot H+\mathrm{Im} \sum_{i=1}^k \dot L^{i*}L^i  .
\mathcal{E}_{\mathsf{D}}  :\dm& \mapsto &
\dot H+\mathrm{Im} \sum_{i=1}^k \dot L^{i*}L^i .
\end{eqnarray*}
Here the second term in $\mathcal{E}_{\mathsf{D}}$ is due to quantum Ito calculus, hence in some sense represents the effects of the stochastic output on the information geometry. Using the map ${\bf X}_\mathsf{D}$ we define a \emph{real} inner product on the tangent space $\mathcal{T}_{\mathsf{D}}$
$$
(\dot{\mathsf{D}}, \dot{\mathsf{D}}^\prime ) \longmapsto 
{\rm Re} ( {\bf X}_{\mathsf{D}} (\dot{\mathsf{D}}),{\bf X}_{\mathsf{D}} ( \dot{\mathsf{D}}^\prime) )_\mathsf{D}. 
$$
Moreover, since ${\bf X}_\mathsf{D}$ is injective, we can use it to define a projection 
$
%P:\mathcal T\to \mathcal T, \quad 
P_{\mathsf D}= {\bf X}_{\mathsf D}^{-1}\circ R_{\mathsf D}\circ {\bf X}_{\mathsf D}
$
acting on $\mathcal{T}_{\mathsf{D}}$. Its kernel is the vertical space $\mathcal{T}_{\mathsf{D}}^{\rm nonid}$ whose vectors correspond to infinitesimal ``gauge transformations", and are the degenerate vectors of the inner product. The range of $P_{\mathsf D}$ consists of tangent vectors satisfying the condition $\mathcal E_{\mathsf{D}}(\dm)=0$.
In particular, since $P_{\mathsf D}$  is a projection, the tangent space can be decomposed into ``identifiable'' and ``non-identifiable'' directions (see right panel of Figure \ref{fig.identifiability.classes}) 
$$
\mathcal{T}_\mathsf{D} ={\rm ran} P_\mathsf{D}\oplus {\rm ker} P_\mathsf{D}= 
\mathcal{T}_\mathsf{D}^{\rm id}\oplus \mathcal{T}_\mathsf{D}^{\rm nonid}
$$
whose statistical interpretation is discussed below. This split has also an interesting differential geometric interpretation: the above tangent space decomposition, and inner product are covariant with respect to the action of the group $G$ and define a connection on the resulting principal $G$-bundle, with the associated Lie algebra valued one-form
$$
\omega_{\mathsf D}: \mathcal T_{\mathsf D}\to \mathfrak g, \quad \omega_{\mathsf D}(\dot{\mathsf D})=(\mathbb W_{\mathsf D}^{-1}\circ \mathcal E_{\mathsf D}^0(\dot{\mathsf D}), {\rm tr}[\rho_{ss}\mathcal E_{\mathsf D}(\dot{\mathsf D})])
$$
explicitly depending on the map $\mathcal E_{\mathsf D}(\dot{\mathsf D})$ containing the essential quantum Ito correction. Moreover, the strictly positive inner product on $\mathcal{T}_\mathsf{D}^{\rm id}$ induces a strictly positive  inner product on the tangent space $\mathcal{T}_{[\mathsf{D}]}$ to the point $[\mathsf{D}]$ in the base space $\mathcal{P}= \mathcal{D}^{erg}/G$ of identifiable parameters. As we will see below, this \emph{Riemannian metric} is closely connected to the quantum Fisher information rate of the output state, so we will refer to it as the \emph{information geometry} of the open quantum system, in analogy to the classical case \cite{Amari&Nagaoka}.

Let us consider now the problem of estimating the dynamical parameter $\mathsf{D}$. Although the key constructions could be introduced in a ``coordinate free" way, in order to emphasise the statistical aspects we choose to work with a given (but arbitrary) parametrisation 
$
\theta\mapsto \mathsf{D}_\theta 
$
of $\mathcal{D}^{erg}$, where $\theta$ is an unknown parameter belonging to an open subset of $\mathbb{R}^{m}$, with $m:={\rm dim}(\mathcal{D}^{erg})$. 
%Similar results can be obtained for lower dimensional models by considering them as sub-models of a ``full-parameter'' one. 
At a given point $\mathsf{D}=\mathsf{D}_\theta\in \mathcal{D}^{erg}$, we define the tangent vectors 
$ \dot{\mathsf{D}}_{a} := \partial \mathsf{D}/\partial \theta_a= (\dot{H}_{a}, \dot{L}^1_{a}, \dots, \dot{L}^k_{a}) $ describing infinitesimal changes of the coordinate $\theta_a$, for $a=1,\dots, m$; these vectors form a basis of the tangent space $\mathcal{T}_\mathsf{D}$.

We consider now the $m \times m$ \emph{quantum Fisher information} (QFI) matrix $F^\theta(t)$ associated to the system-output state $|\Psi^{\rm s+o}_{\mathsf{D}_\theta}(t)\rangle$. The QFI is proportional to the real part of the covariance matrix of (centred) ``generators" $G^{0}_{\theta,a}(t) $ of infinitesimal changes with respect to parameter component $\theta_a$ \cite{Braunstein&Caves}.  We show that the generator $G^{0}_{\theta,a}(t) $ (normalised by $t^{-1/2}$) can be expressed as a \emph{fluctuation operator} 
$\mathbb F_t({\bf X}_{\mathsf D} (\dot{\mathsf{D}}_{a}))$, using the map ${\bf X}_{\mathsf{D}}$ defined above. 
As consequence, the QFI grows linearly in time, and the  \emph{QFI rate} per time unit  
$f^\theta= \lim_{t\to \infty} F^\theta(t)/t$ can be expressed in terms of the Markov covariance as
\begin{eqnarray*}
f^\theta_{a,b}&=& 4 {\rm Re} ( {\bf X}_{\mathsf{D}} (\dot{\mathsf{D}}_a),{\bf X}_{\mathsf{D}}( \dot{\mathsf{D}}_b) )_{\mathsf{D}}
= 4 {\rm Re} ( R_\mathsf{D}{\bf X}_{\mathsf{D}} (\dot{\mathsf{D}}_a), R_\mathsf{D}{\bf X}_{\mathsf{D}}( \dot{\mathsf{D}}_b) )_{\mathsf{D}} \\
&=& 
4\sum_{i=1}^k {\rm Re} \, \mathrm{tr}
\left[\rho_{ss}^{\mathsf{D}}  
\left(\dot{L}^i_{a} - i[L^i, \mathbb{W}_\mathsf{D}^{-1}\circ \mathcal{E}^0_{\mathsf D}(\dot{\mathsf{D}}_{a})  ]\right)^*\cdot
\left(\dot{L}^i_{b} - i[L^i , \mathbb{W}_\mathsf{D}^{-1}\circ\mathcal{E}^0_{\mathsf D}(\dot{\mathsf{D}}_{b})  ]\right) \right].
\end{eqnarray*}
where  $\mathbb{W}_\mathsf{D}$ is the Lindblad operator at $\mathsf{D}$. In particular,  the Fisher information rate associated to directions in vertical bundle $\mathcal{T}_{\mathsf{D}}^{\rm nonid}$ (gauge transformations) is equal to zero as expected from the invariance of the output state. This follows from the fact that ${\bf X}_\mathsf{D}$ maps $\mathcal{T}_{\mathsf{D}}^{\rm nonid}$ into ${\rm ker}R_{D}$. 

%We show that the QFI scales linearly with time, so the QFI rate is given by 
%$$
%f^\theta_{a,b} := \lim_{t\to\infty} \frac{F^\theta_{a,b}(t)}{t} =  
%\lim_{t\to\infty} \frac{1}{t} 4
%{\rm Re} \langle \varphi \otimes \Omega | G^{0*}_{\theta,a}(t) G^0_{\theta,b}(t) | \varphi \otimes \Omega\rangle .
%$$

Above we saw that the \emph{real part} of the Markov covariance $(\cdot , \cdot)_\mathsf{D}$ defines a positive definite inner product on the \emph{real} space $\mathcal{T}^{\rm id}_\mathsf{D}$. In fact, $\mathcal{T}^{\rm id}_\mathsf{D}$ can be made into a complex space by introducing the \emph{complex structure} 
\begin{eqnarray}
\mathcal{J}_{\mathsf D}:\mathcal{T}^{id}_{\mathsf D} & \to & \mathcal{T}^{id}_{\mathsf D}\nonumber\\
\mathcal{J}_{\mathsf D}:  (\dot H, \dot L^1,\ldots,\dot L^k) &\mapsto & \left(\sum_{i=1}^k {\rm Re} \dot L^{i*}L^i\, ,\, i\dot L^1,\ldots,i\dot L^k\right).
\end{eqnarray} 
With this definition the map ${\bf X}_\mathsf{D}$ becomes an isomorphism of complex spaces and 
$(\cdot , \cdot)_\mathsf{D}$ defines a complex inner product on $(\mathcal{T}^{\rm id}_\mathsf{D},\mathcal{J}_{\mathsf D})$. Using the imaginary part $\sigma_D$ of the inner product, we define the canonical commutation relations (CCR) algebra $CCR(\mathcal{T}_{\mathsf{D}}^{\rm id}, \sigma^{\mathsf{D}})$ generated by Weyl operators with commutation relations 
$$
W(\dm)W(\dm')=e^{i\sigma^{\mathsf D}(\dm,\dm')}W(\dm+\dm'), \quad W(-\dm)=W(\dm)^*, \quad
\dot{\mathsf D}, \dot{\mathsf D}' \in \mathcal{T}^{\rm id}_{\mathsf D}.
$$
Following a standard construction we define the Fock representation and the coherent states 
$|\dot{\mathsf{D}} \rangle :=W(\dm)|0\rangle$, where $|0\rangle$ is the vacuum state 
$\langle 0|W(\dm)|0\rangle = \exp(- (\dm, \dm)_\mathsf{D}/2 )$
This model will be interpreted below as limit of the output state model for large times.

Section \ref{sec.example} details the above constructions in the case of special one dimensional models, 
and for a general multidimensional model for a two dimensional system.

\begin{figure}[h]
\includegraphics[width=12cm]{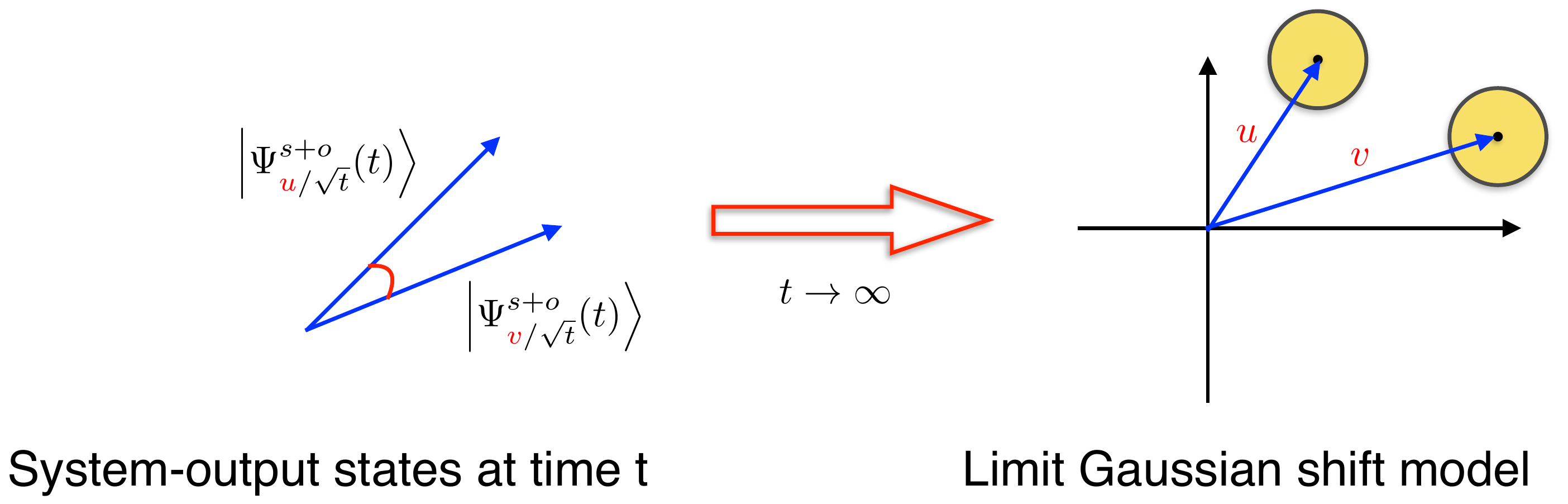}
\caption{Local asymptotic normality as weak convergence to the Gaussian limit. 
The inner products of system-output states with local parameter $u,v$ converge (uniformly) to the inner products of the corresponding coherent state $|u\rangle$ and $|v\rangle$, in the limit of large times.}\label{fig.LAN}
\end{figure}

In section \ref{sec.lan} we study the asymptotic statistical structure of the output state. The main result is the \emph{local asymptotic normality} (LAN) Theorem \ref{th.lan} which shows that both the system-output state, and the stationary output state can be approximated by coherent states of the CCR algebra 
$CCR(\mathcal{T}_{\mathsf{D}}^{\rm id}, \sigma_{\mathsf{D}})$. Below we give a brief description of the result and its interpretation.

%More precisely, let us consider a parametrisation $u\mapsto [\mathsf{D}]_u$ of an open neigbourhood of a point 
%$[D_0]= [D]_{u=0}$ of the space of identifiable parameters $\mathcal{P} = \mathcal{D}^{erg}/G$. 
Let us consider a parametrisation $u\mapsto [\mathsf{D}]_u$ of (an open subset of) the space of identifiable parameters $\mathcal{P} = \mathcal{D}^{erg}/G$, such that the origin $u=0$ corresponds to a given parameter of interest $[\mathsf{D}_0] = [\mathsf{D}]_{u=0}$. We define the time-indexed family of ``local" statistical models 
$$
%\mathcal{Q}_t:= \left\{ \left|\Psi^{\rm s+o}_{u/\sqrt{t}}(t)\right\rangle :  u  \in\mathcal{O}\subset \mathbb{R}^{\delta^{id}}   \right\},\quad
\tilde{\mathcal{Q}}_t:= \left\{ \rho^{\rm out}_{u/\sqrt{t}}(t) :  u\in \mathcal{O} \subset \mathbb{R}^{{\rm dim}(\mathcal{P})} 
\right\}
$$
which consist of the output state for unknown parameter values $u/\sqrt{t}$ in a shrinking ball of size scaling as the statistical uncertainty. The local model $\tilde{\mathcal{Q}}_t$ captures the asymptotic properties of the quantum output state for parameters in a neighbourhood  of $[\mathsf{D}_0]$, and it can be justified operationally by means of adaptive procedures whereby a ``small part" of the output can be used to localise the parameter, while the ``remaining part" can be used for estimating the local parameter $u$ (cf. \cite{GutaKahn} for a similar argument in the state estimation setup). 

To define the system-output statistical model let us consider  a 
\emph{horisontal section} $s: \mathcal{P} \to \mathcal{D}^{erg}$ of the principal bundle, i.e. the tangent space to 
$s (\mathcal{P})$ at $\mathsf{D}$ is the horisontal space $\mathcal{T}_{\mathsf{D}}^{id}$. Let 
$$
\mathcal{Q}_t:= \left\{ \left|\Psi^{\rm s+o}_{u/\sqrt{t}}(t)\right\rangle :  u  \in\mathcal{O}\subset \mathbb{R}^{{\rm dim}(\mathcal{P})}  \right\},
$$
be the quantum statistical model  where $\left|\Psi^{\rm s+o}_{u/\sqrt{t}}(t)\right\rangle $ is the joint system-output state at time $t$ at the dynamical parameter $\mathsf{D}_u = s([D]_u)$. The reasons for using a horisontal section in defining the model are as follows. While the stationary output state depends only on the identifiable parameters in $\mathcal{P}$, the system-output state is also sensitive to the location of the parameter within an orbit. 
It turns out that the asymptotic properties can be captured most transparently by choosing a horisontal section which sets certain unphysical phase factors to zero and allows us to understand the model directly in terms of the geometric properties of the vector state,  as explained below. 

Finally, we define the Gaussian model
$$
\mathcal{G}:= \left\{  \rho_u := |u\rangle \langle u | :   u\in \mathcal{O}\subset  \mathbb{R}^{{\rm dim}(\mathcal{P})}  \right\} 
$$
where $|u\rangle =W( \sum_a u_a \dot{\mathsf{D}}_a )|0\rangle$ is the coherent state of the CCR algebra 
$CCR(\mathcal{T}^{id}_{\mathsf D_0}, \sigma^{\mathsf D_0})$. By construction, the QFI of this model is equal  
to the QFI rate $f$ of the output state at $[\mathsf{D}_0]$, which also the QFI of the output state with respect to the \emph{local} parameter $u$, rather than the ``true" parameter $u/\sqrt{t}$.

The first version of LAN states that $\mathcal{Q}_t$ converges (weakly) to $\mathcal{G}$ for large times in the sense of convergence of the inner products (uniformly in $u,v\in \mathcal{O}$), as illustrated in Figure \ref{fig.LAN}
$$
\lim_{n\to\infty} \left\langle \Psi^{\rm s+o}_{u/\sqrt{t}}(t)\right|\left.\Psi^{\rm s+o}_{v/\sqrt{t}}(t)\right\rangle =  \langle u| v \rangle. 
$$
Since pure state models are fully characterised by the inner products, the convergence simply means that for large times the geometry of the system-output states is very similar to that of the coherent states. Although intuitive, this notion of convergence is not suitable for mixed states such as that of the output, and does not have a direct operational meaning.  In the second version of LAN we show that the output models $\tilde{\mathcal{Q}}_t$ converge strongly 
to $\mathcal{G}$ in the sense that there exist channels $T_t$ and $S_t$ such that 
\begin{eqnarray*}
&&\lim_{t\to\infty} \sup_{u \in \mathcal{O}} \left\| T_t \left( \rho^{\rm out}_{u/\sqrt{t}} (t) \right) - \rho_u \right\|_1 =0\\
&&\lim_{t\to\infty} \sup_{u \in \mathcal{O}} \left\| S_t \left(\rho_u\right) -  \rho^{\rm out}_{u/\sqrt{t}} (t) \right\|_1 =0.
\end{eqnarray*}

A concrete consequence of this ``convergence to Gaussianity" is that the QFI computed above is asymptotically ``achievable" in the sense that the estimation of dynamical parameters reduces to that of estimating a Gaussian displacement family with QFI equal to $f_{a,b}$. Similarly to LAN for ensembles of identical states \cite{GutaKahn}, the result implies that the optimal measurement is a linear one (i.e. of homodyne and heterodyne type) and the errors are normally distributed. However, since this paper concentrates on the structure of the quantum states, the measurement and estimation procedures are not discussed here.

%
%In this paper we apply the input-output formalism in its usual form, involving the standard construction of the singular coupling between system and a Bosonic bath, giving rise to a continuous-time family of quantum stochastic unitary operators. There is a also a second picture, where one follows the evolution ``jump by jump", in a way which is akin to the description of classical continuous-time Markov processes as the pairing of a discrete Markov chain and the corresponding waiting times. This second picture will be considered in another publication.

%%%%%%%%%%%%%%%%%%%%%%%%%%%%%%%
\section{Preliminaries on Quantum Markov processes}\label{sec.Markov.dynamics}
%%%%%%%%%%%%%%%%%%%%%%%%%%%%%

We begin by introducing notations and necessary background about the input-output formalism of continuous-time quantum Markov processes \cite{GardinerZoller}. The formalism describes the \emph{joint unitary evolution} of an open quantum system interacting with a Bosonic environment in the Markov regime,  cf. Figure \ref{fig.markov}. From this, one can derive the reduced (master) dynamics of the system, as well as the stochastic Schr\"{o}dinger equations for quantum trajectories, describing the stochastic evolution of the system conditional on observations produced by a continuous-time measurement on the environment.  However, in this paper we will be mainly interested in the \emph{output quantum state}, i.e. the state of the environment after the interaction with the system.

Throughout the paper we assume the system to be finite-dimensional, with Hilbert space $\mathcal H=\mathbb C^d$, and associated  algebra of observables $\mathcal A=M(\mathbb C^d)$. 
%Concerning the input-output system, we always assume that the input state is fixed, and concentrate on the generated output state from which the dynamical parameters are to be inferred. We consider both discrete and continuous processes, introduced separately below.
As we will detail below, the dynamics is specified by the \emph{system Hamiltonian} $H$, together with the 
\emph{quantum jump operators} $ L^{1} ,\dots,L^{k}$. We denote these collectively by
$$
\mathsf{D} =(H,L^1,\ldots,L^k) = (H, {\bf L} )
\in \manif:=M_{sa}(\mathbb C^d)\times M (\mathbb C^d)^k,
$$
and refer to each $\mathsf{D}\in \manif$ as a \emph{dynamical parameter}, and to $\manif$ as manifold (space) of dynamical parameters.

%%%%%%%%%%%%%%%%%%%%%%%%%%%%%%%%%%%%%%%%
%\subsection{Continuous-time Markov process associated to a dynamical parameter}
%%%%%%%%%%%%%%%%%%%%%%%%%%%%%%%%%%%%%%%%
%\begin{figure}[h]
%\includegraphics[width=10cm]{input_output}
%\caption{Illustration of a continuous time quantum Markov process. An input vacuum noise interacts with the system continually, via a the unitary process $U(t)$. The output noise field is in a correlated quantum state, carrying information about the dynamical parameters $(H,{\bf L})$.
%%with a joint state $|\Psi(n)\rangle$.
%}\label{fig.markov}
%\end{figure}
%
\subsection{Environment as quantum noise.} In the Markov approximation, the interaction can be described as the unitary scattering of incoming vacuum Bosonic fields, caused by the continuous interaction with the system.
%We will briefly introduce the basic elements of quantum stochastics needed for this work.
The environment is modelled by $k$ Bosonic channels  whose Hilbert space is the Fock space 
$$
\mathcal{F}:= \mathcal{F}(\mathfrak{h}_k) =  \mathbb{C}|\Omega\rangle \oplus \bigoplus_{l=1}^\infty  \mathfrak{h}_k^{\otimes_s l}
$$ 
where $\mathfrak{h}_k:= L^2(\mathbb{R}_+)\otimes \mathbb{C}^k$ is the one particle space of the $k$ channels, and 
$|\Omega\rangle$ is the vacuum vector. Similarly, we denote by $\mathcal{F}_{(a,b)}$ the Fock space over 
$L^2((a,b)) \otimes \mathbb{C}^k$. For each time $t$, the symmetric Fock space decomposes as tensor product 
$\mathcal{F}= \mathcal{F}_{(0,t)}\otimes \mathcal{F}_{(t,\infty)}$ between the space of excitations up to time $t$ (the past) and after time $t$ (the future). The fundamental environment degrees of freedom are the annihilation and creation operators of the $i$th channel 
$A_i(f):= A(|f\rangle\otimes |i\rangle)$ and respectively $A^*_i(g):= A_i(g)^*$, which are defined in a standard way \cite{Parthasarathy} for all $|f\rangle ,|g\rangle\in L^2(\mathbb{R}_+)$, and satisfy the commutation relations 
$$
[A^*_i(g) , A_j(f)] = {\rm Im} \langle f|g\rangle \delta_{i,j}\id.
$$
In particular, we will deal with the annihilation and creation \emph{processes} $A_i(t):= A_i(\chi_{[0,t]})$ and 
$A^*_i(t)$,  where $t\in \mathbb{R}_+$ represents time \cite{GardinerZoller,Parthasarathy}. 
%The processes satisfy the commutation relations $[A^*_i(t) , A_j(s)] = (t-s) \delta_{i,j}\id.$ 
These processes are the quantum analogue of the ``classical" Wiener process 
and can be used to define \emph{quantum stochastic integrals} of the form
$$
I(t) = \int_0^t  \sum_{i=1}^k \left[M^i(s) dA_i(s) + N^i(s) dA^*_i(s) \right]+ P(s) ds
$$
where $M^i(s), N^i(s), P(s)$ are \emph{time-adapted} operator valued integrands, i.e. they are of the form $X(s)\otimes \id_{[s,\infty)}$ with respect to the decomposition $\mathcal{F}= \mathcal{F}_{(0,s)}\otimes \mathcal{F}_{(s,\infty)}$. Quantum stochastic integrals can be formally multiplied, and the product $I_1(t) I_2(t)$ is a stochastic integral whose  increment is given by
\begin{equation}\label{equation.product.integrals}
d (I_1(t) I_2(t)) =  d  I_1(t) \cdot I_2(t) +      I_1(t) \cdot  d  I_2(t) +  d  I_1(t) \cdot d  I_2(t)  
\end{equation}
where the third terms is the \emph{Ito correction} which can computed by using the \emph{quantum Ito rule}  
\begin{equation}\label{itoformula}
dA_i(t) dA_j^*(t) =\delta_{i,j} dt
\end{equation}
while all other products are zero.

\subsection{Interaction as input-output scattering.}
We now introduce the coupling between system and the Bosonic environment, and the corresponding unitary evolution. Each dynamical parameter $\mathsf{D}= (H, \mathbf{L})$ determines a unique continuous family $U(t)$ of unitary operators (cocycles) describing evolution of the system and environment in the interaction picture with respect to the free evolution of the fields, the latter being given by the second quantisation of the right shift on $L^2(\mathbb{R})$. The unitaries are defined as the solution of the quantum stochastic differential equation (QSDE)
\begin{align}\label{QSDE}
dU(t) = \left(\sum_{i=1}^{k}(L^i \otimes dA_i^*(t)-L^{i*}\otimes dA_i(t))-iH_{\rm eff} \otimes \id_\mathcal{F} \, dt\right)U(t), 
\end{align}
with initial condition $U(0)=\id$. Here, $H_{\rm eff}$ is the effective Hamiltonian $H_{\rm eff}:=H-\frac{i}{2}\sum_{i=1}^{k} L^{i*}L^i$ which generates a semigroup $S(t)=e^{-it H_{\rm eff}}$ of contractions on the system's space, and describes the evolution of the system between consecutive quantum jumps. The imaginary part $\frac{i}{2}\sum_{i=1}^{k} L^{i*}L^i$ is the Ito correction which insures that $U(t)$ is unitary. For simplicity of notation, from now on we will omit the tensor product and simply write $L^i \otimes dA_i^*(t)$ as 
$L^i dA_i^*(t)$, an similarly for other integrands.

%We now define the family $V(t):\mathcal H\to \mathcal H\otimes \mathcal F$ of isometries mapping the system's state into the system-output state at a later time via $V(t)\varphi=U(t)(\varphi\otimes \Omega)$. 

If the system is initialised in state $|\varphi\rangle$ and the input fields are in the vacuum state $|\Omega\rangle$, then the state of the system together with the output after time $t$ is given by
\begin{equation}\label{eq.output}
|\Psi^{\rm s+o}(t)\rangle = U(t)|\varphi\otimes \Omega\rangle = V(t)|\varphi\rangle,
\end{equation}
where $V(t):\mathcal H\to \mathcal H\otimes \mathcal F$ is a family of isometries defined by the second equality. 
The \emph{output state} is the state of the scattered field modes after the interaction with the system, and is obtained by tracing out the system
$$
\rho^{\rm out}(t) ={\rm tr}_{\mathcal{H}}[|\Psi^{\rm s+o}(t)\rangle\langle \Psi^{\rm s+o}(t)|].
$$
%
%\begin{equation}\label{isometry}
%V(t)\psi=\sum_{n=1}^\infty \sum_{i_1,\ldots,i_n} \int_{0<t_1<\ldots <t_n<t} dt_1\cdots dt_n K_n^t(t_1,\ldots,t_n; i_1,\ldots,i_n)\psi\otimes |t_1,\ldots,t_n; i_1,\ldots, i_n\rangle,
%\end{equation}
%where
%$$K_n^t(t_1,\ldots,t_n; i_1,\ldots,i_n)=e^{-i(t-t_n)H_{\rm eff}} J(t_n-t_{n-1}; i_n)\cdots J(t_{2}-t_1; i_2)J(t_1;i_1),$$
%and the single jump operators is given by in terms of the waiting time $J(t_w,i)=L_i e^{-it_w H_{\rm eff}}$.
Let us denote by $j_t(X):=U(t)^*(X\otimes \id_{\mathcal F})U(t)$ the Heisenberg evolved system operator $X$. 
Using equation \eqref{QSDE}, we find that the operators satisfy the \emph{quantum Langevin equation}
\begin{equation}\label{langevin}
dj_t(X) =\sum_{i}\left(j_t([X,L^i])dA^*_{i}(t)+j_t([L^{i*},X])dA_{i}(t)\right)+j_t(\mathbb W(X)) dt.
\end{equation}
where 
$$
\mathbb W(\cdot)=-i(\cdot)H_{\rm eff}+iH_{\rm eff}^*(\cdot)+\sum_{i=1}^{k} L^{i*}(\cdot)L^i
$$
is called the \emph{Lindblad generator}. Its significance can understood by considering the \emph{reduced Heisenberg evolution} of the system $T_t: \mathcal{A}\to \mathcal{A}$ defined by taking the expectation over the environment 
\begin{equation}\label{channel}
T_t(X):= \langle \Omega | j_t(X) |\Omega\rangle=V(t)^*(X\otimes \id_{\mathcal F})V(t).
\end{equation}
From \eqref{langevin} we find
$
d T_t(X) =d \langle \Omega |  j_t(X) |\Omega\rangle = 
%V(t)^*(X\otimes \id_{\mathcal F})V(t)\nonumber\\
%&= \sum_{i_1,\ldots,i_n}\int_{0<t_1<\ldots <t_n<t} dt_1\cdots dt_n {\rm tr}[K^t_n(t_1,\ldots,t_n;i_1,\ldots,i_n)^*(\cdot)K^t_n(t_1,\ldots,t_n;i_1,\ldots,i_n)]\nonumber\\
\mathbb{W} (T_t (X))
$
which means that $T_t$ is a trace preserving completely positive semigroup with generator $\mathbb{W}$. The generator is said to be \emph{ergodic}, if it has a unique stationary state $\rho_{ss}$ (i.e. $\mathbb W_*(\rho_{ss})=0$) which has full rank. In this case \cite{Frigerio}
\begin{align}\label{limitchannel}
\lim_{t\rightarrow\infty} T_t&=\lim_{t\rightarrow\infty} \frac{1}{t} \int_0^t T_s ds = \mathrm{tr}[\rho_{ss}(\cdot)]\id.
\end{align}
Since $\rho_{ss}$ is a stationary state, the range of $\mathbb W$ is included in
$$\mathcal B_0=\{ X\mid \mathrm{tr}[\rho_{ss} X]=0\}.$$
By ergodicity, $\id$ is the only fixed point of $e^{t\mathbb W}$, and hence $\ker ( \mathbb W )$ is spanned by $\id\notin \mathcal B_0$. This implies that the range has dimension $d^2-1=\dim \mathcal B_0$, i.e. $\mathbb W$ is surjective onto $\mathcal B_0$, and the restriction of $\mathbb W$ onto $\mathcal B_0$ is injective. Hence $\mathbb W$ is invertible on $\mathcal B_0$, and we let $\mathbb W^{-1}:\mathcal B_0\to\mathcal B_0$ denote the inverse. Furthermore, the following limit exists:
\begin{equation}\label{inverselimit}
-\mathbb W^{-1}=\lim_{t\rightarrow\infty} \int_0^t T_s ds.
\end{equation}
%
%ULet us denote by $j_t(X):=U(t)^*(X\otimes \id_{\mathcal F})U(t)$ the Heisenberg evolved system operator $X$. 
%sing the equation \eqref{QSDE}, we find that the operators satisfy the \emph{quantum Langevin equation}
%\begin{equation}\label{langevin}
%dj_t(X) =\sum_{i}\left(j_t([X,L_i])dA_{i,t}^*+j_t([L_i^*,X])dA_{i,t}\right)+j_t(\mathbb W_0(X)) dt.
%\end{equation}
%%
%With this notation, the \emph{reduced system evolution} is obtained as
%\begin{equation}\label{channel}
%T_t(X) =\langle \Omega | j_t(X) |\Omega\rangle,
%\end{equation}
%where the ``inner product" is now understood in the usual sense as conditional expectation.
%
%
%The quantum Ito formula specifies how the product of two adapted processes is written as another adapted process; it reads as
%\begin{equation}\label{itoformula}
%d(X_t Y_t)=X_t dY_t+dX_tY_t+dX_tdY_t,
%\end{equation}
%where the last term is determined by expanding $dX_t$ and $dY_t$ in terms of fundamental processes, and using the \emph{Ito multiplication rules}. In our case, only processes $dA_{i,t}$, $dA_{i,t}^*$, and $dt$ appear; then the only nonvanishing combination is given by
%$$dA_{i,t}dA_{j,t}^*=\delta_{ij} dt.$$

%%%%%%%%%%%%%%%%%%%%%%%%%
\section{Identifiability of continuous quantum Markov processes}\label{sec.identifiability}
%%%%%%%%%%%%%%%%%%%%%%%%%%%
This section deals with the problem of characterising the equivalence classes of Markov dynamics with identical stationary output states. We restrict ourselves to ergodic Markov processes, although similar results are expected to hold more generally.  Similar results have been obtained in \cite{GutaKiukas} for \emph{discrete time} quantum Markov processes. 
%The main result of this section deals with the characterisation of  the following equivalence classes:

\newcommand{\manifprim}{\mathcal D^{\rm erg}}

Let $\manifprim$ denote the open subset of $\mathcal D$ consisting of dynamical parameters for which the associated Markov process is ergodic; this will be the relevant paramter set for subsequent considerations.

\begin{definition} Two dynamical parameters $\mathsf{D},\mathsf{D}'\in \manifprim$ are
\emph{output-equivalent} if the stationary output states \eqref{eq.output} of the associated continuous-time Markov processes are identical. We denote the set of associated equivalence classes by
$
\mathcal{P}:= \{[\mathsf{D}] ~:~ \mathsf{D}\in \mathcal{D}^{\rm erg}\}.
$
\end{definition}
Of course, the same equivalence can be formulated for arbitrary parameters in $\mathcal D$. However,  it turns out that 
when restricted to  $\manifprim$, the equivalence classes have a simple characterisation in terms of the following transformations:
\begin{itemize}
\item[(PM)] Phase conjugation on the Hamiltonian: 
$$
(H,L^1,\ldots,L^k)\mapsto (H+r \id ,L^1,\ldots,L^k), \quad (r\in \mathbb R).
$$
%\item[(PM2)] \textbf{Phase conjugation on the jump operators}: $$(H,L^1,\ldots , L^k)\mapsto (H,e^{ir}L^1,\ldots,e^{ir}L^k)$$
\item[(UC)] Conjugation by system unitary $W$:
$$
(H,L^1,\ldots,L^k)\mapsto (W^*HW,W^*L^1W,\ldots,W^*L^kW).
$$
\end{itemize}
%The transformation (PM2) is not relevant for the equivalence, but will be studied later as an example of an identifiable phase parameter. 

Indeed, it is easy to verify that (PM) and (UC) do not change the output of the associated continuous Markov process. The following Theorem shows that the converse is also true.  The details of the proof can be found in Appendix \ref{proof.identifiability.ct}.

\begin{theorem}\label{equiv_thm}
Let $\mathsf{D},\mathsf{D}'\in \manifprim$. Then $\mathsf{D}$ and $\mathsf{D}'$ are output-equivalent if and only if they can be obtained from each other via the transformations (UC) and (PM).
\end{theorem}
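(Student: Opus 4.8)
The plan is to dispatch the "if" direction by direct substitution — for (PM) the extra $r\id$ contributes only a global phase $e^{-irt}$ to $U(t)$, which cancels in $\rho^{\rm out}(t)=\mathrm{tr}_{\mathcal H}[\,\cdot\,]$, while for (UC) the conjugating $W$ acts on the system factor alone and is removed by the partial trace — and to concentrate on the converse. For the converse the first reduction is to note that, the input being the vacuum, the output state $\rho^{\rm out}(t)$ on the Fock space $\mathcal{F}_{(0,t)}$ is completely determined by its normally ordered moments, i.e. by the multi-time field correlation functions $\langle \Psi^{\mathrm{s+o}}(t)|\,(\id\otimes M)\,|\Psi^{\mathrm{s+o}}(t)\rangle$ with $M$ a product of creation and annihilation operators of the channels. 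Thus output-equivalence of $\mathsf{D}$ and $\mathsf{D}'$ is equivalent to the coincidence, for all $t$ and all choices of channels and times, of these correlation functions.

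Second, I would express these correlations in purely system-theoretic terms. Writing the output processes through the input–output relation and propagating with the quantum Langevin equation \eqref{langevin}, the quantum regression/Ito calculus turns every such correlation function into a time-ordered ``Markov moment'' built from the generating triple $(\rho_{ss},\{L^i\},\mathbb{W})$, of the form $\mathrm{tr}\big[\rho_{ss}\,L^{i_1*}\,e^{\tau_1\mathbb{W}}\big(L^{i_2*}\,e^{\tau_2\mathbb{W}}(\cdots)\,L^{j_2}\big)L^{j_1}\big]$, and analogously for the pieces involving creation operators. Letting the time separations range over all positive values (which is legitimate since equivalence is assumed for all $t$), and using \eqref{inverselimit} to access $\mathbb{W}^{-1}$ on $\mathcal B_0$, these moments give access to the whole family of functionals $X\mapsto \mathrm{tr}[\rho_{ss}\,(\text{word in }L^i,L^{i*},e^{\tau\mathbb{W}})]$. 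Output-equivalence therefore says exactly that these Markov moments coincide for $(\rho_{ss},\{L^i\},\mathbb{W})$ and $(\rho_{ss}',\{L'^i\},\mathbb{W}')$.

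Third comes the reconstruction, which is the crux. The coincidence of all Markov moments, together with faithfulness of $\mathrm{tr}[\rho_{ss}\,\cdot\,]$ (recall $\rho_{ss}$ has full rank) and the minimality forced by ergodicity, should allow one to define a unital $*$-isomorphism $\Phi$ of $M(\mathbb{C}^d)$ sending each word in the $L^i,L^{i*}$ to the corresponding word in the $L'^i,L'^{i*}$, intertwining the semigroups, $\mathbb{W}'=\Phi\circ\mathbb{W}\circ\Phi^{-1}$, and matching $\rho_{ss}$ with $\rho_{ss}'$. By Skolem–Noether every automorphism of $M(\mathbb{C}^d)$ is inner, so $\Phi=\mathrm{Ad}_{W^*}$ for a unitary $W$, unique up to a phase; this gives $L'^i=W^*L^iW$ (so $W\in PU(d)$), which is (UC). Comparing generators then pins down the Hamiltonian: the jump parts $\sum_i L'^{i*}(\cdot)L'^i$ already match under conjugation by $W$, so the intertwining of $\mathbb{W}$ forces the drift parts to match, i.e. $-iX H_{\rm eff}'+iH_{\rm eff}'^{*}X=-iX\tilde H+i\tilde H^{*}X$ for all $X$, where $\tilde H:=W^*H_{\rm eff}W$. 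Setting $G:=H_{\rm eff}'-\tilde H$ this reads $XG=G^*X$ for all $X$; with $X=\id$ it gives $G=G^*$, whence $G$ is central and Hermitian, $G=r\,\id$ with $r\in\mathbb{R}$. Taking Hermitian parts and using $H_{\rm eff}=H-\tfrac{i}{2}\sum_i L^{i*}L^i$ (whose anti-Hermitian part is already matched) yields $H'=W^*HW+r\id$, which is (PM).

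I expect the main obstacle to be precisely this reconstruction step, i.e. proving that the Markov moments determine the generating triple up to gauge. This is the continuous-time counterpart of the uniqueness of the minimal purely generated finitely correlated representation, and the delicate points are: (i) showing that the moment data defines a well-defined and multiplicative map $\Phi$ on all of $M(\mathbb{C}^d)$, which requires controlling the dimension of the underlying GNS/dilation space and invoking ergodicity to exclude a strictly smaller or reducible representation; and (ii) ensuring $\Phi$ is $*$-preserving and trace-compatible, so that it is a genuine automorphism rather than merely a linear intertwiner. I would model this on the discrete-time argument of \cite{GutaKiukas}, replacing the Kraus tuple and transfer operator by $\{L^i\}$ and the Lindblad semigroup $e^{\tau\mathbb{W}}$, and using the invertibility of $\mathbb{W}$ on $\mathcal B_0$ established above to handle the continuous time labels.
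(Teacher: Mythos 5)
Your ``if'' direction and the final bookkeeping that extracts $H' = W^*HW + r\id$ from the matching of generators are fine (the latter is essentially identical to the closing computation in the paper's Lemma \ref{firstlemma}). But your converse follows a genuinely different route from the paper's, and the step you yourself flag as the crux --- reconstructing a unital $*$-isomorphism $\Phi$ of $M(\mathbb C^d)$ from the equality of all Markov moments --- is precisely where the proposal stops being a proof. The difficulty is not cosmetic: the moments you can read off the output are only of the constrained, time-ordered form $\mathrm{tr}[\rho_{ss}\,L^{i_1*}e^{\tau_1\mathbb W}(\cdots)L^{j_1}]$, with creation labels always acting from the left and annihilation labels from the right, so the correspondence ``word $\mapsto$ word$'$'' is a priori defined only on this cone of expressions; showing that it respects linear relations, extends multiplicatively to all of $M(\mathbb C^d)$, and is $*$-preserving and trace-compatible is the full content of a continuous-time analogue of the uniqueness theorem for minimal purely generated finitely correlated states, which you would have to prove rather than cite by analogy. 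Irreducibility from ergodicity (i.e.\ $\{L^i,L^{i*},H\}'=\mathbb C\id$) is available and necessary here, but it does not by itself close this step.

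The paper's argument is structured so as to avoid the reconstruction problem altogether. It forms the direct-sum isometry $V_1(t)\oplus V_2(t)$ and studies the off-diagonal block $T_{12,t}=e^{t\mathbb W_{12}}$ of the associated unital completely positive semigroup, whose eigenvalues consequently have nonpositive real part. Lemma \ref{secondlemma} shows that if $T_{12,t}\to 0$ then the two output states become asymptotically orthogonal, so equality of the outputs forces $\mathbb W_{12}$ to have a purely imaginary eigenvalue $ir$ with eigenvector $F$. Lemma \ref{firstlemma} then uses the operator inequality $T_{22,t}(F^*F)\geq T_{21,t}(F^*)T_{12,t}(F)=F^*F$ together with ergodicity to conclude that $F^*F$ and $FF^*$ are multiples of the identity, i.e.\ $F$ is proportional to the gauge unitary; a Cauchy--Schwarz saturation argument then yields $L_2^i=U^*L_1^iU$ and the Hamiltonian relation. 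Note that this route uses only the single scalar consequence that the two outputs do \emph{not} become asymptotically orthogonal --- far weaker than equality of all correlation functions --- and it is exactly this spectral/dilation device that lets the paper bypass the moment-reconstruction step. If you wish to keep your approach you must supply the missing uniqueness theorem in full; otherwise I would restructure the converse along the paper's lines.
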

The interpretation of the result is that parameters along the equivalence classes described by the transformations (UC) and (PM) are not identifiable, while the identifiable parameters are ``transversal" to these classes, as illustrated in Figure \ref{fig.identifiability.classes}. It is now convenient to formulate the equivalence classes in terms of an action of the appropriate Lie group $G:= PU(d)\times \mathbb{R}$. For that we regard $\mathcal D^{\rm erg}$ as an open submanifold of $\mathcal D$, and use transformations (PM) and (UC) to set up the action:
\begin{align}\label{gaction}
&G\times \mathcal{D}^{\rm erg} \to \mathcal{D}^{\rm erg}\nonumber\\
&(g,\mathsf D) \mapsto g\mathsf D :=( W^*HW , W^* L^1 W, \ldots ,  W^* L^k W ) + a(\id, 0,\ldots ,0), \\
& g = (W,a)\in PU(d) \times \mathbb{R}, \quad \mathsf D = ( H , L^1, \ldots ,  L^k  )\in \mathcal D^{\rm erg}.\nonumber
\end{align}
Here $PU(d)=U(d)/U(1)$ is the projective unitary group, equipped with its unique Lie group structure, and the above action is defined as the natural lift of the corresponding one for $U(d)$. The reason to use $PU(d)$ instead of $U(d)$ will become clear from the proof of the Lemma below.

The above theorem implies that the equivalence class $ [\mathsf D]\in \mathcal{P}$ is the orbit of $D\in \mathcal{D}^{\rm erg}$ under the action of $G$, such that $\mathcal{P}$ can be identified with the quotient $\mathcal D^{\rm erg}/G$. The following lemma which relies on the ergodicity assumption, is essential for understanding the structure of the quotient, as we will see below.
%output equivalence can be formulated in terms of the group action: $$\pi(\mathsf D) = [\mathsf D], \quad \text{ for all } \mathsf D\in \mathcal D^{\rm erg},$$ where $\pi$ is the canonical projection onto the quotient $\mathcal D^{\rm erg}/G$, and the equivalence class $[\mathsf D]$ refers to the \emph{output equivalence}. Concerning the group action, we then have the following regularity result, where ergodicity is essential.
In order to avoid confusion with the output equivalence, we identify $W\in PU(d)$ with a representative unitary operator without explicit indication. 

%It now follows from the above theorem that the output equivalence can be formulated in terms of the group action: $$\pi(\mathsf D) = [\mathsf D], \quad \text{ for all } \mathsf D\in \mathcal D^{\rm erg},$$ where $\pi$ is the canonical projection onto the quotient $\mathcal D^{\rm erg}/G$, and the equivalence class $[\mathsf D]$ refers to the \emph{output equivalence}. Concerning the group action, we then have the following regularity result, where ergodicity is essential.

\begin{lemma}
The Lie group action $G\times \mathcal D^{\rm erg}\to \mathcal D^{\rm erg}$ is smooth, proper, and free.
\end{lemma}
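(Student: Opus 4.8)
The plan is to verify the three properties separately, the conceptual crux being freeness, where ergodicity enters and explains the replacement of $U(d)$ by $PU(d)$. Throughout I use that the action first conjugates $\mathsf D=(H,\mathbf L)$ by (a representative of) $W$ and then shifts $H\mapsto H+a\id$.

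First I would record well-definedness and smoothness. The shift $H\mapsto H+a\id$ drops out of both the commutator and the dissipator, so the Lindblad generator transforms as $\mathbb W\mapsto \mathrm{Ad}_W\circ\mathbb W\circ\mathrm{Ad}_{W^*}$, whose unique faithful stationary state is $W^*\rho_{ss}W$; hence ergodicity is preserved and the action indeed maps $\manifprim$ into $\manifprim$. For smoothness I would lift to $U(d)$: the conjugation-plus-shift map $U(d)\times\manifprim\to\manifprim$ is polynomial in the entries of $W,W^*$ and of $\mathsf D$, hence smooth, and it is constant on the $U(1)$-cosets of the first factor since a global phase cancels under conjugation. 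As $U(d)\to PU(d)$ is a surjective submersion, the universal property of the quotient (or, equivalently, a local smooth section of $U(d)\to PU(d)$) gives smoothness of the descended action of $G=PU(d)\times\mathbb R$.

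Next, freeness. Suppose $(W,a)$ fixes some $\mathsf D\in\manifprim$, i.e.\ $W^*L^iW=L^i$ for every $i$ and $W^*HW=H-a\id$. Taking the trace of the last identity and using cyclicity gives $\mathrm{tr}[H]=\mathrm{tr}[H]-a\,d$, so $a=0$; thus $W$ commutes with $H$, and the relations $[W,L^i]=0$ together with unitarity yield $[W,L^{i*}]=0$ as well. Hence $W$ lies in the commutant $\{H,L^i,L^{i*}\}'$, and I claim ergodicity forces this commutant to be $\mathbb C\id$. Indeed, any projection $P$ commuting with $H$ and all $L^i,L^{i*}$ satisfies $\mathbb W(P)=i[H,P]-\tfrac12\{\sum_i L^{i*}L^i,P\}+\sum_i L^{i*}PL^i=0$, so $P\in\ker\mathbb W=\mathbb C\id$ by ergodicity, forcing $P\in\{0,\id\}$; a finite-dimensional $*$-algebra whose only projections are trivial equals $\mathbb C\id$. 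Therefore $W=\lambda\id$ is scalar, so $[W]$ is the identity of $PU(d)$ and $(W,a)=e$. This is exactly the point where $PU(d)$ is needed: under the $U(d)$-action the scalars form a nontrivial stabilizer $U(1)$, which is killed precisely by passing to $PU(d)$.

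Finally, properness, which I would check via the sequential criterion: given $g_n=(W_n,a_n)\in G$ and $\mathsf D_n\to\mathsf D$ in $\manifprim$ with $g_n\mathsf D_n\to\mathsf D'$, I must extract a convergent subsequence of $g_n$. Since $PU(d)$ is compact, $(W_n)$ already has a convergent subsequence; and taking the trace of the Hamiltonian component, $\mathrm{tr}[(g_n\mathsf D_n)_H]=\mathrm{tr}[H_n]+d\,a_n$ (conjugation preserves trace), so $a_n=\tfrac1d\big(\mathrm{tr}[(g_n\mathsf D_n)_H]-\mathrm{tr}[H_n]\big)$ converges because both traces do. Hence $g_n$ has a convergent subsequence in $G$. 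I expect freeness to be the main obstacle, since it is the only step using the structural ergodicity input $\ker\mathbb W=\mathbb C\id$; properness is comparatively routine here because the sole noncompact direction $\mathbb R$ is controlled by a trace, the compact factor $PU(d)$ being harmless.
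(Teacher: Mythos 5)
Your proof is correct, and its smoothness and properness parts coincide in substance with the paper's (the paper simply asserts properness from compactness of $PU(d)$ plus the fact that the $\mathbb R$-factor acts by translation; your sequential-criterion argument, recovering $a_n$ from the trace of the Hamiltonian component, just makes those ``elementary arguments'' explicit). The interesting divergence is in the freeness step, where ergodicity enters. The paper keeps $W$ and $a$ coupled: from $g\mathsf D=g'\mathsf D$ it derives the eigenvalue relation $\mathbb W_{\mathsf D}(W^*W')=i(a-a')W^*W'$, so that $T_\tau(W^*W')=e^{i\tau(a-a')}W^*W'$, and then invokes the ergodic limit $T_\tau\to{\rm tr}[\rho_{ss}\,\cdot\,]\id$ to force $a=a'$ and $W^*W'$ scalar in one stroke; this recycles verbatim the computation already used for Lemma 3 in the Appendix. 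You instead decouple the two factors: the trace of $W^*HW=H-a\id$ kills $a$ outright, after which $W$ lies in the commutant $\{H,L^i,L^{i*}\}'$, and you show this commutant is $\mathbb C\id$ because its projections are annihilated by $\mathbb W$ while $\ker\mathbb W=\mathbb C\id$ by ergodicity. Both arguments rest on the same structural input (triviality of the fixed-point set of the semigroup); yours is the static, Frigerio-type commutant argument, the paper's the dynamical one via convergence of $T_\tau$. Two minor remarks: your detour through projections is unnecessary, since $W$ itself commutes with $H$ and all $L^i,L^{i*}$ and a direct computation then gives $\mathbb W(W)=0$, hence $W\in\ker\mathbb W=\mathbb C\id$ immediately; and your opening check that the action preserves ergodicity (well-definedness of the map into $\mathcal D^{\rm erg}$) is a point the paper leaves implicit and is worth having on record.
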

\begin{proof} The action 
%$G\times \mathcal D^{\rm erg}\to \mathcal D^{\rm erg}$ 
defined via \eqref{gaction} is clearly smooth with respect to $W,$ and $a$, hence its lift to the quotient Lie group $PU(d)\times \mathbb R$ is smooth as well. Since the group $PU(d)$ is compact, and the rest is just a translation, it follows from elementary arguments that the smooth map $G\times \mathcal D^{\rm erg}\to \mathcal D^{\rm erg}\times \mathcal D^{\rm erg}$ given by $(g,\mathsf D)\mapsto (g\mathsf D,\mathsf D)$ is proper, i.e. preimage of every compact set is compact. This means that the action is proper. In order to show that the action is free, we need to use ergodicity as follows: suppose that $g\mathsf D =g'\mathsf D$ for some $\mathsf D$, and $g=(W,a)$, $g'=(W',a')$; then a direct computation similar to the one in the proof of Lemma \ref{firstlemma} in Appendix shows that $\mathbb W_{\mathsf D}(W^*W')=i(a-a')W^*W'$. Since ergodicity requires
$$
\lim_{\tau\rightarrow \infty} e^{i\tau (a-a')} W^*W' = {\rm tr}[W^*W'\rho_{ss}]\mathbb I,
$$
we must have $a=a'$ and $W^*W'$ a multiple of the identity. But this exactly means that $W$ equals $W'$ as an element of the \emph{projective} unitary group; hence $g=g'$. This proves that the action is free.
\end{proof}
The fact that the group action preserves the equivalence classes, that is $g\mathsf D \in [\mathsf D]$ for all $\mathsf D\in \mathcal D^{\rm erg}$, can now be formulated in differential geometric terms. Indeed, using the standard theory of Lie group actions on manifolds, we conclude from the above Lemma that the space of output equivalence classes 
$$
\mathcal{P}= \{[\mathsf{D}] ~:~ \mathsf{D}\in \mathcal{D}^{\rm erg}\} =  \mathcal{D}^{\rm erg}/G
$$
admits a unique smooth structure such that the quotient map $\pi: \mathcal{D}^{\rm erg}\to \mathcal{P}$ 
$$
\pi(\mathsf D) = [\mathsf D], \quad \text{ for all } \mathsf D\in \mathcal D^{\rm erg},
$$ 
is a submersion, and $\mathcal{D}^{\rm erg}$ is a principal $G$-bundle over $\mathcal{P}$ \cite{isham}. Here the equivalence classes $[\mathsf D]$ are considered as \emph{fibres} of the \emph{fiber bundle} over the \emph{base manifold} $\mathcal P$, that is, the map $\pi$ has the local triviality property: each $[D]\in  \mathcal{P}$ has an open neighbourhood $U$ such that there exists a diffeomorphism
$$
\phi: \pi^{-1} (U) \to U\times G,
$$
which is $G$-equivariant, i.e. $\phi(g \mathsf{D}) = g \phi(\mathsf{D})$ where $G$ acts on $U\times G$ as 
$g ([\mathsf D], g^\prime) := ([\mathsf D] , g^\prime g^{-1})$. The term principal $G$-bundle refers to the fact that the group action preserves the fibres.

We can now use this differential geometric framework to describe local changes of identifiable parameters, via the tangent bundle $\mathcal T$ of the manifold $\manifprim$. In particular, the \emph{non-identifiable} parameter changes along the equivalence classes correspond to the \emph{vertical bundle} over $\manifprim$ with the fibres
\newcommand{\nonident}{\mathcal T^{\rm nonid}}
$$
\mathcal T_{\mathsf D}^{\rm nonid} := \ker \left. \pi_* \right|_{\mathsf D}\subset \mathcal T_{\mathsf D}, \quad \mathsf D\in \manifprim,
$$
where $\left. \pi_* \right|_{\mathsf D}$ is the push-forward tangent map of the canonical projection $\pi$ at point $\mathsf D$, and 
$\mathcal T_{\mathsf D}$ is the full tangent space at that point. 
%Intuitively, this just means that the differential of the parameter $\mathsf D$ is zero in the directions inside $\mathcal T_{\mathsf D}^{\rm nonid}$.

The group action is reflected in two ways at the level of tangent spaces. On the one hand, given any fixed $g\in G$, the push-forward $g_*$ of the map $\mathsf D\mapsto g\mathsf D$ maps the fibres into each other as
$$
g_*\nonident_{\mathsf D}=\nonident_{g\mathsf D}.
$$
This push-forward is simply obtained by differentiating the parameters in the standard chart
$$
g_*(\dot H, \dot L^1,\ldots, \dot L^k) = (W^*\dot H W, W^*\dot L^1 W,\ldots, W^*\dot L^kW), \quad g=(W,a).
$$
On the other hand, for any fixed $\mathsf D$, the push-forward of $g\mapsto g\mathsf D$ defines a Lie algebra isomorphism
\begin{equation}\label{eq.Dstar}
\mathsf D_*:\mathfrak g\to \nonident_{\mathsf D},
\end{equation}
so that different fibres all have the same dimension, which is that of the Lie algebra $\mathfrak g$. We can now explicitly compute this action. First of all, the Lie algebra of $G$ can be conveniently written as
\begin{equation}\label{liealg}
\mathfrak g =\{ (-iK,r)\mid K\in M_{sa}(\mathbb C^d)/\mathbb R\id,\, r\in \mathbb R\} = \{ (-iK,r)\mid K\in M_{sa}(\mathbb C^d), \, {\rm tr}[\rho^{ss}K]=0,\, r\in \mathbb R\},
\end{equation}
where the choice of the sign as well as the last identification is for later convenience. In particular, the subspace of non-identifiable directions is in one-to-one correspondence with this linear space. From this we already find the number of non-identifiable directions:
$$
\dim \nonident_{\mathsf D} = d^2-1+1 = d^2.
$$
We stress that this result is crucially based on ergodicity, which ensures that the action is free; this is required for the push-forward $\mathsf D_*$ to be an isomorphism. Now $\mathsf D_*$ acts on an element $X=(-iK,r)\in \mathfrak g$ as
\begin{align*}
\mathsf D_*(X) &= \frac{d}{dt}(\exp(t\,(-iK,r))\mathsf D)|_{t=0}\\
&=\frac{d}{dt}\left(( e^{itK}He^{-itK} , e^{itK} L^1e^{-itK}, \ldots ,  e^{itK} L^k e^{-itK} ) + t\,r(\id, 0,\ldots ,0)\right)|_{t=0}\\
&= (i[H,K], i[L^1,K], \ldots, i[L^k,K])+ r(\id, 0,\ldots ,0).
\end{align*}

\newcommand{\ident}{\mathcal T^{\rm id}}
Having now characterised the vertical bundle $\nonident$ of non-identifiable directions, an obvious question arises: is there a natural way to choose complementary subspaces for \emph{identifiable directions} in each fibre? This means choosing subspaces $\ident_{\mathsf D}$ such that
$$
\mathcal T_{\mathsf D} = \nonident_{\mathsf D}\oplus \ident_{\mathsf D}, \quad \mathsf D\in \manifprim.
$$
If the subspaces are chosen smoothly, i.e. so as to define a fibre bundle $\ident$ over $\manifprim$, the result is called a \emph{horizontal bundle} $\ident$, and in case it respects the group action, that is
\begin{equation}\label{horizontaleq}
g_*\ident_{\mathsf D}=\ident_{g\mathsf D},
\end{equation}
it defines an \emph{principal connection} on the manifold $\manifprim$.

There is a natural way of defining a principal connection via its associated \emph{connection one-form}; since this approach turns out to be relevant in our situation, we briefly explain the idea in the general level. As we have shown above, any $\dot{\mathsf D}\in \nonident_{\mathsf D}$ can be generated by the action of the Lie algebra; $\dot{\mathsf D}=\mathsf D_*(X)$ for some $X\in \mathfrak g$. Now suppose that we can associate to \emph{every} tangent vector $\dot{\mathsf D}\in \mathcal T_{\mathsf D}$ an element $\omega_{\mathsf D}(\dot{\mathsf D})\in \mathfrak g$ which somehow describes the "part" of the parameter that results from the non-identifiable group action. Such a map should define a one-form $\omega:\mathcal T\to \mathfrak g$ satisfying the compatibility condition (sometimes called \emph{nondegeneracy})
\begin{equation}\label{comp}
\dot{\mathsf D}=\mathsf D_*(\omega_{\mathsf D}(\dot{\mathsf D})), \quad \dot{\mathsf D}\in \nonident_{\mathsf D},
\end{equation}
and the $G$-covariance condition
\begin{equation}\label{gcovariance}
g^*\omega={\rm Ad}_{g^{-1}}\circ \omega,
\end{equation}
where $g^*$ is the pull-back of the action by $g$ on the cotangent bundle, which simply acts as $g^*\omega(\dot{\mathsf D}) = \omega_{g\mathsf D}(g_*\dot{\mathsf D})=\omega_{g\mathsf D}(W^*\dot{\mathsf D}W)$ for $g=(W,a)$, and the adjoint action is given by ${\rm Ad}_{g^{-1}}(X)=(W^*XW,r+a)$.

Given such a map, we can then define the "back-action" $\mathsf D_*\circ \omega_{\mathsf D}$ on the tangent space; due to the above compatibility condition, the back-action is a special projection of the tangent space onto the subspace $\nonident_{\mathsf D}$. Hence, we can use its complementary projection
$$
P_{\mathsf D}:= {\rm Id}-\mathsf D_*\circ \omega_{\mathsf D}
$$
to define the above horizontal bundle and the associated principal connection via
$$
\ident_{\mathsf D}:={\rm ran}\, P_{\mathsf D}.
$$
Indeed, the condition \eqref{horizontaleq} holds because of \eqref{gcovariance}. The map $\omega$ is called the \emph{connection one-form}, and $P_{\mathsf D}$ is the \emph{horizontal projection}.
%%%%%%%%%%%%%%%%%%%%%%%%%%%%%%%%%%%%%%%%%%%%%%%

Any principal connection gives a possible way of extracting the parameter changes relevant for our system identification problem. In the next section we show that there is actually a natural connection associated with the \emph{information geometric structure} of the problem, given by the Fisher information of the output state. We will obtain it by explicitly constructing the associated connection one-form, which arises neatly from the quantum Ito calculus.

%%%%%%%%%%%%%%%%%%%%%%%%%%%%%%%%%%%%%%%%%%%%%
\section{Information geometry for dynamical parameter estimation from the output state}\label{sec.info.geom}
%%%%%%%%%%%%%%%%%%%%%%%%%%%%%%%%%%%%%%%%%%%%%%
Our goal is to describe quantitatively the precision with which unknown dynamical parameters can be estimated by making measurements on the output state. As noted above, we will restrict our attention to dynamical parameters $\mathsf{D}$ which belong to the open subset $\mathcal{D}^{\rm erg}$ of $\mathcal{D}$ of \emph{ergodic} Markov dynamics. As we will consider this problem in the limit of large times, the relevant dynamical regime is the stationary one; moreover, the statistical properties of the output state can be understood \emph{locally}, by focusing on a shrinking neighbourhood of the parameter manifold $\mathcal{D}^{\rm erg}$ whose size is of the order of the statistical uncertainty \cite{GutaKiukas,CatanaGutaBouten}. This will lead to the concept of local asymptotic normality discussed in section \ref{sec.lan}. In this section however, we focus on the information geometry of the system identification problem, more precisely on the quantum Fisher information matrix of the output state and its asymptotic behaviour, and its relationship with the covariance of certain quanta stochastic integrals called ``fluctuation operators". We will start by introducing the latter in a  general set-up and then show how the former fits in this theory.

Section \ref{subsec.QFI} derives the quantum Fisher information of the system-output state as covariance of certain ``generators"; section \ref{subsec.fluctuations} analyses more general ``fluctuation operators" and looks at their Markov covariance; section \ref{sec.unidentifiable.param} deals with the information geometry structure, and connects the previous constructions, in particular it  provides an explicit expression of the quantum Fisher information; section \ref{sec.ccr} constructs an algebra of canonical commutation relations (multimode continuous variables system) and a family of coherent states which will be relevant later on for the local asymptotic result.

%%%%%%%%%%%%%%%%%%%%%%%%%%%%%%%%%%%%%%%%%%%%%%%%%%%%%%%%%%%%%%%
\subsection{Quantum Fisher information of a parametric model}\label{subsec.QFI}
%%%%%%%%%%%%%%%%%%%%%%%%%%%%%%%%%%%%%%%%%%%%%%%%%%%%%%%%%%%%%%%
%Our goal is to describe quantitatively the precision with which unknown dynamical parameters can be estimated by making measurements on the output state. As noted above, we will restrict our attention to dynamical parameters $\mathsf{D}$ which belong to the open subset of $\mathcal{D}$ of primitive Markov dynamics. As we will consider this problem in the limit of large times, the relevant dynamical regime is the stationary one; moreover, the statistical properties of the output state can be understood \emph{locally}, by focusing on a shrinking neighbourhood of the parameter manifold $\manif $ whose size is of the order of the statistical uncertainty \cite{GutaKiukas,Catana}.
We pass now to a statistical setting where the dynamical parameter $\mathsf D$ is considered to be unknown.
 The changes in $\mathsf D$ are encoded in its (partial) derivatives $\dm=(\dot H,\dot L^1,\ldots, \dot L^k)$, which will be seen as vectors in the tangent space $\mathcal{T}_{\mathsf D}$ to $\mathcal{D}^{\rm erg}$ at the point ${\mathsf D}$. Since the dynamics is ergodic, 
 the system converges to a unique stationary state $\rho_{ss}$ for large times, and we will denote by  $\langle\cdot \rangle_{ss}$ the expectation with respect to the state $\rho_{ss} \otimes |\Omega\rangle\langle \Omega|$.
%In particular, the submanifolds determined by the phase multiplication transformations (PM1) and (PM2) on the one hand, and the unitary conjugation (UC) on the other, determine the following subspaces of the tangent space 
%$\mathcal{T}_{\mathsf D}$, which will be of relevance later on
%\begin{itemize}
%\item[(PM)] $\mathcal{T}^{pm}_{\mathsf D} ={\rm span} \{ (\id,0,\ldots,0), (0,iL^1,\ldots,iL^k)\}$;
%\item[(UC)] $\mathcal{T}^{uc}_{\mathsf D}=\{ (i[H,K], i[L^1,K],\ldots,i[L^k,K])\mid K\in \mathcal{B}_{sa}(\mathbb C^d)\}$.
%\end{itemize}
%Note that $\mathcal{T}^{pm}_{\mathsf D} $ is two-dimensional, containing the two phase directions. The real linear space $\mathcal{T}^{uc}_{\mathsf D}$ is obtained by differentiating along one-parameter paths determined by unitaries $W_\theta=e^{i\theta K}$. It has dimension exactly $d^2-1$, \emph{due to primitivity of $\mathsf{D}$}. 
%Indeed, if $K$ commutes with $H$, and all $L^1,\ldots,L^k$, then $\mathbb W(K)=0$, i.e. $K$ is a fixed point of the Heisenberg evolution in the system. By primitivity, the only fixed point is the identity, hence $K$ must be a multiple of $\id$.
In this subsection we consider a generic statistical model and analyse the quantum Fisher  information (QFI) of the output state; we will show that the QFI grows linearly with time and the rate can be expressed in terms of the Markov covariance inner product introduced below. Let
$$
\mathbb R^m \ni \theta\mapsto \mathsf{D}_\theta.
$$
be a smooth family of dynamics parametrised by an unknown parameter  $\theta\in \mathbb{R}^m$ which may be thought to encode our prior knowledge about the dynamics. In particular this could be a complete parametrisation of $\mathcal{D}^{erg}$.
The directional derivatives of $\mathsf{D}_\theta$ are defined as
$$
\dot{\mathsf{D}}_{\theta,a} :=\left (\frac{\partial H}{\partial \theta_a} , \frac{\partial L^1}{\partial \theta_a},\ldots,
\frac{\partial L^k}{\partial \theta_a} \right) = (\dot{H}_{\theta,a}, \dot{L}^1_{\theta,a}, \dots, \dot{L}^k_{\theta,a}) \in \mathcal{T}_{\mathsf{D}_\theta}.
$$
%and will identified with vectors in the tangent space $\mathcal{T}_{\mathsf{D}_\theta}(\manif)$ to $\manif$ at the point $\mathsf{D}_\theta$. 
%In order to avoid confusion with the parameters in $\manif$, we will use the notation $\dm=(\dot H,\dot L_1,\ldots, \dot L_N)\in \ts$ to denote generic tangent vectors at the point $\mathsf D$. 

%Let $\theta_0$ be a fixed, known parameter value, and 
%$\mathsf{D}: = \mathsf{D}_{\theta_0} \in \manif$ the corresponding dynamical parameter. We assume that the associated Lindblad generator $\mathbb W$ is primitive, and denote its stationary state by $\rho_{ss}$.
Recall that the QFI of an arbitrary multiparameter (smooth) family of pure states $|\psi_\theta\rangle$ with $\theta\in \mathbb{R}^m$, is the $m\times m$ 
positive real matrix with elements \cite{Braunstein&Caves}
$$
F^\theta_{a,b}= 4 {\rm Re}\left(\left\langle \left.\frac{\partial\psi_\theta}{\partial\theta_a} \right| \frac{\partial \psi_\theta}{\partial \theta_b} \right\rangle -
\left\langle \psi_\theta \left|\frac{\partial \psi_\theta}{\partial \theta_b}\right.\right\rangle
\left\langle \left.\frac{\partial \psi_\theta}{\partial \theta_a}\right|\psi_\theta\right\rangle  \right)
,\quad 1\leq a,b\leq m.
$$
We apply this formula to the output state $|\Psi^{\rm s+o}_\theta(t)\rangle:=U_\theta(t)|\varphi\otimes \Omega\rangle$ generated with a $\theta$-dependent dynamical parameter $\mathsf{D}_\theta$, cf. equation \eqref{eq.output}. By differentiating with respect to $\theta_a$ we get
\begin{equation}\label{eq.derivative.psi}
U^*_\theta (t) \frac{\partial }{\partial \theta_a} \left| \Psi^{\rm out}_\theta(t)\right\rangle  = 
%U^*_\theta (t)\frac{\partial U_\theta(t)}{\partial \theta_i}  |\varphi\otimes \Omega \rangle =
U^*_\theta (t)\dot{U}_{\theta,a}(t) |\varphi\otimes \Omega \rangle, \qquad
\dot{U}_{\theta,a} (t):=\frac{\partial U_\theta(t)}{\partial \theta_a}. 
\end{equation}
We will now show that the \emph{generator} 
$-i G_{\theta, a}(t):= U^*_\theta (t)\dot{U}_{\theta,a}(t)$ can be written as a quantum stochastic integral. 
%Let $\dm_{\theta,a} \in \mathcal{T}_{\mathsf{D}_\theta}(\mathcal{D})$ be the tangent vector 
%$$
%\dm_{\theta,a}:=  (\dot{H}_{\theta,a}, \dot{L}^1_{\theta,a}, \dots, \dot{L}^k_{\theta,a})
%$$
%corresponding to changes in the parameter $\theta_a$ around $\theta$. 
%To compute the QFI of the output state $|\Psi^{\rm out}(t)\rangle=U(t)|\varphi\otimes \Omega\rangle$ we will need to determine the derivatives of $U(t)$ with respect to $\theta$.
%In the discrete case, we saw by differentiating the output vectors that the fluctuation operator appeared via the derivatives of the $n$-step unitary: $U(n)^*\dot U(n)=-i\sqrt n \mathbb F_n(\tilde G)$. 
%In order to proceed similarly in the continuous case,
%Hence we begin by fixing a direction $\dm \in \ts$, and computing the stochastic differential of the adapted process $U(t)^*\dot U(t)$, where $U(t)$ is the unitary process determined by $\mathsf{D}$, and $\dot U(t):=\frac{\partial U_\theta(t)}{\partial \theta}|_{\theta =0}$, where $U_\theta(t)$ is the process corresponding to the dynamical parameter $\mathsf{D}+\theta \dm$. 
From \eqref{QSDE} we have
\begin{align*}
dU^*_\theta(t)&=U^*_\theta(t) \left(\sum_i (-L^i_{\theta} dA^*_{i}(t)+L^{i*}_\theta dA_{i}(t))-(-iH_\theta+\frac 12 \sum_i 
L^{i*}_\theta L^i_\theta )dt \right),\\
d\dot U_\theta (t) &= \left(\sum_i (\dot L^i_{\theta,a} dA^*_{i}(t)-\dot L^{i*}_{\theta,a} dA_{i}(t))-(i\dot H_\theta+\frac 12 \sum_i (\dot L^{i*}_{\theta,a} L^i_\theta +L^{i*}_\theta \dot L^i_\theta ))dt\right)U_\theta (t)\\
&+ \left(\sum_i (L^i_\theta dA^*_{i}(t)-L^{i *}_\theta dA_{i}(t))-(iH_\theta +\frac 12 \sum_i L^{i*}_\theta L^i_\theta )dt\right)\dot U_\theta (t).
\end{align*}
Therefore, by applying the Ito rule \eqref{itoformula} we get
$$
dU^*_\theta(t) \cdot d\dot U_{\theta,a}(t) = U_\theta(t)^*\sum_i L^{i*}_\theta \left(\dot L^i_{\theta,a} U(t)+L^i_\theta \dot U_\theta(t)\right)\, dt.
$$
and using \eqref{equation.product.integrals} 
%$d(U(t)^*\dot U(t))=dU(t)^* \dot U(t)+U(t)^* d\dot U(t) +dU(t)^*d\dot U(t)$; substituting the above expressions we observe that the terms involving $\dot U(t)$ all cancel out, and 
we obtain an explicit differential expression for the generator
\begin{eqnarray}
d G_{\theta,a}(t) &=& i d(U^*_\theta(t)\dot U_{\theta,a}(t))\nonumber \\
&=&i \sum_{i} \left(j_t(\dot L^i_{\theta,a}) dA^*_{i}(t)-j_t(\dot L^{i*}_{\theta,a}) dA_{i}(t)\right)+ 
 j_t\left(\dot H_\theta+\mathrm{Im} \sum_i \dot L^{i*}_{\theta,a} L^i_\theta \right)dt 
\nonumber \\ 
&=&i \sum_{i} \left(j_t(\dot L^i_{\theta,a}) dA^*_{i}(t)-j_t(\dot L^{i*}_{\theta,a}) dA_{i}(t)\right) + 
 j_t\left( \mathcal{E}_{\mathsf{D}}({\dot{\mathsf{D}}_{\theta,a}}) \right)dt
\label{eq.generator.diff}
\end{eqnarray}
where the \emph{real linear} map 
$\mathcal{E}_{\mathsf{D}}: \mathcal{T}_{\mathsf{D}} :\to M_{sa}(\mathbb{C}^d)$ is given by
\begin{equation}\label{eq.E}
\mathcal E_{\mathsf{D}}(\dm):=\dot H+\mathrm{Im} \sum_{i=1}^k \dot L^{i*}L^i.
\end{equation}
Later on we will see that this map turns out to play a crucial role in the construction of the horizontal bundle for the identifiable parameters, and in the definition of the CCR algebra in section \ref{sec.ccr}.

The QFI can be written in terms of the covariance matrix of the generators $G_{\theta,b}(t)$ 
$$
F^\theta_{a,b}(t) = 
4 {\rm Re}\left(\langle \varphi \otimes \Omega | G^*_{\theta,a}(t) G_{\theta,b}(t) | \varphi \otimes \Omega\rangle -
 \langle \varphi \otimes \Omega | G^*_{\theta,a}(t) | \varphi \otimes \Omega\rangle
\langle \varphi \otimes \Omega | G_{\theta,a}(t) |\varphi \otimes \Omega\rangle\right).
$$
where the second term stems from the fact that  $G_{\theta,b}(t)$ have non-zero mean. The generators are in fact not uniquely defined: since $dA_i(t)$ annihilates the vacuum state, arbitrary annihilation integrals can be added, while terms proportional to the identity produce only unphysical complex phases which do not change the state. We will therefore define a modified (non-selfadjoint) generator which ``centres" $G_{\theta,b}(t)$ for large times, and lacks annihilations terms so that it is consistent with the definition of ``fluctuation operators" introduced in the next subsection. The modified generator is given by the quantum stochastic integral with differential form
\begin{equation}\label{eq.modified.generator}
dG^0_{\theta,a}(t) = 
i \sum_{i=1}^k j_t(\dot L^i_{\theta,a}) dA^*_{i}(t) + 
  \left( j_t ( \mathcal{E}_{\mathsf{D}}({\dot{\mathsf{D}}_{\theta,a}}))  -  {\rm Tr}\left(\rho_{\mathsf{D}}^{ss} \mathcal E_{\mathsf{D}}(\dm_{\theta,a}) \right)\right)dt.
\end{equation}
By ergodicity, its rescaled mean converges to zero
$$
\lim_{t\to\infty}\frac{1}{t}
 \langle \varphi \otimes \Omega  |  G^0_{\theta,a}(t) | \varphi \otimes \Omega\rangle =0
$$ 
%Note that the generator \eqref{eq.generator.diff} has a similar form  to that of a fluctuation operator defined in \eqref{eq.fluctuation}, with the difference that it does not satisfy the zero mean condition, and it also contains $dA_i(t)$ terms. However, since the latter annihilate the vacuum state, they can be dropped from the definition of the generator (at the price of losing the selfadjointness) without changing the QFI. Therefore, if we denote by $\tilde{G}_{\theta,a}(t)$ the modified generator
%$$
%d\tilde{G}_{\theta,a}(t) = 
%\sum_{i=1}^k j_t(\dot L^i_{\theta,a}) dA^*_{i}(t) - 
%i j_t\left( \mathcal{E}_{\mathsf{D}}({\dot{\mathsf{D}}_{\theta,a}}) \right)dt
%$$
For large times, the QFI matrix elements scale linearly with $t$ and the leading contribution is given by the 
\emph{quantum Fisher information rate}
% then be rewrite written as
%$$
%F^\theta_{a,b}(t) = 
%4 {\rm Re}\left(\langle \varphi \otimes \Omega | \tilde{G}^*_{\theta,a}(t) \tilde{G}_{\theta,b}(t) | \varphi \otimes \Omega\rangle -
% \langle \varphi \otimes \Omega | \tilde{G}^*_{\theta,a}(t) | \varphi \otimes \Omega\rangle
%\langle \varphi \otimes \Omega | \tilde{G}_{\theta,a}(t) |\varphi \otimes \Omega\rangle\right).
%$$
%The second term on the right stems from the non-zero mean of the modified generator. We will now show how to ``centre" the latter in order to simplify the QFI expression and obtain a fluctuation operator. The generator's mean 
%$
%M_\theta(t) :=  \langle \varphi \otimes \Omega  |  \tilde{G}_{\theta,a}(t) | \varphi \otimes \Omega\rangle
%$ 
%scales linearly with $t$, and due to ergodicity, the scaling constant is
%$$
%\lim_{t\to\infty} \frac{M_\theta(t)}{t} = m_\theta= -i\, {\rm Tr}\left(\rho_{\mathsf{D}}^{ss} \mathcal E_{\mathsf{D}}(\dm) \right). 
%$$
%To study the asymptotic behaviour of the QFI, we centre the generator by subtracting the linear term, i.e. we define 
%$$
%\tilde{G}^0_{\theta,a} = \tilde{G}_{\theta,a} - t m_\theta\id.
%$$
\begin{equation}\label{eq.qfi.rate}
f^\theta_{a,b} := \lim_{t\to\infty} \frac{F^\theta_{a,b}(t)}{t} =  
\lim_{t\to\infty} \frac{1}{t} 4
{\rm Re} \langle \varphi \otimes \Omega | G^{0*}_{\theta,a}(t) G^0_{\theta,b}(t) | \varphi \otimes \Omega\rangle .
\end{equation}
In the next section we prove the linear scaling and find an explicit expression of the QFI rate.

%%%%%%%%%%%%%%%%%%%%%%%%%%%%%%%
\subsection{Fluctuation operators and the Markov covariance form}\label{subsec.fluctuations}
%%%%%%%%%%%%%%%%%%%%%%%%%%%%%%%%

Our goal is now to formulate the QFI rate \eqref{eq.qfi.rate} in terms of certain quantum fluctuation operators, and subsequently compute it using quantum stochastic calculus. These fluctuation operators can be formulated in a slightly more general setting, which is naturally \emph{complex linear} instead of real linear, and is also independent on the map $\mathcal E_{\mathsf D}$ special to our setting. The dynamical parameter $\mathsf D$ will remain fixed throughout the section.

Recall that for any $X\in M(\mathbb C^d)$ we let $j_t (X)$ denote the Heisenberg evolved system observable defined by the Langevin equation \eqref{langevin}. For an arbitrary $(k+1)$-tuple ${\bf X} := (X^0, X^1,\dots, X^k) \in M(\mathbb C^d)^{1+k}$ we define the associated \emph{centered fluctuation operator} by the quantum stochastic integral
\begin{equation}\label{eq.fluctuation}
\mathbb F_t({\bf X})= \frac {1}{\sqrt t}\int_0^t 
\left(i\sum_{i=1}^{k} j_s(X^i) dA^*_{i}(s)+j_s\circ \mathcal C_{\mathsf D}(X^0) ds\right),
\end{equation}
where the map
$$\mathcal C_{\mathsf D}(X):= X-{\rm tr}[\rho_{ss}^{\mathsf D}X]\id
$$
``centers" the \emph{stationary mean} of $\mathbb F_t({\bf X})$ to zero:
$$
\langle \mathbb F_t({\bf X}) \rangle_{ss}%= {\rm tr}[(\rho_{ss}\otimes |\Omega \rangle\langle \Omega|) \mathbb F_t({\bf X})]
= \frac {1}{\sqrt t}\int_0^t \mathrm{tr}[\rho_{ss}T_s(X^0-{\rm tr}[\rho_{ss} X^0]\id)] ds=\frac {1}{\sqrt t}\int_0^t \mathrm{tr}[\rho_{ss}(X^0-{\rm tr}[\rho_{ss} X^0]\id)] dt=0.
$$
The proof of the following crucial result is based on quantum Ito calculus, and can be found in the Appendix.
%%%%%%%%%%%%%%%%%%%%%%%%%%%%%%%%%%%%%%%%%%%%%%%%%%%%
\begin{proposition}[{\bf Markov covariance for fluctuation operators}]\label{continuous_mc} The following limit exists, is independent of the unit vector $|\varphi \rangle \in \mathcal H$, 
and defines a positive sesquilinear form $(\cdot,\cdot )_{\mathsf{D}}$ on the complex linear space $M(\mathbb C^d)^{1+k}$ via
\begin{align*}
(\mathbf{X},\mathbf Y)_{\mathsf{D}}&:=\lim_{t \rightarrow\infty} \langle \varphi\otimes \Omega|\mathbb F_t(\mathbf X)^*\mathbb F_t(\mathbf Y)|\varphi\otimes \Omega\rangle=\sum_{i=1}^{k}
{\rm tr}\left[\rho_{ss}  R_{\mathsf D}({\bf X})^{i*} \,R_{\mathsf D}({\bf Y})^{i} \right],%\\
%&={\rm tr}\left[\rho_{ss}\sum_{i=1}^{N_L}\big(X_i-i[L_i,\mathbb W_0^{-1}(X_0)]\big)^*\big(Y_i-i[L_i,\mathbb W_0^{-1}(Y_0)]\big)\right]\\
%&= {\rm tr}\left[\rho_{ss}\sum_{i=1}^{k} 
%&={\rm tr}\left[\rho_{ss}\sum_{i=1}^{k}\big(X^i-\mathsf N({\bf X})^i \big)^*\big(Y^i-\mathsf N({\bf Y})^i\big)\right],
%\qquad \mathbf X,\mathbf Y\in \mathcal F.
% \Phi[{\bf X}^*]\circ \mathbb W_0^{-1}(Y_0)-\hat{\Phi}[{\bf Y}]\circ \mathbb W_0^{-1}(X_0^*)\right)\right],
\end{align*}
where
$$
R_{\mathsf D}({\bf X})=(\mathcal{C}_\mathsf{D}(X^0), X^1, \dots , X^k) - \mathcal L_{\mathsf D}\circ \mathbb W_{\mathsf D}^{-1}\circ \mathcal C_{\mathsf D}(X^0), \qquad \mathrm{and}\qquad
\mathcal L_{\mathsf D}(X) = \left(\mathbb W_{\mathsf D}(X),i[L^1,X],\ldots,i[L^{k},X]\right)
$$
We call $(\cdot,\cdot)_{\mathsf{D}}$ the Markov covariance inner product.
\end{proposition}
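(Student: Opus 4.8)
The plan is to reduce the computation to the case of a \emph{pure creation} fluctuation operator (one whose $0$-component vanishes) by absorbing the drift $j_s\circ\mathcal C_{\mathsf D}(X^0)\,ds$ into the stochastic integral via a telescoping identity, and then to evaluate the resulting vacuum inner product by a single application of the quantum Ito rule. Concretely, I would set $\phi_{\mathbf X}:=\mathbb W_{\mathsf D}^{-1}\circ\mathcal C_{\mathsf D}(X^0)$, which is well defined since $\mathcal C_{\mathsf D}(X^0)\in\mathcal B_0$ and $\mathbb W_{\mathsf D}$ is invertible there. Applying the Langevin equation \eqref{langevin} to $\phi_{\mathbf X}$ and using $\mathbb W_{\mathsf D}(\phi_{\mathbf X})=\mathcal C_{\mathsf D}(X^0)$ gives $j_s\circ\mathcal C_{\mathsf D}(X^0)\,ds = dj_s(\phi_{\mathbf X}) - \sum_i j_s([\phi_{\mathbf X},L^i])\,dA_i^*(s) - \sum_i j_s([L^{i*},\phi_{\mathbf X}])\,dA_i(s)$. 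Integrating over $[0,t]$ and recombining with the creation part of $\mathbb F_t(\mathbf X)$ turns the coefficient inside $j_s$ of each $dA_i^*$ into $X^i - i[L^i,\phi_{\mathbf X}] = R_{\mathsf D}(\mathbf X)^i$, while producing a boundary term $j_t(\phi_{\mathbf X})-\phi_{\mathbf X}$ and a pure annihilation integral. Since $R_{\mathsf D}(\mathbf X)^0 = \mathcal C_{\mathsf D}(X^0)-\mathbb W_{\mathsf D}\circ\mathbb W_{\mathsf D}^{-1}\circ\mathcal C_{\mathsf D}(X^0)=0$, this is precisely the identity $\sqrt t\,\mathbb F_t(\mathbf X) = i\sum_i\int_0^t j_s(R_{\mathsf D}(\mathbf X)^i)\,dA_i^*(s) - \sum_i\int_0^t j_s([L^{i*},\phi_{\mathbf X}])\,dA_i(s) + j_t(\phi_{\mathbf X})-\phi_{\mathbf X}$.

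Next I would apply both sides to $|\varphi\otimes\Omega\rangle$ and discard the two extra terms. The annihilation integral has an \emph{adapted} integrand, hence it annihilates $|\varphi\otimes\Omega\rangle$: this is immediate from the first fundamental formula of quantum stochastic calculus, or from a one-line Ito computation showing its squared norm stays identically zero because the only surviving contraction $dA^*\,dA$ is absent. The boundary term is bounded in operator norm by $2\|\phi_{\mathbf X}\|$, so after division by $\sqrt t$ it contributes a vector $b_t$ with $\|b_t\|=O(t^{-1/2})$. Thus $\mathbb F_t(\mathbf X)|\varphi\otimes\Omega\rangle = \mathbb F_t(R_{\mathsf D}(\mathbf X))|\varphi\otimes\Omega\rangle + b_t$, where $\mathbb F_t(R_{\mathsf D}(\mathbf X))$ is a pure creation fluctuation operator (its drift vanishes because $R_{\mathsf D}(\mathbf X)^0=0$). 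Since the pure creation part will be seen to have bounded norm, Cauchy--Schwarz removes the boundary and cross terms from the limiting inner product, so it suffices to compute $\lim_{t\to\infty}\langle \mathbb F_t(R_{\mathsf D}(\mathbf X))(\varphi\otimes\Omega)\,|\,\mathbb F_t(R_{\mathsf D}(\mathbf Y))(\varphi\otimes\Omega)\rangle$.

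For pure creation operators the inner product is elementary: writing $\mathbb F_t(R_{\mathsf D}(\mathbf X)) = \tfrac{i}{\sqrt t}\sum_i\int_0^t j_s(R_{\mathsf D}(\mathbf X)^i)\,dA_i^*(s)$ and applying the Ito product rule, every pure creation and pure annihilation term vanishes against the vacuum and only the contraction $dA_i\,dA_j^*=\delta_{ij}\,dt$ survives, yielding $\langle\varphi\otimes\Omega|\mathbb F_t(R_{\mathsf D}\mathbf X)^*\mathbb F_t(R_{\mathsf D}\mathbf Y)|\varphi\otimes\Omega\rangle = \tfrac1t\int_0^t\sum_i\langle\varphi|T_s(R_{\mathsf D}(\mathbf X)^{i*}R_{\mathsf D}(\mathbf Y)^i)|\varphi\rangle\,ds$. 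By ergodicity \eqref{limitchannel} the Cesàro average of $T_s$ converges to $\mathrm{tr}[\rho_{ss}(\cdot)]\id$, so the limit equals $\sum_i\mathrm{tr}[\rho_{ss}\,R_{\mathsf D}(\mathbf X)^{i*}R_{\mathsf D}(\mathbf Y)^i]$, manifestly independent of $|\varphi\rangle$; this same estimate bounds the norm invoked above and makes positivity and sesquilinearity obvious. I expect the main obstacle to be the bookkeeping within quantum stochastic calculus: justifying the Langevin-based rewriting of the drift, the vanishing of the adapted annihilation integral on the vacuum, and the uniform norm estimates needed to pass to the $t\to\infty$ limit. Once these standard but delicate facts are in place, the algebra collapses directly, without ever invoking the carré-du-champ (stationarity) identity that a more pedestrian two-time-correlation computation would require.
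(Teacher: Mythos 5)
Your proof is correct, but it takes a genuinely different route from the one in the paper. The paper first establishes an auxiliary conditional-expectation formula (Lemma \ref{iteration}), expressing $\sqrt{s}\,\langle \mathbb F_s({\bf X})\Omega | j_s(\cdot)\Omega\rangle$ as $\int_0^s T_t\circ\Phi_{\bf X}\circ T_{s-t}\,dt$ by solving an ODE; it then applies the Ito product rule to $\mathbb F_t({\bf X})^*\mathbb F_t({\bf Y})$, evaluates the three resulting terms using \eqref{limitchannel} and \eqref{inverselimit}, and finally collapses the answer into the Gram form by invoking the dissipation (carr\'e-du-champ) identity $\mathbb W(X^*Y)-X^*\mathbb W(Y)-\mathbb W(X^*)Y=\sum_i[L^i,X]^*[L^i,Y]$. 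You instead perform the simplification at the level of the fluctuation operator itself: solving the Poisson equation $\mathbb W(\phi_{\bf X})=\mathcal C_{\mathsf D}(X^0)$ and feeding $\phi_{\bf X}$ into the Langevin equation turns the drift into a total differential plus gauge terms, so that $\mathbb F_t({\bf X})|\varphi\otimes\Omega\rangle$ equals the pure-creation operator $\mathbb F_t(R_{\mathsf D}({\bf X}))|\varphi\otimes\Omega\rangle$ up to an $O(t^{-1/2})$ boundary correction and an annihilation integral that kills the vacuum. This is the quantum analogue of the Kipnis--Varadhan corrector/martingale decomposition for additive functionals of Markov processes. What your route buys: the projection $R_{\mathsf D}$ and the Gram structure (hence positivity, sesquilinearity, and $\varphi$-independence) appear \emph{before} any limit is taken, a single Ito contraction plus the Ces\`aro ergodic average \eqref{limitchannel} finishes the computation, and the carr\'e-du-champ identity is never needed. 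What the paper's route buys: Lemma \ref{iteration} gives the finite-$t$ two-time correlation in closed semigroup form, which is of independent use elsewhere in the argument. The technical ingredients you flag (first fundamental formula for the vanishing of adapted annihilation integrals on the vacuum, uniform operator-norm bounds for bounded adapted integrands) are standard in the Hudson--Parthasarathy calculus for a finite-dimensional system, so there is no gap.
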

From this proposition it is clear that the map $R_{\mathsf D}$ plays a central role; in particular, since $\rho_{ss}$ has full rank, the kernel of the Markov covariance coincides with $\ker R_{\mathsf D}$. Also the range of $R_{\mathsf D}$ turns out to be relevant. These subspaces can be characterised explicitly as follows.

\begin{proposition}\label{projection} The operator $R_{\mathsf D}$ is a projection, i.e. $R_{\mathsf D}^2=R_{\mathsf D}$, with range and kernel given by
\begin{align*}
\ker R_{\mathsf D}&=\left\{ (\mathbb W_{\mathsf D}(K)+r\id, i[L^1,K], \ldots, i[L^{k}, K])\,\big| \,K\in M(\mathbb C^d), \, r\in \mathbb C\right\},\\
{\rm ran}\, R_{\mathsf D}&= \left\{ \left(0,  Y^1 , \ldots , Y^k \right)   \, \Big| \,  
Y^1,\dots , Y^k \in M(\mathbb{C}^d) \right\}.
\end{align*}
\end{proposition}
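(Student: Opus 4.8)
The plan is to exploit the fact that $R_{\mathsf D}$, despite its apparent complexity, has a very simple action once one unwinds the definition of $\mathbb W_{\mathsf D}^{-1}$. Since $\mathcal C_{\mathsf D}(X^0)$ lies in $\mathcal B_0$ (its stationary mean vanishes by construction), the operator $Z := \mathbb W_{\mathsf D}^{-1}\circ \mathcal C_{\mathsf D}(X^0)$ is well defined, and $\mathbb W_{\mathsf D}(Z) = \mathcal C_{\mathsf D}(X^0)$. Substituting $\mathcal L_{\mathsf D}(Z) = (\mathcal C_{\mathsf D}(X^0), i[L^1,Z],\ldots,i[L^k,Z])$ into the defining formula for $R_{\mathsf D}$, the zeroth components cancel, yielding the clean expression
\begin{equation*}
R_{\mathsf D}({\bf X}) = \bigl(0,\, X^1 - i[L^1,Z],\, \ldots,\, X^k - i[L^k,Z]\bigr), \qquad Z = \mathbb W_{\mathsf D}^{-1}\circ \mathcal C_{\mathsf D}(X^0).
\end{equation*}
Everything else follows from this single identity.

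For the range, the displayed formula shows immediately that the zeroth component of $R_{\mathsf D}({\bf X})$ always vanishes, so $\mathrm{ran}\, R_{\mathsf D}$ is contained in the stated subspace; conversely, feeding in $X^0 = 0$ (whence $Z=0$) and $X^i = Y^i$ reproduces any $(0, Y^1,\ldots,Y^k)$, giving equality. The projection property is equally direct: since $R_{\mathsf D}({\bf X})$ has vanishing zeroth component, applying $R_{\mathsf D}$ a second time involves $Z' = \mathbb W_{\mathsf D}^{-1}\circ\mathcal C_{\mathsf D}(0) = 0$, so $R_{\mathsf D}$ acts as the identity on its range and $R_{\mathsf D}^2 = R_{\mathsf D}$.

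For the kernel, membership ${\bf X}\in \ker R_{\mathsf D}$ forces $X^i = i[L^i,Z]$ for all $i$, while $\mathbb W_{\mathsf D}(Z) = \mathcal C_{\mathsf D}(X^0)$ gives $X^0 = \mathbb W_{\mathsf D}(Z) + r\id$ with $r = {\rm tr}[\rho_{ss}X^0]$; taking $K = Z$ exhibits ${\bf X}$ in the stated form. For the reverse inclusion I would verify directly that every tuple $(\mathbb W_{\mathsf D}(K)+r\id, i[L^1,K],\ldots,i[L^k,K])$ is annihilated. The only point requiring care --- and the one place the argument is not completely automatic --- is that $\mathbb W_{\mathsf D}^{-1}\circ\mathbb W_{\mathsf D}$ is \emph{not} the identity on all of $M(\mathbb C^d)$, but only modulo the kernel $\mathbb C\id$ of $\mathbb W_{\mathsf D}$: writing $K = K_0 + c\id$ with $K_0\in\mathcal B_0$ and $c={\rm tr}[\rho_{ss}K]$, one finds $Z = K_0 = K - c\id$. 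This discrepancy is harmless, however, because it only shifts the argument of the commutators by a multiple of $\id$, and $[L^i,\id]=0$; likewise $\mathcal C_{\mathsf D}(\mathbb W_{\mathsf D}(K)+r\id)=\mathbb W_{\mathsf D}(K)$ since $\mathbb W_{\mathsf D}(K)\in\mathcal B_0$ and $\mathcal C_{\mathsf D}(r\id)=0$. Thus $X^i - i[L^i,Z] = i[L^i, c\id] = 0$, confirming that the whole set lies in $\ker R_{\mathsf D}$ and completing the proof.
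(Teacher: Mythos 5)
Your proof is correct and follows essentially the same route as the paper's (which simply notes that ${\bf X}\in\ker R_{\mathsf D}$ iff $X^i=i[L^i,\mathbb W^{-1}(X^0-{\rm tr}[\rho_{ss}X^0]\id)]$, uses ${\rm tr}[\rho_{ss}\mathbb W_{\mathsf D}(K)]=0$ for the kernel description, and declares the range and idempotence straightforward). Your version is merely more explicit, in particular in spelling out the cancellation of the zeroth component and the harmless $\mathbb C\id$ ambiguity in $\mathbb W_{\mathsf D}^{-1}\circ\mathbb W_{\mathsf D}$, which the paper leaves to the reader.
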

\begin{proof} First of all, ${\bf X}\in \ker R_{\mathsf D}$ if and only if $X^i=i[L^i,\mathbb W^{-1}(X^0-{\rm tr}[\rho_{ss}X^0]\id)]$ for all $i=1,\ldots, k$. Since ${\rm tr}[\rho_{ss}\mathbb W_{\mathsf D}(K)]=0$ for any $K$, the given form of the kernel follows. The range is clear from the definition, and the property $R_{\mathsf D}^2=R_{\mathsf D}$ is straightforward to check.
\end{proof}
%%%%%%%%%%%%%%%

%%%%%%%%%%%%%%%%%%%%%%%%%%%%%%%%%%%%%%%%%%%%%%%%%%%%%%%%
%%%%%%%%%%%%%%%%%%%%%%%%%%%%%%%%%%%%%%%%%%%%%% 
\subsection{Markov covariance from a principal connection}\label{sec.unidentifiable.param}
We now proceed to show how the Markov covariance is naturally associated with a specific horizontal bundle for the principal $G$-bundle $\manifprim$, and we also define a Riemannian metric on the manifold $\manifprim$. In order to motivate this, we continue the the discussion from subsection \ref{subsec.QFI}. Indeed, the modified generator \eqref{eq.modified.generator} can be expressed as a fluctuation operator
$$
G^0_{\theta,a}(t) =  \sqrt t \, \mathbb F_t({\bf X}_{\mathsf D}(\dm_{\theta, a})),
$$ 
where we have used the suggestive notation ${\bf X}_{\mathsf D}$ for the \emph{real linear} isomorphism
\begin{eqnarray}
{\bf X}_{\mathsf D} : \mathcal{T}_{\mathsf D} &\to& M_{sa}(\mathbb C^d)\times M(\mathbb C^d)^{k} \\
\label{eq.mapX}
 \dot{\mathsf{D}} = (\dot H, \dot{L}^1, \ldots ,\dot{L}^k) &\mapsto & (\mathcal{E}_{\mathsf D}(\dot{\mathsf{D}}),  \dot{L}^1,\ldots , \dot{L}^k), \qquad
%\mathcal{E}^0_{\mathsf D}(\dot{\mathsf{D}}):= \mathcal{E}_{\mathsf D}(\dot{\mathsf{D}}) - \langle \mathcal{E}_{\mathsf D}(\dot{\mathsf{D}} )\rangle_{ss} \id.
\label{eq.xdotd}
\end{eqnarray}
where $M(\mathbb C^d)^{k}$ is now considered as a real linear space with dimension $2kd^2$, while $M_{sa}(\mathbb C^d)$ is naturally a real linear space. Therefore, using the explicit expression provided in Proposition \ref{continuous_mc}, we obtain the following explicit expression of the QFI rate
\begin{eqnarray}
f^\theta_{a,b} &=& 4{\rm Re} \left(   {\bf X}_{\mathsf D} [\dot{\mathsf{D}}_{\theta,a}]\, ,\,  {\bf X}_{\mathsf D} [\dot{\mathsf{D}}_{\theta,b}] \right)_{\mathsf D}\nonumber\\
& =&
4\sum_{i=1}^k {\rm Re} \, \mathrm{tr}
\left[\rho_{ss}  
\left(\dot{L}^i_{\theta,a} - i[L^i_\theta , \mathbb{W}^{-1}\circ \mathcal{E}^0_{\mathsf D}(\dot{\mathsf{D}}_{\theta_a})  ]\right)^*\cdot
\left(\dot{L}^i_{\theta,b} - i[L^i_\theta , \mathbb{W}^{-1}\circ\mathcal{E}^0_{\mathsf D}(\dot{\mathsf{D}}_{\theta_b})  ]\right) \right].
\label{eq.qfi.rate2}
\end{eqnarray}
The QFI rate inherits the positivity property of the Markov covariance, but also the fact that it may not be positive \emph{definite}.

In conclusion, the \emph{real part} of the form
\begin{equation}\label{eq.inner.prod.ddot}
(\dot{\mathsf D},\dot{\mathsf{D}}')_{\mathsf D}:= \,({\bf X}_{\mathsf D}[\dot{\mathsf D}], {\bf X}_{\mathsf D}[\dot{\mathsf D}])_{\mathsf D}
\end{equation}
has a natural interpretation in terms of the output Fisher information. Its explicit form (see Proposition \ref{continuous_mc}) suggests the definition of the following projection on the tangent bundle over $\manifprim$:
$$
P:\mathcal T\to \mathcal T, \quad P_{\mathsf D}= {\bf X}_{\mathsf D}^{-1}\circ R_{\mathsf D}\circ {\bf X}_{\mathsf D}.
$$
Indeed, the bilinear form essentially depends on this projection:
$$
(\dot{\mathsf D},\dot{\mathsf{D}}')_{\mathsf D}=\sum_{i=1}^{k}
{\rm tr}\left[\rho_{ss}  [P_{\mathsf D}(\dot{\mathsf D})^{i}]^* \,P_{\mathsf D}(\dot{\mathsf D}')^{i} \right].
$$
In order to understand the intuitive meaning of $P_{\mathsf D}$, we now proceed to make a fundamental observation concerning the relation between the push-forward map $D_*$ defined in \eqref{eq.Dstar}, and the map $\mathcal E_{\mathsf D}$ defined in \eqref{eq.E} %resulting from the quantum Ito calculus: 
\begin{equation}\label{condexpect}
\mathcal E_{\mathsf D}\circ \mathsf D_*(X) = r\id +\mathbb W_{\mathsf D}(K), \quad X=(-iK,r).
\end{equation}
This relation can be verified by direct computation, and has appeared before in a different context \cite{AvFrGr12}.

Equation \eqref{condexpect}  is the key to defining the horizontal bundle for the identifiable parameters. Indeed, together with the formula of the projection $R_{\mathsf D}$ in Proposition \ref{continuous_mc}, this immediately gives
%\begin{equation}\label{horizontalproj}
%P_{\mathsf D}(\dot{\mathsf D})= \dot{\mathsf D}-(i[H,\mathbb W^{-1}\circ \mathcal E^0(\dot{\mathsf D})]-{\rm tr}[\rho^{ss}_{\mathsf D}\mathcal E(\dot{\mathsf D})]\id,i[L^1,\mathbb W^{-1}\circ \mathcal E^0(\dot{\mathsf D})], \ldots, i[L^k,\mathbb W^{-1}\circ \mathcal E^0(\dot{\mathsf D})]).
%\end{equation}
%We now observe that
$$
P_{\mathsf D} = {\rm Id}-\mathsf D_*\circ \omega_{\mathsf D},
$$
where we have defined the following one-form from the tangent space to the Lie algebra $\mathfrak g$:
$$
\omega: \mathcal T_{\mathsf D}\to \mathfrak g, \quad \omega_{\mathsf D}(\dot{\mathsf D})=(\mathbb W_{\mathsf D}^{-1}\circ \mathcal E_{\mathsf D}^0(\dot{\mathsf D}), {\rm tr}[\rho_{\mathsf D}^{ss}\mathcal E_{\mathsf D}(\dot{\mathsf D})]).
$$
According to the identification of the Lie algebra as \eqref{liealg}, the range of this map fills the whole Lie algebra. Furthermore, we can easily verify the compatibility condition \eqref{comp}, and $G$-covariance \eqref{gcovariance} follows from the fact that
\begin{equation}\label{gcov}
\mathcal{E}_{g \mathsf D}(g_* (\dot D)) =  W^*\mathcal{E}_{\mathsf D}(\dot{\mathsf D})W, \quad \rho^{ss}_{g\mathsf D}=W^*\rho_{\mathsf D}^{ss}W, \quad g=(W,a)\in G,
\end{equation}
which is again straightforward to check. Hence, $\omega$ is the one-form of a unique principal connection on $\manifprim$, having $P_{\mathsf D}$ as its horizontal projection. Furthermore, the vertical subspaces are given by
$$\ker P_{\mathsf D}={\rm ran}\, \mathsf D_* = \nonident_{\mathsf D},$$
so this connection is compatible with the vertical bundle defining the "non-identifiable directions". We therefore achieve the aim described at the end of section \ref{sec.identifiability} by \emph{defining} the subspace of the identifiable directions to be the horizontal subspace:
$$
\ident_{\mathsf D} := {\rm ran}\, P_{\mathsf D} = \{ \dot{\mathsf D} \mid \mathcal E(\dot{\mathsf D})=0\}.
$$
The associated split $\mathcal T_{\mathsf D}=\nonident_{\mathsf D}\oplus \ident_{\mathsf D}$ now follows immediately from the general theory.
As a consequence of the $G$-invariance of the horizontal projection, the form $(\cdot , \cdot)_{\mathsf D} $ on $\mathcal{T}_{\mathsf D}$ is $G$-invariant in the sense that 
\begin{equation}\label{eq.inner.product.invariance}
(\dot{\mathsf D}, \dot{\mathsf D}^\prime)_{\mathsf D} = ( g_* \dot{\mathsf D}, g_* \dot{\mathsf D}^\prime )_{g\mathsf D} , \qquad
\dot{\mathsf D}, \dot{\mathsf D}^\prime\in \mathcal{T}^{id}_{\mathsf D}, g\in G.
\end{equation}
Hence, this form only depends on the equivalence class, so its real part determines a unique bilinear form on the base manifold $\mathcal P$. Moreover, it also only depends on the horizontal projection $P_{\mathsf D}(\dot{\mathsf D})$ of the tangent vectors; hence it becomes nondegenerate on the horizontal bundle, thereby defining a Riemannian metric on the base manifold $\mathcal P$.

We emphasise that the principal connection (together with the stationary state), completely determines the metric and the associated Fisher information. In this way the connection provides geometric insight on how the (in practice rather complicated) expression of the Fisher information arises; for a discussion on a classical analogy, see e.g. \cite{hanzon}. We demonstrate this in a concrete example in section \ref{sec.example} below.

\subsection{Symplectic structure and CCR-algebra for identification}\label{sec.ccr}
%%%%%%%%%%%%%%%%%%%%%%%%%%%%%%%%%%%%%%

In Proposition \ref{continuous_mc} we defined the Markov covariance on the complex linear space $M(\mathbb C^d)^{k+1}$, and used the \emph{real linear} maps $${\bf X}_{\mathsf D}:\mathcal T_{\mathsf D}\to M(\mathbb C^d)^{k+1}$$
to induce an associated \emph{real} inner product $(\cdot, \cdot)_{\mathsf D}$ on the identifiable part of the tangent space, cf. equation \eqref{eq.inner.prod.ddot};  up to a constant factor, this inner product is the QFI rate. It is then natural to ask if the \emph{imaginary part} of the Markov covariance has any physical interpretation? We will show that the latter can be used to define an algebra of the canonical commutation relations (CCR) over the real space of identifiable parameters $\mathcal{T}^{id}_{\mathsf D}$, which will play the role of limit Gaussian model in the next section. 

On the real linear space $\ident_{\mathsf D}=\{\dot{\mathsf D}\mid \mathcal{E}_{\mathsf D}(\dot{\mathsf D})= 0\}=\ker P_{\mathsf D}$ we now define a \emph{complex structure} via
\begin{eqnarray}
\mathcal{J}_{\mathsf D}:\mathcal{T}^{id}_{\mathsf D} & \to & \mathcal{T}^{id}_{\mathsf D}\nonumber\\
\mathcal{J}_{\mathsf D}:  (\dot H, \dot L^1,\ldots,\dot L^k) &\mapsto & \left(\sum_{i=1}^k {\rm Re} \dot L^{i*}L^i\, ,\, i\dot L^1,\ldots,i\dot L^k\right).
\end{eqnarray}
Using the property that $\mathcal{E}_{\mathsf D}(\dot{\mathsf D})= 0$ for all vectors $\dot{\mathsf D}\in \mathcal{T}^{id}_{\mathsf D}$, 
it is easy to check that $\mathcal{J}_{\mathsf D}$ satisfies the defining property of a complex structure on $\mathcal{T}^{id}_{\mathsf D}$,  i.e. $\mathcal{J}_{\mathsf D}^2 =- {\rm Id}$. Furthermore, since $P_{\mathsf D}={\bf X}_{\mathsf D}^{-1}\circ R_{\mathsf D}\circ {\bf X}_{\mathsf D}$, we immediately see from Proposition \ref{projection} that
$$
{\bf X}[\mathcal{J}_{\mathsf D}(\dm)]=i{\bf X}[\dm]= (0, i\dot{L}^1, \ldots,  i\dot{L}^k), \qquad  
\dm \in \mathcal{T}^{id}_{\mathsf D},
$$
that is, the map ${\bf X}_{\mathsf D}$ is compatible with the natural complex structure of $M(\mathbb C^d)^{k+1}$. In fact, this is the \emph{only} way of defining a complex structure on $\mathcal{T}^{id}_{\mathsf D}$ in such a way that the restriction of ${\bf X}_{\mathsf D}$ to $ \mathcal{T}^{id}_{\mathsf D}$ is a \emph{complex} linear map.

When endowed with the complex structure $\mathcal J_{\mathsf D}$, the space $\ident_{\mathsf D}$ becomes a complex linear space; this is Hilbert space with respect to the inner product induced by the Markov covariance:
\begin{equation}\label{eq.inner.product.t.id}
(\dot{\mathsf D}, \dot{\mathsf D}^\prime)_{\mathsf D} := 
\left( {\bf X}_{\mathsf D}(\dot{\mathsf D}) \, ,\,  {\bf X}_{\mathsf D}(\dot{\mathsf D} ^\prime) \right)_{\mathsf D}, \qquad \dot{\mathsf D}, \dot{\mathsf D}^\prime\in \mathcal{T}^{id}_{\mathsf D}.
\end{equation}
The real part of this form gives the Riemannian metric and QFI rate on the real linear tangent space $\ident_{\mathsf D}$ as discussed above. In addition, the imaginary part can be used to construct a representation of the canonical commutation relations (CCR) over $ \mathcal{T}^{id}_{\mathsf D}$, together with a distinguished Fock state whose statistical interpretation is discussed in section \ref{sec.lan}.

\begin{definition}[{\bf CCR algebra for identifiable parameters}] \label{def.ccr}
Let $(\mathcal{T}^{id}_{\mathsf D}, \mathcal{J}_{\mathsf D}) $ and $(\dot{\mathsf D}, \dot{\mathsf D}^\prime)_{\mathsf D}$ be the complex linear space, and respectively inner product defined above.  On $\mathcal{T}^{id}_{\mathsf D}$ we define the symplectic form 
$$
\sigma^{\mathsf D}(\dm,\dm'):= {\rm Im} (\dot{\mathsf D}, \dot{\mathsf D}^\prime)_{\mathsf D} = \sum_{i=1}^k {\rm Im} \, {\rm tr}[\rho_{ss}\dot L_i^*\dot L_i']
$$
We define the CCR algebra $CCR(\mathcal{T}^{id}_{\mathsf D}, \sigma^{\mathsf D})$ generated by unitary Weyl operators $W(\dot{\mathsf D})$ with $\dot{\mathsf D}\in \mathcal{T}^{id}_{\mathsf D}$ satisfying the relations
$$
W(\dm)W(\dm')=e^{i\sigma^{\mathsf D}(\dm,\dm')}W(\dm+\dm'), \quad W(-\dm)=W(\dm)^*, 
$$
\end{definition}

On $CCR(\mathcal{T}^{id}_{\mathsf D}, \sigma^{\mathsf D})$ we define the Gaussian state $\varphi$ determined by the characteristic function
$$
\varphi(W(\dot{\mathsf D}) ) = e^{-\frac 18 f^{\mathsf D} (\dm,\dm)} 
$$
where 
\begin{equation}\label{eq.fd}
f^{\mathsf D} (\dm,\dm') := 4 {\rm Re} (\dm,\dm')_{\mathsf D} =  {\rm Re} \sum_i {\rm tr}[\rho_{ss}\dot L_i^*\dot L_i'].
\end{equation}
By a standard construction  \cite{Petz}, the CCR algebra can be represented on the Fock space $\mathcal{F}_{\mathsf D}$ over the Hilbert space 
$(\mathcal{T}^{id}_{\mathsf D}, \mathcal{J}_{\mathsf D}, (\dot{\mathsf D}, \dot{\mathsf D}^\prime)_{\mathsf D})$, in such a way that that 
$\varphi(W(\dot{\mathsf D})) = \langle 0 | W(\dot{\mathsf D}) |0\rangle$, where $ |0\rangle \in \mathcal{F}_{\mathsf D}$ is the vacuum state. For simplicity of notation we have identified the Weyl operators $W(\dot{\mathsf D})$ with their Fock representation.

%%%%%%%%%%%%%%%%%%%%%%%
\section{Examples of parametric models}\label{sec.example}
%%%%%%%%%%%%%%%%%%%%%%%%%%
To illustrate the general theory we analyse several examples of one-parameter and multi-parameter models.
\subsection{One parameter models}
Let ${\mathsf D}_{\theta} : =(H, e^{-i\theta} L)$ be a one-parameter family, where we have chosen $m=1$ for simplicity. 
The corresponding one-dimensional tangent vector at ${\mathsf D}_{\theta =0}$ is
$
\dot{\mathsf D} = (0,iL ).
$
By applying equation \eqref{eq.derivative.psi} we find that the corresponding generator has differential equation
\begin{equation}\label{eq.generator.phase1}
dG_\theta(t) =  \sum_{i=1}^k \left[ j_t(L) dA^*(t) +  j_t(L^{*}) dA(t) + j_t(L^{*}L) dt \right]
\end{equation}
Then $\mathcal{E}^0_{\mathsf D} (\dot{\mathsf D}) =\left(- L^{*}L + \langle L^{*}L\rangle_{ss} \id \right)$, and 
${\bf X}_{\mathsf D}(\dot{\mathsf D}) = (\mathcal{E}^0_{\mathsf D} (\dot{\mathsf D}), iL )$. The QFI rate is 
$$
f^\theta= \mathrm{tr} \left[\rho_{ss}  \left(L + [L, \mathbb{W}^{-1} (L^*L -  \langle L^{*}L\rangle_{ss} \id )] \right)^2\right].
$$
Physically, this transformation can be implemented by placing a phase-shifter in each output channel, which gives each photon a phase shift $ e^{i\theta}$ \cite{MGGL}. This phase parameter is identifiable, and it is easy to see that 
$$
|\Psi^{\rm s+o}_\theta(t)\rangle  = \exp(-i\theta N(t))|\Psi^{\rm s+o}(t)\rangle 
$$
where $N(t)$ is the counting process associated to the number of photons up to time $t$ in the Bosonic environment. Equivalently, this can be written as 
$U^*(t) |\Psi^{\rm s+o}_\theta(t)\rangle = \exp(-i\theta N^{\rm out}(t)) |\phi\otimes \Omega\rangle $ where 
$N^{\rm out}(t) := U(t)^* N(t)U(t)$, is the output number of photons operator, whose differential form is 
\begin{equation}\label{eq.generator.phase2}
N^{\rm out}(t) =  dN(t) +  j_t(L) dA^*(t) +  j_t(L^{*}) dA(t) + j_t(L^{*}L) dt .
\end{equation}
By comparing \eqref{eq.generator.phase1} and \eqref{eq.generator.phase2} we see that the two generators are not identical. However, the difference is the term $dN(t)$ which annihilates the vacuum state, so the resulting \emph{action} of the generators \emph{is} identical. This illustrates that in general the generator is not unique but one can add terms which annihilate the vacuum, such as annihilation or number operator terms.

The second example we consider is that of the coupling constant, where $L_\theta = \theta L$, with unknown parameter 
$\theta\in \mathbb{R}$. The tangent vector is $\dot{\mathsf D} := (0, L)$, and $\mathcal{E}^0_{\mathsf D} (T) =0$. Therefore, ${\bf X}_{\mathsf D}(\dot{\mathsf D}) = (0, L )$ and the QFI rate is 
$$
f = \mathrm{tr} \left[\rho_{ss}  L^*L\right].
$$
which is simply the photon emission rate in the stationary regime.

In the third example we consider the model where the hamiltonian is known up to a multiplicative constant $H_\theta = \theta H$. The tangent vector is $\dot{\mathsf D} := (H, 0)$, and 
$\mathcal{E}^0_{\mathsf D} (T) =H - \langle H\rangle_{ss}$. 
Therefore, ${\bf X}_{\mathsf D}(\dot{\mathsf D}) = (H - \langle H\rangle_{ss} , 0 )$ and the QFI rate is 
$$
 \mathrm{tr} \left( \rho_{ss} [L , \mathbb{W}^{-1} (H - \langle H\rangle_{ss})  ]^*[L , \mathbb{W}^{-1} (H - \langle H\rangle_{ss})  ] \right).
$$

\subsection{Simplest multiparameter setting}\label{sec:multip_ex}

The geometric aspects are naturally trivial in a one-parameter model. In order to illustrate the full use of the theory developed above, we now consider the simplest nontrivial setting with $d=2$ and $k=1$, that is, $\manifprim$ is the open subset of $\{ (H, L)\mid H\in M_s(\mathbb C^2), \, L\in M(\mathbb C^2)\}$ consisting of ergodic dynamical parameters. The dimension of this manifold is $12$, and the number of identifiable parameters is $8$. Hence, full treatment of this simplest setting is still rather tedious, and we settle for looking at points on a physically relevant submanifold, extended suitably so as to allow for the full description of the relevant geometry. The model is the following $7$-dimensional submanifold:
\begin{align*}
H_{\Delta,\Omega,{\bf v}} &= \frac 12 \begin{pmatrix}\Delta & \Omega+v_1-iv_2 \\ \Omega+v_1+iv_2 & -\Delta +v_0 \end{pmatrix}, & L_{\alpha,\theta,{\bf v}} &= \alpha e^{i\theta} \begin{pmatrix}(iv_1-v_2)/\alpha^2 & 1+iv_0/\alpha^2 \\ 0 & (-iv_1+v_2)/\alpha^2 \end{pmatrix}.
\end{align*}
Here the three parameters ${\bf v}$ are auxiliary, and the rest have physical meaning at ${\bf v}=0$. In fact, we are looking at the off-resonant laser-driven two-level system with Rabi frequency $\Omega$ and detuning $\Delta$, in contact with a zero-temperature heat bath, with emission rate $\alpha^2$, and emitted photons monitored on the environment. In addition, we include the above discussed phase shift $\theta$ to the emitted photons. The auxiliary parameters are chosen such that their tangent vectors lie in the identifiable subspace at ${\bf v}=0$; their span is needed in order to describe the horizontal projections of the physical tangent vectors, as we will see below.

The quantum Fisher information associated with the three parameters $(\Delta,\Omega, \alpha)$ of this model has been compared with particular measurement strategies \cite{molmer}; we emphasise geometric aspects not discussed there, and have also included the phase parameter $\theta$. The main idea is to demonstrate how the rather complicated expressions of the Fisher information arise from considerably simpler geometric ingredients as a result of straightforward linear algebra. This provides insight on the structure of the physical system from the operational identification point of of view, and may eventually be useful in developing global estimation strategies in analogy to classical cases (see e.g. \cite{hanzon}).

\begin{figure}[h]
\begin{center}
\includegraphics[width=8cm]{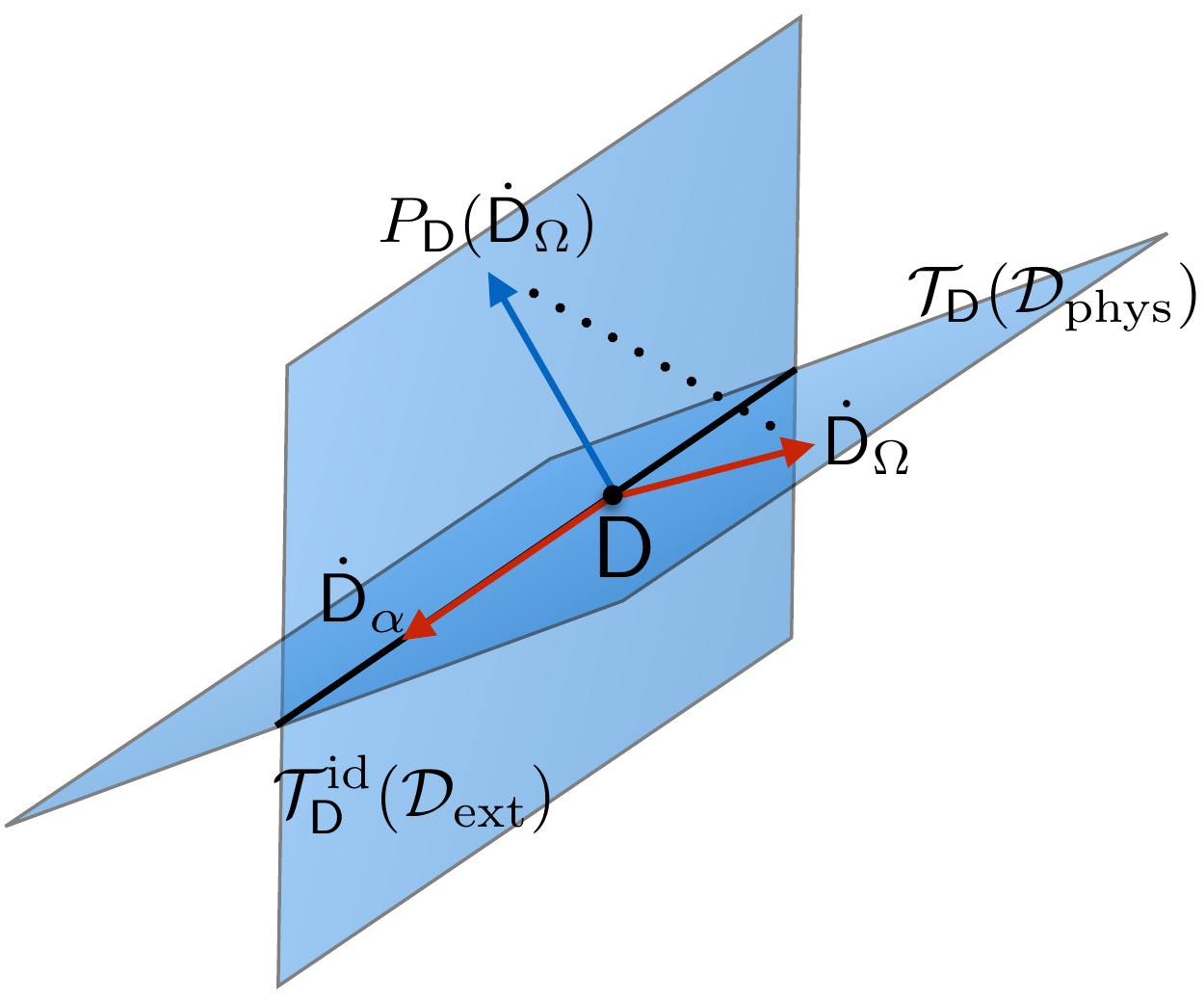}\hspace{2cm}
\end{center}
\caption{Sketch of the information geometry of two parameters in the driven two-level system. The tangent space of the physical manifold contains the directions of the decoupling parameter $\alpha$ and the driving frequency $\Omega$ (in red). Only the $\alpha$-direction lies in the identifiable subspace which supports the Fisher information metric; $\Omega$-direction needs to be projected there via the horizontal projection $P_{\mathsf D}$ of the principal connection.}
 \label{fig.geometry.example}
\end{figure}

Accordingly, we let $\manif_{\rm ext}$ denote the whole (extended) $7$-dimensional manifold, and $\manif_{\rm phys}=\{\mathsf D\in \manif_{\rm ext}\mid {\bf v}=0\}$ the physical submanifold. The dynamical parameters $\mathsf D\in \manif_{\rm phys}$ are ergodic except at special points; the unique stationary state is
$$
\rho_{ss}=\frac{\Omega}{\gamma}\left(
\begin{array}{cc}
 \gamma/\Omega-\Omega & \xi   \\
 \overline\xi  & \Omega \\
\end{array}
\right),
$$
where $\gamma=\alpha ^4+4 \Delta ^2+2 \Omega ^2$ and $\xi = 2 \Delta+i \alpha ^2$.
%We write the full $12$-dimensional tangent space in the form
%$$
%\mathcal T={\rm span}_{\mathbb R} \{\id, \sigma_x,\sigma_y,\sigma_z\}\times {\rm span}_{\mathbb R} \{E_{ij}, iE_{ij}\}_{i,j =0,1},
%$$

\subsubsection{The Lie algebra and unidentifiable directions} We begin the description of the geometry by finding the unidentifiable part of the tangent space of the physical manifold. We let $E_{ij}$ denote the natural basis matrices of $M(\mathbb C^2)$, so that e.g. $\sigma_z=E_{00}-E_{11}$. Note that matrices such as $E_{ij}$ and $iE_{ij}$ are linearly independent, since we look at the \emph{real linear} version of $M(\mathbb C^2)$. We parametrise the Lie algebra by
$$
\mathfrak g=\{X[{\bf w},r]\mid {\bf w}\in\mathbb R^3 \, ,r\in \mathbb R \},
$$
where $X[{\bf w},r]=(-iK_{\bf w},r)$, with $K_{\bf w}=f({\bf w})\id + {\bf w}\cdot \sigma$, and the ``gauge" function $f({\bf w})$ is irrelevant for the action of Lie algebra on the parameter manifold. As discussed above, this gauge can be fixed so that $K_{\bf w}$ has zero mean; this is convenient since the back-action given by the connection one-form will automatically have this gauge.
%\begin{align*}
%H_{\Delta,\Omega} &= \frac 12 \begin{pmatrix}\Delta & \Omega \\ \Omega & -\Delta \end{pmatrix}, & L_{\alpha,\theta} &= \alpha e^{i\theta} \begin{pmatrix}0 & 1 \\ 0 & 0 \end{pmatrix}.
%\end{align*}
 The zero mean gauge for the Lie algebra is
$$f({\bf w})=-\gamma^{-1}\left(4\Delta  (w_1 \Omega +\Delta  w_3)-2 \alpha ^2 w_2 \Omega +\alpha ^4 w_3\right).$$
In order to describe the action of the Lie algebra on the tangent space, we compute explicitly the image of the push-forward $(\mathsf D_{\Delta,\Omega,\alpha,\theta})_*$, corresponding to the unidentifiable part of the tangent space. Omitting the subscript for simplicity, the tangent vectors induced by basic rotations and the unidentifiable phase are
\begin{align*}
\dot{\mathsf D}^{nonid}_x &:=\mathsf D_*(X[(1,0,0),0]) = (-\Delta \sigma_y, i\alpha e^{i\theta}\sigma_z) &
\dot{\mathsf D}^{nonid}_y &:=\mathsf D_*(X[(0,1,0),0]) = (\Delta \sigma_x-\Omega\sigma_z, -\alpha e^{i\theta} \sigma_z)\\
\dot{\mathsf D}^{nonid}_z &:=\mathsf D_*(X[(0,0,1),0]) = (\Omega \sigma_y, -2\alpha ie^{i\theta}E_{01}) & \dot{\mathsf D}^{nonid}_{phase} &:=\mathsf D_*(X[(0,0,0),1]) = (\id,0).
\end{align*}

\subsubsection{Symplectic structure of the identifiable subspace} The $8$-dimensional identifiable subspace supports the principal connection and Markov covariance. It is characterised by the condition $\mathcal E(\dot{{\mathsf D}})=0$. Using this condition, one can easily verify that the following vectors belong to this subspace: $(E_{11}, ie^{i\theta} E_{01}/\alpha)$, $(\sigma_x/2, ie^{i\theta}\sigma_z/\alpha)$,  $(\sigma_y/2, -e^{i\theta}\sigma_z/\alpha)$, and $(0, e^{i\theta}E_{01}/\alpha)$. Here the first three are exactly the tangent vectors of the auxiliary parameters ${\bf v}$. It turns out (see below) that the rest of the identifiable subspace is irrelevant for the physical model. The Markov covariance can be directly computed on this part of of the identifiable subspace; the corresponding matrix $M$ is given by $$M_{ij}:=(\dot D_{i},\dot D_{j})_{\mathsf D}={\rm tr}[\rho_{ss} \dot{L}_{i}^*\dot L_{j}],$$
where $j$ indexes the basis vectors. The basis is not symplectic, and hence the Markov covariance is not diagonal. There are several choices for a symplectic basis, obtainable from the above one via standard diagonalisation procedures. One fairly simple choice is
\begin{align*}
\dot D^q_1&=\left(0,\begin{pmatrix}0 & 1\\ 0 & 0\end{pmatrix}\right), & \dot D^p_1&=\frac{\alpha^2\gamma}{\Omega^2}\left(\begin{pmatrix}0 & 0\\ 0 & -1\end{pmatrix}, \frac{e^{i \theta}}{i \alpha}\begin{pmatrix}0 & 1\\ 0 & 0\end{pmatrix}\right)\\
\dot D^q_2&=\left(\begin{pmatrix}0 & -i\Omega/2\\ i\Omega/2 & \alpha^2\end{pmatrix}, \frac{e^{i \theta}}{\alpha}\begin{pmatrix}-\Omega & \xi\\ 0 & \Omega\end{pmatrix}\right), & \dot D^p_2&=\frac{\alpha^2\gamma}{2\Omega^4}\left(\begin{pmatrix}0 & \Omega/2\\ \Omega/2 & -2\Delta\end{pmatrix}, \frac{e^{i \theta}}{i \alpha}\begin{pmatrix}-\Omega & \xi\\ 0 & \Omega\end{pmatrix}\right),
\end{align*}
where the labels $q$ and $p$ refer to canonical coordinates. Indeed, in this basis, we have $M=F + i \Sigma$, where the symplectic form is the standard one,
%$$\Sigma = \begin{pmatrix}0 & -1\\ 1 & 0\end{pmatrix}\oplus \begin{pmatrix}0 & -1\\ 1 & 0\end{pmatrix},$$
and the Fisher information matrix is diagonal:
%$$F = \begin{pmatrix}\Omega^2/(\alpha^2\gamma) & 0\\ 0 & \alpha^2\gamma/\Omega^2\end{pmatrix}\oplus \begin{pmatrix}2\Omega^4/(\alpha^2\gamma) & 0\\ 0 & \alpha^2\gamma/(2\Omega^4)\end{pmatrix}.$$
\begin{align*}
F &= \begin{pmatrix}\Omega^2/(\alpha^2\gamma) & 0\\ 0 & \alpha^2\gamma/\Omega^2\end{pmatrix}\oplus \begin{pmatrix}2\Omega^4/(\alpha^2\gamma) & 0\\ 0 & \alpha^2\gamma/(2\Omega^4)\end{pmatrix}, & \Sigma &= \begin{pmatrix}0 & -1\\ 1 & 0\end{pmatrix}\oplus \begin{pmatrix}0 & -1\\ 1 & 0\end{pmatrix}
\end{align*}
Here the pair $(\dot D^q_1,\dot D^p_1)$ corresponds to the first block, and $(\dot D^q_2,\dot D^p_2)$ to the second. 

\subsubsection{The connection form on the physical manifold} In order to investigate the geometry of the physical manifold, we first find the tangent space $\mathcal T(\manif_{\rm phys})\subset \mathcal T$, consisting of meaningful directions in the model. It is the span of the following tangent vectors:
\begin{align*}
\dot D_\Omega &= (\tfrac 12 \sigma_x,0), & \dot D_\Delta &= (\tfrac 12 \sigma_z, 0),\\
\dot D_{\alpha} &= (0,e^{i\theta}E_{01}), & \dot D_{\theta} &= \alpha (0,ie^{i\theta}E_{01}).
\end{align*}
These vectors span a $4$-dimensional subspace of $\mathcal T$; note that the dependence on the manifold point only comes with the phase parameter $\theta$. We can now determine the connection one-form on the physical manifold; this is a straightforward computation involving the inversion of the generator $\mathbb W$ on the zero-mean subspace. The result is
$$
\omega = \omega(\dot D_\Delta)d\Delta + \omega(\dot D_\Omega)d\Omega + \omega(\dot D_\alpha)d\alpha +\omega(\dot D_\theta)d\theta,
$$
where the components are given by
\begin{align*}
\omega(\dot D_\Delta) &= \gamma^{-1}X[-\alpha^{-2}(4 \Delta \Omega, 2\Omega\alpha^2, |\xi|^2),|\xi|^2/2] \\
\omega(\dot D_\Omega) &= 2\gamma^{-1}X[-\alpha^{-2}(\alpha^4+2\Omega^2,-2\Delta\alpha^2,2\Delta\Omega),\Delta\Omega]\\
\omega(\dot D_\alpha) &= 0 \\
\omega(\dot D_\theta) &= -\gamma^{-1}X[(4\Delta\Omega,2\alpha^2\Omega,|\xi|^2), \alpha^2\Omega^2].
\end{align*} 
Similarly, we could determine the connection on the extended manifold; however, the result is considerably more complicated and is not very illuminating. Using the above components together with the push-forward $\mathsf D_*$, we obtain the horizontal projection of the physical tangent space on the identifiable subspace; a direct computation shows that they are all in the above four-dimensional subspace, with components in the above chosen symplectic basis given by
\begin{align*}
P(\dot D_\Delta) &= \frac {1}{(\gamma\alpha)^2} \begin{pmatrix} -4 \alpha ^4\gamma \Delta \\ 2 \alpha ^4 \Omega ^2 \end{pmatrix}\oplus\begin{pmatrix} \alpha ^4\gamma \\ 4 \Delta  \Omega ^4 \end{pmatrix},
&P(\dot D_\Omega) &= \frac {1}{(\gamma\alpha)^2\Omega} \begin{pmatrix}-\alpha ^4\gamma \left(\gamma -8 \Delta ^2\right) \\-4 \alpha ^4 \Delta  \Omega^2\end{pmatrix}\oplus\begin{pmatrix}-2 \alpha ^4 \Delta\gamma\\ 2 \Omega ^4 \left(\alpha ^4+2 \Omega ^2\right)\end{pmatrix},\\
P(\dot D_\alpha) &= \alpha\begin{pmatrix} 1 \\ 0 \end{pmatrix}\oplus\begin{pmatrix} 0\\ 0\end{pmatrix}, &
P(\dot D_\theta) &= \frac{1}{\gamma^2} \begin{pmatrix}-4 \gamma\alpha ^4 \Delta\\ \Omega^2 \left(2 \alpha ^4-\gamma \right)\end{pmatrix}\oplus\begin{pmatrix} \alpha ^4\gamma\\ 4 \Delta  \Omega ^4\end{pmatrix}.
\end{align*}

%\begin{align*}
%P(\dot D_\Delta) &= \frac\Omega\gamma \begin{pmatrix}0\\ 2\Omega-\gamma/\Omega\\ 2\Delta\\ \alpha^2\end{pmatrix}, &P(\dot D_\Omega) &= \frac1\gamma \begin{pmatrix}0\\ -4\Delta\Omega\\ 2\Omega^2+\alpha^4\\-2\Delta\alpha^2\end{pmatrix} & P(\dot D_\theta) &= \frac{\alpha^2\Omega}{\gamma} \begin{pmatrix}0\\ 2\Omega\\ 2\Delta\\ \alpha^2\end{pmatrix}, & P(\dot D_\alpha) &= \alpha\begin{pmatrix}1\\ 0\\0 \\0\end{pmatrix}.
%\end{align*}
%We now observe the following geometric relationship between detuning $\Delta$ and the phase parameter $\theta$:
%$$
%P(\dot D_\theta)=\alpha^2 \left(P(\dot D_\Delta)+\dot D_0^{id}\right).
%$$
\subsubsection{The Fisher information of the physical parameters} We can now easily compute the Fisher information by sandwiching the diagonal matrix $F$ with the above column vectors. The result is 
\begin{align*}
f_\Delta &= \frac{2 \Omega ^2 |\xi|^2\left(2 \alpha ^4+\Omega ^2\right)}{\alpha ^2 \gamma^3} & f_\Omega &= \frac{\alpha ^{12}+\alpha ^8 \left(8 \Delta ^2+6 \Omega ^2\right)+4 \alpha ^4 \left(4 \Delta ^4-2 \Delta ^2 \Omega ^2+3 \Omega ^4\right)+8 \Omega ^6}{\alpha ^2 \gamma^3}\\
f_{\alpha} &= \frac{\Omega^2}{\gamma} & f_{\theta} &=\frac{\alpha ^2 \Omega ^2 \left(-2 \Omega ^2 \left(\alpha ^4-12 \Delta ^2\right)+|\xi|^4+4 \Omega ^4\right)}{\gamma^3}.
\end{align*}
Note that the reason why these expressions are rather complicated is partially due to the fact that the physical directions do not lie in the identifiable subspace, but need to be projected there.
\subsubsection{The canonical coordinates of the physical parameters} This can be read off from the components of the above column vectors; for instance, at $\Delta =0$ (resonance) we have simply
%\begin{align*}
%W(P(\dot D_\Delta)) &= \exp \frac{i}{\gamma}\left( -4\alpha^2\Delta P_1- 2\alpha^2\Omega^2\gamma^{-1} Q_1+\alpha^2P_1-4\Delta\Omega^4\alpha^{-2}\gamma^{-1}Q_2\right)\\
%W(P(\dot D_\Omega)) &= \exp \frac{i}{\gamma}\left( -\alpha^2\Omega^{-1}(\gamma-8\Delta^2)P_1+4\alpha^2\Delta\Omega\gamma^{-1}Q_1-2\alpha^2\Delta \Omega^{-1}P_1+ 2\Omega^3(\gamma-4\Delta^2)\alpha^{-2}\gamma^{-1}Q_2\right)\\
%W(P(\dot D_\alpha)) &= \exp i\alpha P_1\\
%W(P(\dot D_\theta)) &= \exp \frac i \gamma \left(-4\alpha^2\Delta P_1-\Omega^2(2\alpha^4-\gamma)\gamma^{-1} Q_1 +\alpha^4 P_2 -4\Delta\Omega^4\gamma^{-1} Q_2\right).
%\end{align*}
%The $\alpha$-direction (which is the only one on the identifiable subspace) is the only one corresponding to a single canonical operator $P_1$; the rest contain all components. However, the situation simplifies considerably when one of the parameters is zero. In particular, for $\Delta=0$ (resonance), we find
\begin{align*}
W(P(\dot D_\Delta)) &= \exp \frac{i}{\gamma}\left(- 2\alpha^2\Omega^2\gamma^{-1} Q_1+\alpha^2P_1\right) &
W(P(\dot D_\Omega)) &= \exp \frac{i}{\gamma}\left( -\alpha^2\Omega^{-1}\gamma P_1+ 2\Omega^3\alpha^{-2}Q_2\right)\\
W(P(\dot D_\alpha)) &= \exp i\alpha P_1 & 
W(P(\dot D_\theta)) &= \exp \frac i \gamma \left(-\Omega^2(\alpha^4-2\Omega^2)\gamma^{-1} Q_1 +\alpha^4 P_2\right).
\end{align*}

Using the connection form, one could further investigate the global structure of the information geometry, in terms of the curvature, geodesics and parallel transport. This would be relevant for some of the future lines of research mentioned in the introduction, but beyond the scope of this paper. We only note that for instance the curvature two-form can easily be determined by straightforward although somewhat tedious computer algebra; this shows in particular that the connection is not flat, i.e. the horizontal bundle is not integrable.
\section{Local asymptotic normality in the multiparameter setting}\label{sec.lan}
%%%%%%%%%%%%%%%%%%%%%%%%%%%%%%%%%%%
In subsection \ref{subsec.QFI} we showed that the quantum Fisher information of the output state increases linearly in time as 
$F^\theta(t) \approx t f^\theta$, and we identified the QFI rate $f^\theta$ as the real part of the Markov covariance matrix of tangent vectors corresponding to changes in the parameter $\theta$, cf. equations \eqref{eq.qfi.rate}, \eqref{eq.qfi.rate2}. In this section we extend the statistical analysis by proving that the \emph{output state is asymptotically Gaussian} in the limit of large times, in a sense which will be defined precisely below. In effect this means that the output states for parameters in a local neighbourhood of a given dynamical parameter $\mathsf D_0$, can be approximated by a limit model which consists of a family of pure Gaussian states of the CCR algebra $CCR(\mathcal{T}^{id}_{\mathsf{D}_0}, \sigma^{\mathsf{D}_0})$ defined above, with mean determined by local changes in the unknown parameter, and covariance given by the QFI rate. Before stating the asymptotic normality result, we briefly review the general statistical concepts involved in its formulation. For more details about the general theory of quantum statistical models we refer to \cite{GutaJencova,GutaKiukas}.

%%%%%%%%%%%%%%%%%%%%%%%%%%%%%%%%
\subsection{Convergence of quantum statistical models} \label{sec.statistical.models}
%%%%%%%%%%%%%%%%%%%%%%%%%%%%%%%%%%%%%
A \emph{quantum statistical model} over the parameter space 
$\Theta\subset\mathbb{R}^k$ is a family $\mathcal{Q} := \{\rho^\theta:\theta\in \Theta\}$ of quantum states on a fixed Hilbert space $\mathcal{H}$, which are indexed by an unknown parameter $\theta\in \Theta$. 
We are interested in characterising the asymptotic behaviour of an \emph{ordered set} of statistical models, in particular the convergence to a limit model. Such problems arise in quantum state estimation where the statistical models consist of ensembles of  identically prepared systems, and the order parameter is the size of the ensemble \cite{KahnGuta}, or in the estimation of dynamical parameters (system identification) where time plays the role of ``sample size". The latter case is the topic of this paper.

We start by noting that the space of statistical models is equipped with a natural notion of equivalence. Two models 
$\mathcal{Q}_1 := \{\rho_1^\theta:\theta\in \Theta\}$ and $\mathcal{Q}_2 := \{\rho_2^{\theta}:\theta\in \Theta\}$ 
(possibly on different Hilbert spaces) are statistically equivalent if there exist quantum channels $T,S$ 
between the appropriate state spaces such that 
\begin{equation}\label{eq.equivalence}
T(\rho^\theta_1) = \rho^{\theta}_2 ,\qquad S( \rho^{\theta}_2) = \rho^\theta_1
\end{equation}
for all $\theta\in \Theta$.
A consequence of the equivalence is that the probability distribution of any measurement $M$ on $\mathcal{Q}_1$ can be reproduced by a measurement on $\mathcal{Q}_2$ obtained by applying $S$ followed by $M$, and vice versa.  Therefore the two models have exactly the same optimal risks (figures of merit) for any statistical problem concerning the parameter $\theta$. In the special case when $\mathcal{Q}_1$ and $\mathcal{Q}_2$ are pure state models, it can be shown \cite{Chefles} that the models are equivalent if and only if there exist representative vectors (i.e. $\rho^\theta_1 = |\psi^\theta_1 \rangle\langle \psi^\theta_1| ,\rho^\theta_2 = |\psi^\theta_2 \rangle\langle \psi^\theta_2| $)  such that the overlaps of all pairs of vectors in the two models coincide
$$
\langle \psi^\theta_1 |  \psi^{\theta^\prime}_1\rangle=  \langle\psi^\theta_2 |  \psi^{\theta^\prime}_2\rangle,\qquad \theta,\theta^\prime\in \Theta.
$$ 
This shows that the intrinsic statistical properties of the model are encoded in the overlaps, up to an ambiguity in choosing the phases.

More generally, the theory of \emph{quantum sufficiency} \cite{PetzJencova} deals with the situation when the models are related by the channel transformation \eqref{eq.equivalence} only in one direction, so that one of the models is more informative that the other. However, such a relationship is still rather restrictive; in asymptotic statistics one is often interested in approximating a given model by a ``simpler" one which is ``close" to it in  a statistical sense. The above discussion suggests two ways of formalising this idea.  The first one is to define a notion of distance between models \cite{GutaJencova}, inspired by the classical theory developed by Le Cam \cite{LeCam}
\begin{definition}\label{def.strong.convergence}
Let $\mathcal{Q}_1$ and $\mathcal{Q}_2$ be two quantum statistical models over $\Theta$, defined as above. The deficiencies of one model with respect to the other are defined as
$$
\delta(\mathcal{Q}_1, \mathcal{Q}_2)  = 
\inf_T \sup_{\theta\in \Theta} \| T ( \rho_1^\theta) - \rho_2^\theta\|_1, \qquad
\delta(\mathcal{Q}_2, \mathcal{Q}_1)  = \inf_S \sup_{\theta\in \Theta} \| S ( \rho_2^\theta) - \rho_1^\theta\|_1,
$$
where the infima are taken over all quantum channels between the appropriated spaces, and the distance is given by the trace-norm  
$\|\tau{}\|_1 := {\rm Tr}(|\tau|)$. 
The Le Cam distance between the models $\mathcal{Q}_1$ and $\mathcal{Q}_2$ is defined as
$\Delta(\mathcal{Q}_1, \mathcal{Q}_2)= \max (\delta(\mathcal{Q}_1, \mathcal{Q}_2) ,\delta(\mathcal{Q}_2, \mathcal{Q}_1) )$.

A set of model $\mathcal{Q}_t:=\{\rho_t^\theta \,:\, \theta\in \Theta \}$ indexed by $t$ in $\mathbb{N}$ or $\mathbb{R}$  converges strongly (or in the sense of Le Cam) to a limit model $\mathcal Q:=\{\rho^\theta \,:\, \theta\in \Theta \}$ if $\Delta(\mathcal{Q}, \mathcal{Q}_t) \rightarrow 0$ as $t\rightarrow \infty$.
\end{definition}
It can be shown that two models are equivalent if and only if the Le Cam distance between them is zero. More generally, the Le Cam distance provides an upper bound to the the difference between optimal risks of statistical decision problems with bounded
loss functions \cite{GutaJencova}. Furthermore, the convergence to a simpler limit model can be used to identify asymptotically optimal measurement procedures for a given statistical decision problem, e.g. state estimation. This can be done by mapping the state $\rho^\theta_t$ through the channel $T_t$ onto the space of the limit model, followed by applying the optimal measurement for the limit model. An instance of this the phenomenon of local asymptotic normality for state estimation \cite{KahnGuta} which we illustrate below in the simplified setup of pure states. For this we formulate the second notion of convergence of models, based on the fidelity of the state vectors.

\begin{definition}[{\bf weak convergence of pure states statistical models}]
Let $\mathcal{Q}_t:= \{ \rho^\theta_t  :\theta\in \Theta\} $ be a set of pure states quantum statistical 
models on Hilbert spaces $\mathcal{H}_t$ over parameter space $\Theta\subset \mathbb{R}^k$, where the index $t$ is chosen from $\mathbb{N}$ or $\mathbb{R}$. The family $\mathcal{Q}_t$ is said to converge weakly to a model  $\mathcal{Q}:= \{ \rho^\theta :\theta\in \Theta\}$ on a Hilbert space $\mathcal{H}$, if there exists a choice of representative vectors (i.e. $\rho^\theta_t = |\psi^\theta_t \rangle\langle \psi^\theta_t| ,\rho^\theta_t = |\psi^\theta_t \rangle\langle \psi^\theta_t| $) such that   
$$
\lim_{t\to\infty} \langle \psi^\theta_t |  \psi^{\theta^\prime}_t\rangle  = 
 \langle \psi^\theta |  \psi^{\theta^\prime} \rangle ,\qquad \theta,\theta^\prime\in \Theta.
$$ 
\end{definition}
Given that each statistical model is completely determined by the overlaps of pairs of vectors with different parameters, the definition captures the intuitive idea that two models are ``close" to each other if they have similar overlaps. 
%The phase ambiguity can be removed by considering the convergence of more general functionals of the type 
%${\rm tr}(\rho_t^{\theta_1} \dots \rho_t^{\theta_l})$, which are independent of the choice of vector representatives. It is clear that weak convergence as defined above implies the convergence of such functionals. Since the states are determined by vectors only up to a phase factors 
As a simple multidimensional example we consider the weak convergence of ensembles of identically prepared qubits to coherent states of a one mode continuous variables system, which is closely related to the theory of coherent spin states  \cite{Radcliffe}. Let  
$$
|\psi^{u}_n\rangle = \left[ \exp\left( \frac{i}{\sqrt{2n}}\left(u_y \sigma_x - u_x \sigma_y \right) \right) |0 \rangle \right]^{\otimes n}, \qquad u= (u_x,u_y)\in \mathbb{R}^2
$$
be a 2-dimensional family of i.i.d. qubit states obtained by rotating the basis vector $|0\rangle$ with generators given by the Pauli matrices $\sigma_x,\sigma_y$. Since the ensemble has size $n$, the statistical uncertainty in estimating rotation parameters is of the order of $n^{-1/2}$. It is then meaningful to restrict the attention to a shrinking region in the parameter space, and write the rotation parameters as $u/\sqrt{n}$ \cite{GutaKahn}. 
Due to the rescaling, the QFI of the ``local parameter" $u$ is a constant $2\times 2$ matrix 
$
f= 2 \id_2 
$ 
which plays a similar role to the QFI rate per unit of time defined in equations \eqref{eq.qfi.rate},\eqref{eq.qfi.rate2}.
We will now show that the sequence of local models $\mathcal{Q}_n = \{|\psi^u_n\rangle :u\in \mathbb{R}^2\}$
 converges weakly to the quantum Gaussian model $\mathcal{Q}= \{|u\rangle :u\in \mathbb{R}^2\}$, where $|u\rangle$ denotes the coherent state of a one mode continuous variables system with mean values for the canonical variables given by $\langle Q \rangle =u_x, \langle P\rangle =u_y$. Indeed, since 
$\langle 0 |\sigma_x|0\rangle = \langle 0 |\sigma_y|0\rangle =0$, by expanding in powers of $n^{-1/2}$ we obtain
$$
\lim_{n\to \infty} 
\langle\psi^{u}_n  |   \psi^{v}_{n}\rangle = 
\lim_{n\to \infty} 
\left( 
1 - \frac{1}{4n} \langle 0 |  (u_y \sigma_x - u_x \sigma_y)^2  | 0\rangle + o(n^{-1}) 
\right)^n
= \exp(\|u-v\|^2/4) = 
\langle  u|v \rangle, \qquad u,v\in \mathbb{R}^2.
$$
In particular, the limit model has QFI equal to $f =2 \id_2$ which is the inverse of the covariance of the vacuum state. 
Furthermore, one can show that the convergence holds also in the stronger sense of Le Cam, so that optimal estimation procedures for the limit Gaussian model can be ``pulled back" to asymptotically optimal measurements for the $n$ qubits ensemble. When the figure of merit  (or risk) is the mean square error $  \mathbb{E}( \|\hat{u} - u \|^2)$, the optimal measurement for estimating $u$ in the limit model is the heterodyne measurement; this can be seen as a noisy joint measurement of the canonical variables $Q$ and $P$ and it outcome $\hat{u}$ is an unbiased estimator of $u$ which has Gaussian distribution 
$N(u, \id)$. The variance of $\hat{u}$ can be written as $V= f^{-1} + \frac{1}{2}\id$ where the first term comes from the quantum covariance while the second is the minimum amount of ``noise" required for the simultaneous estimation of the means of the non-commuting observables $Q$ and $P$. Moreover, the estimator is normally distributed, which allows one to devise confidence regions for large $n$. By a Central Limit argument one can show that $Q$ and $P$ are the are the appropriately rescaled limits of the total spin observables $L_x$ and $L_y$ so that the optimal measurement is essentially a joint measurement of collective spin observables.

As we will see below, the key features of the i.i.d. qubit model are also present in the more complicated Markovian output setup, which we now proceed to consider.

%%%%%%%%%%%%%%%%%%%%%%%%%%
\subsection{Multiparameter LAN for quantum Markov processes}
%%%%%%%%%%%%%%%%%%%%%%%%%
We start by considering a completely general model in which all identifiable parameters are unknown, and show how this model can be approximated locally by a Gaussian model on the CCR algebra of Definition \ref{def.ccr}.This result  can then be applied to the situation where some prior information is available and we deal with a lower dimensional model.

\subsubsection{Estimation of identifiable parameters.} We will consider that the physical dynamics is governed by an unknown dynamical parameter $\mathsf{D}$; however, since the latter cannot be completely identified from the stationary output state, we will focus on the estimation of \emph{all identifiable parameters} given by the equivalence classes $[\mathsf{D}]\in \mathcal{P}$. Similarly to the i.i.d. setup described in section \ref{sec.statistical.models}, we will be interested in the properties of the quantum output statistical model in the limit of large times.  It is then meaningful to consider parameters $[\mathsf{D}]$ lying in a shrinking neighbourhood of a fixed point 
$[\mathsf{D}_0]$ in $\mathcal{P}$, whose size is of the order of the statistical uncertainty $t^{-1/2}$. We will formulate two convergence results: the first one concerns the weak convergence of the system-output state, while the second deals with the strong convergence of the output state. Since the latter depends only on the equivalence class $[\mathsf{D}]$, the strong convergence can be formulated solely in terms of the parameter space $\mathcal{P}= \manifprim/G$. On the other hand, since the system-output state is \emph{not} invariant over equivalence classes, the weak convergence depends on the specific choice of dynamical parameters for each equivalence class. Geometrically, this choice is determined by a  \emph{section} of the principal bundle, i.e. a smooth map 
$s:\mathcal{P} \to \mathcal{D}^{\rm erg}$ such that $\pi\circ s ([{\mathsf D}]) = [{\mathsf D}]$ for $[{\mathsf D}]$ in a local neighbourhood of $[{\mathsf D}_0]$. We will assume that $s$ is  ``horisontal" in the sense that the tangent space to 
$s(\mathcal{P})$ at ${\mathsf D}_0$ is the horisontal space $\mathcal{T}^{id}_{{\mathsf D}_0}$. The intuition here is that the 
changes along equivalence classes of dynamical parameters are not observable in the output state, while those along tangent vectors in $\mathcal{T}_{\mathsf{D}_0}^{id}$ describe all the identifiable parameters.
Although the theory can be developed in a coordinate-free way, for concreteness we consider a local coordinates chart in a neighbourhood of $[\mathsf{D}_0]$ defined by
$$
\mathcal{C}: \mathcal{P} \to \mathcal{O}\subset \mathbb{R}^{\delta^{id}} 
$$
where $ \mathcal{O}$ is a open ball centred at the origin, and $ \mathcal{C}([\mathsf{D}_0]) = 0$. For simplicity we denote the parameter with coordinate $u$ by $[\mathsf{D}]_u$ and the corresponding ``lifted" dynamical parameter  by 
$\mathsf{D}_u:= s([\mathsf{D}]_u )$. The tangent vectors 
$$
[\dot{D}]_a  :=\left.\frac{ \partial [\mathsf{D}]_u}{ \partial u_a}\right|_{u=0} , \qquad 
\dot{D}_a :=\left.\frac{ \partial \mathsf{D}_u}{ \partial u_a}\right|_{u=0}
\qquad  a=1 ,\ldots ,\delta^{id}
$$
form a basis of the space $\mathcal{T}_{[\mathsf{D}_0]} $, and respectively $\mathcal{T}^{id}_{{\mathsf D}_0}$.
With these notations we define two local statistical models corresponding to the system-output state and respectively the output state at time $t$. 
%%%%%%%
\begin{definition}\label{def.models.lan}
Let $s,\mathcal{C}, [\mathsf{D}]_u\in \mathcal{P}, \mathsf{D}_u\in \mathcal{D}^{\rm erg} $ 
be define as above with coordinate $u\in\mathcal{O}\subset \mathbb{R}^{\delta^{id}} $ in a  neighbourhood of the origin. 
The quantum statistical models of system-output state and respectively the output state at time $t$ are defined by 
$$
\mathcal{Q}_t:= \left\{ \left|\Psi^{\rm s+o}_{u/\sqrt{t}}(t)\right\rangle :  u  \in\mathcal{O}\subset \mathbb{R}^{\delta^{id}}   \right\},\quad
\tilde{\mathcal{Q}}_t:= \left\{ \rho^{\rm out}_{u/\sqrt{t}}(t) :  u \in \mathcal{O}\subset \mathbb{R}^{\delta^{id}} \right\}.
$$
with dynamics generated by $\mathsf{D}_{u/\sqrt{t}}$. Furthermore, we define the (pure states) Gaussian model 
$$
\mathcal{G}:= \left\{  \rho_u := |u\rangle \langle u | :   u\in \mathcal{O}\subset  \mathbb{R}^{\delta^{id}}  \right\} 
$$
where $|u\rangle =W( \sum_a u_a \dot{\mathsf{D}}_a )|\Omega\rangle$ is the coherent state of the CCR algebra 
$CCR(\mathcal{T}^{id}_{\mathsf D_0}, \sigma^{\mathsf D_0})$, cf. Definition \ref{def.ccr}.
\end{definition}
%%%%%%%%%%
The overlaps of the coherent states $|u\rangle$ can be computed from 
Definition \ref{def.ccr} and are given by
\begin{equation}\label{eq.overlap.coherent}
\langle u|u^\prime\rangle= 
\exp\left( - \frac{1}{8} (u-u^\prime)^T f^{\mathsf{D}_0} (u-u^\prime) + i u^T \sigma^{\mathsf{D}_0}  u^\prime   \right). 
\end{equation}
From this one can deduce that the Gaussian model $\mathcal{G}$ has quantum Fisher information $f^{\mathsf{D}_0}$, equal to the QFI rate of the system-output model $\mathcal{Q}_t$. 
The following theorem explains this connection by showing that the system-output and respectively output local models converge the  to the Gaussian limit model. From the practical viewpoint, this means that the linear QFI scaling with rate $f^{\mathsf{D}_0}$ is \emph{asymptotically achievable}, and moreover, the optimal measurement has asymptotically Gaussian distribution, cf.  \cite{KahnGuta} for a detailed discussion of the interpretation of local asymptotic normality. 
\begin{theorem}[{\bf local asymptotic normality}]\label{th.lan}
Let $\mathcal{Q}_t, \tilde{\mathcal{Q}}_t, \mathcal{G}$ be the system-output, output, and Gaussian models introduced in Definition \ref{def.models.lan}.  The following statements hold.

1) The pure states models $\mathcal{Q}_t$ converges weakly to the Gaussian model $\mathcal{G}$. More precisely, there exists a particular choice of the (unphysical) phase angle $\phi(u)$ of the coherent state $|u\rangle$ such that
\begin{equation}\label{eq.weak.lan.phase}
\lim_{t\to\infty} 
\left.\left\langle \Psi^{\rm s+o}_{u/\sqrt{t}}(t) \right| \Psi^{\rm s+o}_{u^\prime/\sqrt{t}}(t) \right\rangle = e^{i \phi(u')-i \phi(u)}
\langle u |u^\prime \rangle, \qquad u,u^\prime\in \mathcal{O}\subset \mathbb{R}^{\delta^{id}} . 
\end{equation}

2) The mixed states models $\tilde{\mathcal{Q}}_t$ converge strongly to the the Gaussian model $\mathcal{G}$, i.e. 
$\Delta(\tilde{\mathcal{Q}}_t, \mathcal{G})\to 0$. More precisely, there exist quantum channels $T_t, S_t$ such that
\begin{eqnarray*}
&&\lim_{t\to\infty} \sup_{u \in \mathcal{O}} \left\| T_t \left( \rho^{\rm out}_{u/\sqrt{t}} (t) \right) - \rho_u \right\|_1 =0\\
&&\lim_{t\to\infty} \sup_{u \in \mathcal{O}} \left\| S_t \left(\rho_u\right) -  \rho^{\rm out}_{u/\sqrt{t}} (t) \right\|_1 =0.
\end{eqnarray*}

\end{theorem}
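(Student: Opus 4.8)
The plan is to base both parts on the identity $G^0_{\theta,a}(t)=\sqrt t\,\mathbb F_t({\bf X}_{\mathsf D}(\dot{\mathsf D}_a))$ from Section~\ref{sec.unidentifiable.param}, which expresses the generators of parameter change as fluctuation operators whose second moments are given by the Markov covariance of Proposition~\ref{continuous_mc}. For Part~1 I would write the overlap as $\langle\varphi\otimes\Omega|R_t(u)^*R_t(u')|\varphi\otimes\Omega\rangle$, where $R_t(u):=U_0(t)^*U_{u/\sqrt t}(t)$ implements the local parameter shift along the horizontal section $s$. Differentiating in $u_b$ and using $\partial_{\theta_b}U_\theta=-iU_\theta G_{\theta,b}$ gives the right logarithmic derivative $R_t(u)^*\partial_{u_b}R_t(u)=-i\,t^{-1/2}G_{u/\sqrt t,b}(t)$; replacing $G$ by the centred generator $G^0$ of \eqref{eq.modified.generator} (the difference being annihilation increments that kill $|\Omega\rangle$ together with a c-number drift) identifies this derivative, to leading order, with $-i\,\mathbb F_t({\bf X}_{\mathsf D}(\dot{\mathsf D}_b))$. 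The discarded c-number drift integrates along the path to the unphysical phase $\phi(u)$ appearing in \eqref{eq.weak.lan.phase}.

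Next I would evaluate the overlap by quantum Ito calculus: integrating the right logarithmic derivative (equivalently, applying the Ito rule \eqref{itoformula} to $R_t(u)^*R_t(u')$) yields a closed expression whose large-$t$ limit is governed \emph{solely by the second moments} of the fluctuation operators — so that the full central limit theorem, listed as future work in the introduction, is not needed here. As $u/\sqrt t\to 0$ the relevant covariance converges to $(\sum_a u_a\dot{\mathsf D}_a,\sum_b u'_b\dot{\mathsf D}_b)_{\mathsf D_0}$: its real part reproduces the Gaussian weight $-\tfrac{1}{8}(u-u')^Tf^{\mathsf D_0}(u-u')$, while its imaginary part, produced by the non-commuting creation increments via $dA_i\,dA_j^*=\delta_{ij}\,dt$, reproduces the symplectic phase $iu^T\sigma^{\mathsf D_0}u'$. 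Comparison with the coherent-state overlap \eqref{eq.overlap.coherent} then gives \eqref{eq.weak.lan.phase}, with uniformity over the bounded set $\mathcal O$ following from $O(t^{-1/2})$ bounds on the remainder terms.

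For Part~2 I would exploit that $|\Psi^{\rm s+o}_{u/\sqrt t}(t)\rangle$ is a purification of $\rho^{\rm out}_{u/\sqrt t}(t)$, whereas the target $\rho_u=|u\rangle\langle u|$ is already pure. Ergodicity forces the finite-dimensional system factor to relax to $\rho_{ss}$ and to \emph{decouple} from the information-carrying output fluctuations; concretely, I would strengthen Part~1 to show that the system-output vector converges in norm to $|u\rangle\otimes|\chi\rangle$ for a $u$-independent reference vector $|\chi\rangle$ carrying the non-identifiable (gauge) degrees of freedom. The channels are then built from the asymptotic isometry $V_t:\mathcal F_{\mathsf D_0}\to\mathcal F$ identifying the coherent states $|u\rangle$ with the limiting Gaussian sector of the output: $S_t$ embeds $\rho_u$ through $V_t$, while $T_t$ applies the conditional expectation onto this sector followed by $V_t^*$. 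Tracing out the finite-dimensional system converts the norm-convergence of the purifications into the trace-norm convergence $\Delta(\tilde{\mathcal Q}_t,\mathcal G)\to 0$, uniformly in $u\in\mathcal O$.

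The main obstacle is exactly this last upgrade, from convergence of inner products to trace-norm convergence of the \emph{mixed} output states. The crux is the decoupling: one must show that the system, once relaxed to $\rho_{ss}$, retains no asymptotic information about the local parameter $u$, and that the output state concentrates on the Gaussian sector spanned by the fluctuation modes, so that $V_t$ is asymptotically isometric onto it. Establishing these statements \emph{uniformly} in $u$ over the compact ball $\mathcal O$ — using ergodicity to quantify both the relaxation of the system and the mixing of the output — is the technical heart of the argument, and is where the continuous matrix product structure of $\rho^{\rm out}(t)$ must be exploited in detail.
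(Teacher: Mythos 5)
Your Part 1 follows the same basic strategy as the paper---expand the overlap of the locally shifted system-output states to second order in $u/\sqrt t$ and identify the Gaussian limit---but the step you wave through (``the large-$t$ limit is governed solely by the second moments'') is precisely the nontrivial analytic content. Applying the Ito rule to $R_t(u)^*R_t(u')$ and taking the vacuum expectation does not reduce the overlap to a moment computation; it produces the semigroup representation $\langle\varphi|e^{t\mathbb W_{\mathsf D_{u/\sqrt t},\mathsf D_{u'/\sqrt t}}}(\id)|\varphi\rangle$ with the off-diagonal generator $\mathbb W_{\mathsf D,\mathsf D'}$, and one then needs a second-order (Trotter--Kato type) perturbation theorem for the singularly perturbed semigroup $\exp\bigl(t(\mathbb W_{\mathsf D_0}+t^{-1/2}\mathbb L_1+\tfrac{1}{2}t^{-1}\mathbb L_2+\cdots)\bigr)$ to conclude convergence to $e^{f(u,u')}\id$ with $f$ built from $\mathbb L_2[u,u'](\id)$ and $\mathbb L_1\circ\mathbb W_{\mathsf D_0}^{-1}\circ\mathbb L_1(\id)$; this is Theorem 2.2 of \cite{CatanaGutaBouten} and cannot be bypassed by an appeal to second moments alone. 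Two related inaccuracies: the existence of the limit requires $\mathbb L_1[u,u'](\id)=0$, which holds only because the section is horizontal ($\mathcal E_{\mathsf D_0}(\dot{\mathsf D}_a)=0$) --- you never invoke this, and without it the exponent diverges like $\sqrt t$; and the phase $\phi(u)=u^TSu$ is not the integral of a discarded first-order c-number drift (that drift vanishes by horizontality) but a genuinely second-order effect coming from the second derivatives $\ddot H_{aa'},\ddot L^i_{aa'}$ of the section inside $\mathbb L_2[u,u'](\id)$.

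For Part 2 your picture is structurally wrong. The output state does not ``concentrate on a Gaussian sector'' tensored with a $u$-independent reference vector: writing $\rho_{ss}=\sum_m\Lambda_m|e_m\rangle\langle e_m|$, the stationary output is the mixture $\sum_{m,m'}\Lambda_m|\psi_{mm'}(t)\rangle\langle\psi_{mm'}(t)|$ of $d^2$ conditional pure components, and these components become mutually orthogonal exponentially fast (uniformly in $u\in\mathcal O$) rather than collapsing onto a single vector; each component separately converges weakly to the coherent state $|u\rangle$ up to a phase. The channel $T_t$ therefore has to first distinguish which component it has received and only then map it to $|u\rangle$ --- a step your ``conditional expectation onto the Gaussian sector followed by $V_t^*$'' does not supply. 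Moreover, the upgrade from uniform convergence of inner products to the existence of asymptotic isometries realising trace-norm convergence is itself a nontrivial general lemma (Lemma 5 and Theorem 7 of \cite{GutaKiukas}), not a consequence of ergodicity or of the purification structure; you correctly flag this as the technical heart of the argument, but you leave it entirely unproven, which is where your proposal stops short of a proof.
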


In the reminder of this section we give the main idea of the proof and discuss the physical interpretation. The technical details can be found in the Appendix. Recall that the system-output state is given by $|\Psi^{\rm s+o}_{\mathsf{D}}(t)\rangle=  U_{\mathsf{D}} (t) |\varphi\rangle\otimes |\Omega\rangle$ where $U_\mathsf{D} (t)$ is the unitary defined by the QSDE \eqref{QSDE}. By using Ito calculus it can be shown \cite{CatanaGutaBouten} that the overlaps of system-output states for different dynamical parameters can be expressed in terms of a contractive (non-CP) semigroup
%%%%%%
$$
\left.\left\langle \Psi^{\rm s+o}_{\mathsf{D}}(t) \right| \Psi^{\rm s+o}_{\mathsf{D}^\prime }(t)\right\rangle 
=
\left. \left\langle \varphi \right| 
e^{ t \mathbb{W}_{\mathsf{D}  ,\mathsf{D}^\prime }} (\id) 
\left|\varphi \right\rangle \right. .
$$
%%%%%%%%%
where $W_{\mathsf{D}, \mathsf{D}^\prime}$ is the ``off-diagonal" semigroup generator
$$
\mathbb W_{\mathsf{D}, \mathsf{D}^\prime}(X)=
i(H X  - X H^\prime)  +  \sum_i  \left[ 
L_{i}^* X L^\prime_{i}
-\frac 12   \left( L_{i}^*L_{i} X + X L^{\prime*}_{i} L^\prime_{i}) \right)
\right], \qquad \mathsf{D} = (H, {\bf L}), \quad \mathsf{D}^\prime = (H^\prime, \mathbf{L}^\prime)
$$
which coincides with the usual Markov generator $\mathbb{W}_\mathsf{D}$ for $\mathsf{D}= \mathsf{D}^\prime$. When choosing $\mathsf{D}= \mathsf{D}_{u/\sqrt{t}}$ 
and $\mathsf{D}^\prime= \mathsf{D}_{u^\prime /\sqrt{t}}$  the generator can be expanded as
$$
\mathbb{W}_{\mathsf{D}, \mathsf{D}^\prime} = \mathbb{W}_{\mathsf{D}_0} + \frac{1}{\sqrt{t}} 
\mathbb L_1[u,u']+ \frac{1}{2t} \mathbb L_2[u,u'] +O(t^{3/2}).
$$
Using a version of the Trotter-Kato second order perturbation Theorem for semigroups (cf. Theorem 2.2 in \cite{CatanaGutaBouten}) one can show that \eqref{eq.weak.lan.phase} holds with an explicit choice of the phase angle $\phi(u)$ as a quadratic form in $u$. The details of the calculations can be found in the Appendix. Note that since the phase $e^{i\phi(u)}$ is unphysical, it could have been incorporated in the definition of the coherent state $|u\rangle$, or in that of the system-output state $|\Psi^{\rm s+o}_{u/\sqrt{t}}(t) \rangle$.  

The second part of the Theorem can be proven by following the lines of an analogous discrete-time result, cf. Theorem 7 in \cite{GutaKiukas}. The main ideas are as follows. Let $\rho_{ss}= \sum_{m}\Lambda_{m} |e_{m}\rangle\langle e_{m}|$ be the spectral decomposition of the stationary state for some dynamical parameter $\mathsf{D}$. The stationary output state is given by 
$$
\rho^{\rm out}_{\mathsf{D}}(t)=\sum_{m,m'} \Lambda_{m} |\psi_{mm'}(t) \rangle\langle \psi_{mm'}(t)|,
$$
where, up to normalisation, $|\psi_{mm'}(t) \rangle$ are the conditional output states obtained by initialising the system in state $|e_m\rangle$ and projecting on state $|e_{m'}\rangle$ at time $t$, cf. proof of Lemma \ref{secondlemma} in Appendix.  For large times, the overlaps of the different  pure components $|\psi_{mm'}(t) \rangle$ vanish exponential fast; more generally, if $\mathsf{D}= \mathsf{D}_{u/\sqrt{t}}$ and 
$\mathsf{D}^\prime =\mathsf{D}_{u^\prime/\sqrt{t}}$ are two dynamical parameters in the local neighbourhood of $\mathsf{D}_0$ (i.e. $u,u^\prime \in \mathcal{O}$) then all the overlaps of components with \emph{different} indices decay exponentially uniformly in 
$u,u'$. This can be shown by expressing the overlaps in terms of the deformed generator $\mathbb{W}_{\mathsf{D},\mathsf{D}'} $
$$
\langle \psi_{mm'}(t) |\psi_{nn'}(t) \rangle = 
\left\langle e_{m'} | e^{t\mathbb{W}_{\mathsf{D},\mathsf{D}'}} (|e_n\rangle \langle e_m|) | e_{n'}\right\rangle
$$
and following the steps of the proof of Theorem 3 in \cite{GutaKiukas}, in particular the argument following equation (35). This implies that the components can be distinguished with vanishing error probability, without the knowledge of the local parameter $u$. Each pure component satisfies the weak version of the local asymptotic normality, which can be upgraded to the strong version as in Theorem 7 of \cite{GutaKiukas}, which in turn employs a general result  described in Lemma 5.  
Combining this with the fact that the pure components can be distinguished allows to construct the channels $T_t, S_t$ as in \cite{GutaKiukas}.

\subsubsection{Estimation for specific model of dynamical parameters} In the previous subsection we considered the problem of estimating all identifiable parameters, and showed how this becomes a quantum Gaussian estimation problem. 
Here, we show how this general result can be used for estimating an unknown parameter of the dynamics. Suppose that the the dynamical parameter $\mathsf{D}$ is known to depend on $\theta\in \mathbb{R}^m$ as described in section \ref{subsec.QFI}, so that 
$  \mathsf{D}=\mathsf{D}_\theta$. Let $\theta_0$ be a fixed but arbitrary parameter value and let 
$$
\dot{\mathsf{D}}_{a} :=\left (
\left. \frac{\partial H}{\partial \theta_a} \right|_{\theta_0}  , 
\left.\frac{\partial L^1}{\partial \theta_a}\right|_{\theta_0},
\ldots,
\left.\frac{\partial L^k}{\partial \theta_a}\right|_{\theta_0} 
\right) = 
(\dot{H}_{a}, \dot{L}^1_{a}, \dots, \dot{L}^k_{a}) \in \mathcal{T}_{\mathsf{D}_{\theta_0}}
$$
be the tangent vectors associated to the different directions in the parameter space $\mathbb{R}^m$.

 The stationary output's QFI rate matrix $f$ at a given point $\theta_0$ can be computed using the explicit formula \eqref{eq.qfi.rate2},  and we assume that $\theta$ is identifiable so that $f$ is a strictly positive matrix. We consider a local parametrisation around 
 $\theta_0$ given by $\theta= \theta_0 +h/\sqrt{t}$, with local parameter $h\in \mathbb{R}^m$. Since the stationary state depends only on the equivalence class $[\mathsf{D}]$, the statistical model can be projected onto the base space $\mathcal{P}$ giving rise to a local model $[D]_{h/\sqrt{t}}$, with $h\in\mathcal{O}^\prime\subset\mathbb{R}^m$, which can be seen as sub-model of the `full' model considered in the previous subsection. In particular, the asymptotic normality Theorem \ref{th.lan} applies directly to the sub-model. However, in general it may happen that the ``full'' Gaussian limit model may be ``too large'', and one can use a restricted model defined as follows. Recall that $\mathcal{T}^{id}_{\mathsf D_{\theta_0}}$ 
 is a Hilbert space  with inner product \eqref{eq.inner.product.t.id}, which defines the 
 CCR algebra $CCR(\mathcal{T}^{id}_{\mathsf D_{\theta_0}}, \sigma^{\mathsf D_{\theta_0}})$ and the Gaussian state $|0\rangle$. 
 Let $P(\dot{\mathsf{D}}_{a})$ be the projection of the tangent vector $\dot{\mathsf{D}}_{a}$ onto $\mathcal{T}^{id}_{\mathsf D_{\theta_0}}$, and define $\mathcal{T}^{\prime}$ to be the (complex) subspace spanned by these projections, with $a=1,\dots , m$. The subspace defines a CCR subalgebra $CCR(\mathcal{T}^\prime, \sigma^{\mathsf D_{\theta_0}})$, and the restriction of the Fock state $|0\rangle$ to this subalgebra is also a Fock state which we denote by the same symbol. 
 
As a concrete example, consider the driven two-level model of Section \ref{sec:multip_ex}, where the total identifiable subspace is 8-dimensional, while the subspace spanned by the projections of the physical tangent vectors is four-dimensional, associated with the CCR-algebra of four canonical quadratures. 

We now obtain the following asymptotic normality result for the model $\mathsf{D}_\theta$ in the neighbourhood of $\theta_0$.
 
\begin{corollary}
Let 
$$
\mathcal{Q}^\prime_t:= 
\left\{ \rho^{\rm out}_{h/\sqrt{t}}(t) :  h \in \mathcal{O}^\prime\subset \mathbb{R}^{m} \right\}, \qquad
\mathcal{G}^\prime := 
 \left\{  \rho^\prime_h := |h\rangle \langle h | :   h\in \mathcal{O}^\prime \subset  \mathbb{R}^{m}  \right\} 
$$
denote the local quantum statistical model of the output state associated to the dynamical parameter 
$\mathsf{D}_{\theta_0 + h/\sqrt{t}}$, and respectively the Gaussian model associated to the algebra 
$CCR(\mathcal{T}^\prime, \sigma^{\mathsf D_{\theta_0}})$, with $|h\rangle := W(h)|0\rangle$. Then $\mathcal{Q}^\prime_t$ 
converge strongly to the the Gaussian model $\mathcal{G}^\prime$, i.e. 
$\Delta(\mathcal{Q}^\prime_t, \mathcal{G}^\prime)\to 0$. More precisely, there exist quantum channels $T^\prime_t, S^\prime_t$ such that
\begin{eqnarray*}
&&\lim_{t\to\infty} \sup_{h \in \mathcal{O}^\prime} \left\| T^\prime_t \left( \rho^{\rm out}_{h/\sqrt{t}} (t) \right) - \rho^\prime_h \right\|_1 =0\\
&&\lim_{t\to\infty} \sup_{h \in \mathcal{O}^\prime} \left\| S^\prime_t \left(\rho^\prime_h\right) -  \rho^{\rm out}_{h/\sqrt{t}} (t) \right\|_1 =0.
\end{eqnarray*}

\end{corollary}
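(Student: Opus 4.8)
The plan is to realise $\mathcal{Q}'_t$ as a reparametrised sub-family of the \emph{full} output model $\tilde{\mathcal{Q}}_t$ of Theorem \ref{th.lan}, and then to manufacture the channels $T'_t,S'_t$ by composing the channels $T_t,S_t$ supplied by that theorem with, respectively, a partial trace onto and a tensoring-with-vacuum from the sub-algebra Fock space. The starting point is that the output state $\rho^{\rm out}_{\theta}(t)$ depends on $\theta$ only through the equivalence class $[\mathsf{D}_\theta]\in\mathcal{P}$, so that the physical family descends to the curve $h\mapsto [\mathsf{D}_{\theta_0+h/\sqrt t}]$ in $\mathcal{P}$ through $[\mathsf{D}_0]$ at $h=0$. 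First I would read off its coordinate $u(h,t):=\sqrt t\,\mathcal{C}([\mathsf{D}_{\theta_0+h/\sqrt t}])$ in the chart $\mathcal{C}$ used to define $\tilde{\mathcal{Q}}_t$. A first-order Taylor expansion, combined with the identity $\pi_*=\pi_*\circ P_{\mathsf D}$ (which holds because $\pi_*$ annihilates the vertical bundle $\mathcal{T}^{\rm nonid}_{\mathsf{D}_0}=\ker P_{\mathsf{D}_0}$), gives $u(h,t)=Lh+O(t^{-1/2})$, where $L:\mathbb{R}^m\to\mathbb{R}^{\delta^{id}}$ is the linear map sending $h$ to the coordinate vector of $\sum_a h_a P(\dot{\mathsf{D}}_a)$ in the basis of $\mathcal{T}^{id}_{\mathsf{D}_0}$ determined by $\mathcal{C}$ and the horizontal section $s$. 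By definition of the coordinate, $\rho^{\rm out}_{\theta_0+h/\sqrt t}(t)$ is exactly the full-model state $\rho^{\rm out}_{[\mathsf{D}]_{u(h,t)/\sqrt t}}(t)$, so for a small enough ball $\mathcal{O}'$ the parameters $u(h,t)$ stay in $\mathcal{O}$ for all large $t$.

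With this identification in hand I would invoke Theorem \ref{th.lan}(2), which produces $T_t$ with $\sup_{u\in\mathcal{O}}\|T_t(\rho^{\rm out}_{u/\sqrt t}(t))-\rho_u\|_1\to 0$ for the coherent states $\rho_u=|u\rangle\langle u|$ of $CCR(\mathcal{T}^{id}_{\mathsf{D}_0},\sigma^{\mathsf{D}_0})$. Specialising to $u=u(h,t)$ and replacing $\rho_{u(h,t)}$ by $\rho_{Lh}$ (legitimate because coherent states depend norm-continuously on their mean, uniformly on compacts, and $u(h,t)\to Lh$ uniformly) yields $\sup_{h\in\mathcal{O}'}\|T_t(\rho^{\rm out}_{h/\sqrt t}(t))-\rho_{Lh}\|_1\to 0$. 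Now the mean $Lh$ corresponds precisely to the vector $\sum_a h_a P(\dot{\mathsf{D}}_a)\in\mathcal{T}'$, so $\rho_{Lh}$ factorises as $\rho'_h\otimes|0\rangle\langle0|_\perp$ under $\mathcal{F}_{\mathsf{D}_0}=\mathcal{F}(\mathcal{T}')\otimes\mathcal{F}((\mathcal{T}')^\perp)$. Composing with the partial-trace channel $E$ onto $\mathcal{F}(\mathcal{T}')$ then gives $T'_t:=E\circ T_t$ with $\sup_h\|T'_t(\rho^{\rm out}_{h/\sqrt t}(t))-\rho'_h\|_1\to 0$.

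The reverse direction is symmetric: let $V$ be the channel tensoring the vacuum on $(\mathcal{T}')^\perp$, so $V(\rho'_h)=\rho_{Lh}$, and set $S'_t:=S_t\circ V$. Theorem \ref{th.lan}(2) gives $\|S_t(\rho_{Lh})-\rho^{\rm out}_{[\mathsf{D}]_{(Lh)/\sqrt t}}(t)\|_1\to 0$ uniformly; to convert this into a statement about $\rho^{\rm out}_{h/\sqrt t}(t)=\rho^{\rm out}_{[\mathsf{D}]_{u(h,t)/\sqrt t}}(t)$ I would insert the intermediate terms $S_t(\rho_{u(h,t)})$ and $\rho^{\rm out}_{[\mathsf{D}]_{u(h,t)/\sqrt t}}(t)$ and use contractivity of $S_t$ together with $\|\rho_{Lh}-\rho_{u(h,t)}\|_1\to 0$ and both bounds of the theorem. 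Collecting the two directions gives $\Delta(\mathcal{Q}'_t,\mathcal{G}')\to 0$.

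The genuinely non-routine ingredient is the \emph{identification of the limit}: one must check that specialising the full Gaussian model to the curve $u=Lh$ reproduces exactly the coherent-state family of $CCR(\mathcal{T}',\sigma^{\mathsf{D}_0})$, rather than some distorted or enlarged model. This rests on two facts established earlier, namely that $\pi_*$ factors through the horizontal projection $P$ (so that the identifiable content of a physical perturbation $\dot{\mathsf{D}}_a$ is precisely $P(\dot{\mathsf{D}}_a)$), and that the Markov-covariance form $(\cdot,\cdot)_{\mathsf{D}_0}$ is strictly positive on $\mathcal{T}^{id}_{\mathsf{D}_0}$ — this is what makes $\mathcal{T}'$ a genuine symplectic subspace and hence guarantees the tensor factorisation $\mathcal{F}_{\mathsf{D}_0}=\mathcal{F}(\mathcal{T}')\otimes\mathcal{F}((\mathcal{T}')^\perp)$ on which the channels $E$ and $V$ are built. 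Strict positivity of the QFI rate matrix $f$, which is the standing hypothesis, is exactly what ensures that the displacements $\{P(\dot{\mathsf{D}}_a)\}_{a=1}^m$ are linearly independent, so that $\mathcal{G}'$ is a faithful $m$-parameter family. By comparison, the uniform control of the reparametrisation $h\mapsto u(h,t)$ is straightforward bookkeeping, made possible by the uniformity already present in the bounds of Theorem \ref{th.lan}(2).
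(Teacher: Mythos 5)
Your argument is correct and follows exactly the route the paper intends: the paper treats this corollary as an immediate consequence of Theorem \ref{th.lan} applied to the projected sub-model $h\mapsto[\mathsf{D}_{\theta_0+h/\sqrt t}]$, together with the restriction of the limit Gaussian model to the CCR subalgebra over $\mathcal{T}'$. Your explicit construction of $T'_t=E\circ T_t$ and $S'_t=S_t\circ V$ via the tensor factorisation $\mathcal{F}_{\mathsf{D}_0}=\mathcal{F}(\mathcal{T}')\otimes\mathcal{F}((\mathcal{T}')^{\perp})$, and the uniform reparametrisation $u(h,t)=Lh+O(t^{-1/2})$ using $\pi_*=\pi_*\circ P_{\mathsf{D}_0}$, correctly supplies the details the paper leaves implicit.
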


\noindent {\bf Acknowledgment.} This work was supported by the EPSRC project EP/J009776/1. JK also acknowledges additional support from the EPSRC project EP/M01634X/1.
We thank K. Macieszczak, J. P. Garrahan and I. Lesanovsky for useful discussions during the preparation of this work.

%%%%%%%%%%%%%%%%%%%%%
\section{Appendix: proofs}\label{proof.identifiability.ct}
%%%%%%%%%%%%%%%%%%%%%
\subsection{Proof Theorem \ref{equiv_thm}} The proof of Theorem \ref{equiv_thm} is inspired by a related argument from \cite{BaumgartnerNarnhofer}, and requires some auxiliary lemmas.

\begin{lemma}\label{firstlemma} Let ${\mathsf D}_l := (H_{l}, L^1_{l}, \ldots , L^k_l )$, $l=1,2$ be two dynamical parameters with system spaces $\mathcal{H}_l$ and assume that both dynamics are ergodic. We define the maps
\begin{align*}
\mathbb W_{ll'}:&\mathcal B(\hi_{l'},\hi_{l})\to \mathcal B(\hi_{l'},\hi_{l}), & \mathbb W_{ll'}(X)&=-iXH_{l',\rm eff}+iH_{l,\rm eff}^*X+\sum_{i=1}^{k} L^{i*}_{l} X L^i_{l'}.
\end{align*}
for $l,l'=1,2$. Then the following conditions are equivalent:
\begin{itemize}
\item[(i)] $\mathbb W_{12}$ has a purely imaginary eigenvalue;
\item[(ii)] $\mathbb W_{21}$ has an purely imaginary eigenvalue;
\item[(iii)] there exists a unitary operator $U:\hi_2\to\hi_1$, and $r\in \mathbb R$, such that 
$L^i_2 = U^* L^i_1 U$ for all $i=1,\dots, k$, and $H_2 = U^* H_1 U - r\id$.
\end{itemize}
If any of these conditions hold, then $\mathbb W_{12}(U)=irU$ and $\mathbb W_{21}(U^*)=-irU^*$.
\end{lemma}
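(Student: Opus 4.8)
The plan is to establish the cycle (iii) $\Rightarrow$ (i) $\Rightarrow$ (iii) together with the easy equivalence (i) $\Leftrightarrow$ (ii), reading off the two final eigenvalue identities along the way. First I would dispatch (iii) $\Rightarrow$ (i) by direct computation: assuming $L^i_2 = U^* L^i_1 U$ and $H_2 = U^* H_1 U - r\id$, substitute into the definition of $\mathbb W_{12}$, using that $H_{2,\mathrm{eff}} = U^* H_{1,\mathrm{eff}} U - r\id$ (which holds because $\sum_i L^{i*}_2 L^i_2 = U^*(\sum_i L^{i*}_1 L^i_1)U$) and the identity $-iH_{1,\mathrm{eff}} + iH_{1,\mathrm{eff}}^* = -\sum_i L^{i*}_1 L^i_1$; the terms collapse to give $\mathbb W_{12}(U) = irU$. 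This yields the purely imaginary eigenvalue of (i) and the first final formula, and the symmetric computation gives $\mathbb W_{21}(U^*) = -irU^*$. The equivalence (i) $\Leftrightarrow$ (ii) is then immediate from the adjoint relation $\mathbb W_{12}(X)^* = \mathbb W_{21}(X^*)$ (checked term by term), which sends an eigenvector $X$ of $\mathbb W_{12}$ with eigenvalue $i\beta$ to an eigenvector $X^*$ of $\mathbb W_{21}$ with eigenvalue $-i\beta$, both purely imaginary.

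The substance is (i) $\Rightarrow$ (iii). The key device, in the spirit of \cite{BaumgartnerNarnhofer}, is to embed the four maps $\mathbb W_{ll'}$ into a single genuine Lindblad generator $\mathbb W$ on $\mathcal B(\mathcal H_1 \oplus \mathcal H_2)$ built from the block-diagonal data $H_1 \oplus H_2$ and $L^i_1 \oplus L^i_2$. One verifies that $\mathbb W$ acts on the $(l,l')$ block exactly as $\mathbb W_{ll'}$, so the blocks decouple and $e^{t\mathbb W}$ is a unital completely positive semigroup. Given $\mathbb W_{12}(X) = i\beta X$ with $X \neq 0$, set $\hat X = \left(\begin{smallmatrix} 0 & X \\ 0 & 0\end{smallmatrix}\right)$, so $\mathbb W(\hat X) = i\beta \hat X$. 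Applying the Kadison--Schwarz inequality to $e^{t\mathbb W}$ gives $e^{t\mathbb W}(\hat X^* \hat X) \ge e^{t\mathbb W}(\hat X)^* e^{t\mathbb W}(\hat X) = \hat X^* \hat X$, which on the $(2,2)$ block reads $e^{t\mathbb W_{22}}(X^* X) \ge X^* X$ for all $t$. Pairing the positive difference with the full-rank stationary state of the ergodic dynamics $2$, which annihilates it by stationarity, forces $X^* X$ to be a fixed point of $\mathbb W_{22}$, hence a multiple of $\id$; symmetrically $XX^*$ is a multiple of $\id_1$. After normalising, $X = \sqrt c\, U$ with $U:\mathcal H_2 \to \mathcal H_1$ unitary (this also forces $\dim \mathcal H_1 = \dim \mathcal H_2$).

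It then remains to read off the intertwining relations. Since $\hat X^* \hat X$ is now a fixed point and $\hat X^* \mathbb W(\hat X) + \mathbb W(\hat X)^* \hat X = 0$, the dissipation identity $\mathbb W(\hat X^* \hat X) - \hat X^* \mathbb W(\hat X) - \mathbb W(\hat X)^* \hat X = \sum_i [\hat X, \hat L^i]^* [\hat X, \hat L^i]$ shows this positive sum vanishes, so $[\hat X, \hat L^i] = 0$, i.e. $X L^i_2 = L^i_1 X$ and hence $L^i_2 = U^* L^i_1 U$. Feeding this back into $\mathbb W_{12}(X) = i\beta X$ makes all dissipative terms cancel, leaving $i(H_1 X - X H_2) = i\beta X$, that is $H_2 = U^* H_1 U - \beta \id$. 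This is precisely (iii) with $r = \beta$, and it reproduces the final identities.

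The step I expect to be the main obstacle is the passage from the mere existence of a purely imaginary eigenvalue to the rigidity conclusion that $X^*X$ and $XX^*$ are scalars. This is exactly where ergodicity is indispensable: it is the full-rank uniqueness of the stationary state that turns the Kadison--Schwarz inequality into an equality (forcing the fixed-point property and thereby the vanishing of the dissipation), and it is the uniqueness of the fixed point that upgrades the scalar conclusion. Setting up the positivity/trace-pairing argument carefully, and confirming that the embedded generator genuinely restricts to $\mathbb W_{12}$ on the off-diagonal block, will be the technical heart of the proof.
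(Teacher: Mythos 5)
Your proof is correct, and all three implications go through; the skeleton of the hard direction (i)$\Rightarrow$(iii) is the same as the paper's -- establish $e^{t\mathbb W_{22}}(X^*X)\ge X^*X$, use ergodicity and the full rank of the stationary state to force $X^*X$ and $XX^*$ to be scalars, then extract the intertwining relations -- but two of the three sub-steps are executed by genuinely different arguments. For the inequality itself the two routes are equivalent in substance: the paper writes $T_{22,t}(F^*F)=V_2(t)^*(F^*F\otimes\id)V_2(t)\ge T_{21,t}(F^*)T_{12,t}(F)=F^*F$ using $V_1(t)V_1(t)^*\le\id$ on the dilation, which is exactly the Kadison--Schwarz inequality you invoke for the block-decoupled semigroup $e^{t\mathbb W}$ on $\mathcal B(\mathcal H_1\oplus\mathcal H_2)$ (a construction the paper itself introduces, but only afterwards, in the proof of Theorem \ref{equiv_thm}). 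Your rigidity step is cleaner: you pair the positive operator $e^{t\mathbb W_{22}}(X^*X)-X^*X$ with the invariant full-rank state to conclude it vanishes, hence $X^*X\in\ker\mathbb W_{22}=\mathbb C\id$ by ergodicity; the paper instead takes the projection onto the top eigenspace of $F^*F$ and passes to the ergodic limit $T_{22,t}\to{\rm tr}[\rho_{ss,2}\,\cdot\,]\id$ to reach the same conclusion. The real divergence is in extracting $L^i_2=U^*L^i_1U$: the paper takes the trace of $U^*\mathbb W_{12}(U)=ir\id$ and saturates a Cauchy--Schwarz chain, which a priori yields only the proportionality $U^*L^i_1U=cL^i_2$ and then requires pinning down $c\in\mathbb R$ and $c=1$ from the imaginary and real parts; your use of the dissipation identity $\mathbb W(\hat X^*\hat X)-\hat X^*\mathbb W(\hat X)-\mathbb W(\hat X)^*\hat X=\sum_i[\hat L^i,\hat X]^*[\hat L^i,\hat X]$ (the same identity the paper deploys later in the proof of Proposition \ref{continuous_mc}) gives the operator identity $L^i_1X=XL^i_2$ in one stroke, since the left-hand side vanishes once $\hat X^*\hat X$ is a fixed point and $\hat X$ is an eigenvector with purely imaginary eigenvalue. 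Both approaches then read off $H_2=U^*H_1U-r\id$ by substituting back into the eigenvalue equation, and your treatment of (i)$\Leftrightarrow$(ii) and (iii)$\Rightarrow$(i), together with the final identities $\mathbb W_{12}(U)=irU$ and $\mathbb W_{21}(U^*)=-irU^*$, matches the paper's.
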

\begin{proof}
Conditions (i) and (ii) are clearly equivalent: $\mathbb W_{ll'}(F)=irF$ with some $r\in\mathbb R$ and $F\in \mathcal B(\hi_{l'},\hi_{l})$, then $\mathbb W_{l'l}(F^*)=-irF^*$.
Assuming (iii) we have
$$
\mathbb W_{12}(U)= -i UH_{2,\rm eff}+iH_{1,\rm eff}^*U+\sum_{i=1}^{k} L^{i*}_{1} U L^i_{2}
=U\mathbb W_{22}(\id)+irU=irU,
$$
i.e. (i) holds, with $ir$ the corresponding eigenvalue. Thus, the only nontrivial implication is (i)$\implies$ (iii).

Let us define the families of isometries $V_l(t): \mathcal{H}\to \mathcal{H}_l\otimes \mathcal{F}$ such that 
$V_l(t) |\varphi\rangle = U_l(t) |\varphi\rangle\otimes \Omega\rangle$, with $U_l(t)$ the unitary generated by the dynamical parameter ${\mathsf D}_l$, cf. equation \eqref{QSDE}. Then $T_{ll',t}(X):=V_l(t)^*(X\otimes \id_{\mathcal F})V_{l'}(t)=e^{it\mathbb W_{ll'}}(X)$. Assume now (i), and let $r$ and $F$ be such that $\mathbb W_{12}(F)=irF$. Then $T_{12,t}(F)=e^{itr}F$, and since $\mathbb W_{21}(F^*)=-irF^*$, we also have $T_{21,t}(F^*)=e^{-irt}F^*$. For each $t$ we have $V_1(t)V_1(t)^*\leq \id_{\hi_1\otimes\mathcal F}$, so
\begin{align*}
T_{22,t}(F^*F)&=V_2^*(t)(F^*\otimes \id_{\mathcal F})(F\otimes \id_{\mathcal F})V_2(t)\\
&\geq V_2^*(t)(F^*\otimes \id_{\mathcal F})V_1(t)V_1(t)^*(F\otimes \id_{\mathcal F})V_2(t)=T_{21,t}(F^*)T_{12,t}(F) =F^*F.
\end{align*}
Let $P$ be the projection onto the eigenspace of $F^*F$ corresponding to its largest eigenvalue $\|F^*F\|$. Now $\lim_{t\rightarrow\infty} T_{22,t}(X) ={\rm tr}[\rho_{ss,2} X]\id_{\hi_2}$ by ergodicity, so
$$
{\rm tr}[\rho_{ss,2}F^*F]=\lim_{t\rightarrow\infty} {\rm tr}[P]^{-1}{\rm tr}[PT_{22,t}(F^*F)]\geq {\rm tr}[P]^{-1}{\rm tr}[PF^*F]=\|F^*F\|.
$$
This implies that ${\rm tr}[\rho_{ss,2}F^*F]=\|F^*F\|$, i.e. $\rho_{ss,2}$ is supported in the projection $P$. But $\rho_{ss,2}$ has full rank in $\hi_2$, so $P=\id_{\hi_2}$, and, consequently, $F^*F=\|F^*F\|\id_{\hi_2}$. By proceeding in exactly the same way starting from $T_{11,t}$, we show that $FF^*=\|FF^*\|\id_{\hi_1}$. Denote $\alpha:= \|FF^*\|=\|F^*F\|$, and $U:=\alpha^{-\frac 12} F$. Then $U:\hi_2\to\hi_1$ is a unitary operator between the two Hilbert spaces and in particular, $\dim\hi_1=\dim\hi_2$. Moreover, we now have
\begin{equation}\label{eigenv}
ir\id_{\mathcal H_2}=U^*\mathbb W_{12}(U)=-i(H_2-U^*H_1U)-\frac 12\sum_i(L^{i*}_{2} L^i_{2}+U^*L^{i*}_{1} L^i_{1}U)+\sum_{i} (U^*L^i_{1}U)^*L^i_{2};
\end{equation}
taking the real part of the trace of this equation gives
\begin{equation}\label{realpart}
{\rm Re}\sum_{i} {\rm tr}[(U^*L^i_{1}U)^*L_{2,i}]=\frac 12\sum_{i} {\rm tr}[L^{i*}_{2} L^i_{2}+U^*L^{i*}_{1} L^i_{1}U]
\end{equation}
%{\rm Im} \sum_{i} {\rm tr}[(U^*L_{i,1}U)^*L_{2,i}]&={\rm tr}[r+H_2-U^*H_1U]
%\end{align*}
which implies that the (generally valid) inequalities
\begin{align*}
{\rm Re}\sum_{i} {\rm tr}[(U^*L^i_{1}U)^*L^i_{2}]&\leq \left|\sum_{i} {\rm tr}[(U^*L^i_{1}U)^*L^i_{2}]\right|\leq \sqrt{\sum_{i} {\rm tr}[L^{i*}_{2}L^i_{2}]\sum_i {\rm tr}[(U^*L^i_{1}U)^*(U^*L^i_{1}U)]}\\
&\leq \frac 12 \sum_{i} {\rm tr}[L^{i*}_{2} L^i_{2}+U^*L^{i*}_{1} L^i_{1}U]
\end{align*}
are in fact equalities. In particular,
\begin{equation}\label{imag}
{\rm Im} \sum_{i} {\rm tr}[(U^*L^i_{1}U)^*L_{2,i}]=0.
\end{equation}
Moreover, since the second inequality is Cauchy-Schwartz for the scalar product $\sum_i {\rm tr}[A_i^*B_i]$ of $k$-tuples $(A_1,\ldots, A_{k})$ of Hilbert-Schmidt operators, it follows that there exists a scalar $c\in \mathbb C$ such that 
$U^*L^i_{1}U=cL^i_{2}$ for all $i$. Putting this into \eqref{imag} we see that $c\in \mathbb R$, and from \eqref{realpart} it follows that $c=1$. Finally, from \eqref{eigenv} we then get $r=-H_2+U^*H_1U$, which proves (iii).
\end{proof}

For reader's convenience we formulate the following simple lemma using the notations of the input-output setting, but the statement holds in a general context.
\begin{lemma}\label{secondlemma} Let $\mathcal F$, and $\mathcal H_l$, $l=1,2$, be Hilbert spaces. For each $t\geq 0$ let $$V_l(t):\hi_l\to\hi_l\otimes\mathcal F, \quad l=1,2,$$ be an isometry, and define the maps $$T_{ll',t}(X):=V_l(t)^*(X\otimes \id_{\mathcal F})V_{l'}(t), \quad X\in \mathcal B(\mathcal H_{l'},\mathcal H_l).$$ Suppose that
\begin{align*}
\lim_{t\rightarrow\infty} T_{ll',t}(\cdot)&=\delta_{ll'} {\rm tr}[\rho_{ss,l}(\cdot)]\id_{\mathcal H_l}
\end{align*}
holds for some states $\rho_{ss,l}$, and define $\rho_{l}^{\rm out}(t):={\rm tr}_{\mathcal H_l}[V_l(t)\rho_{ss,l}V_l(t)^*]$. Then 
$$
\lim_{t\rightarrow\infty} {\rm tr}[\rho_{1}^{\rm out}(t)^2],\quad {\rm and}\quad \lim_{t\rightarrow\infty} {\rm tr}[\rho_{2}^{\rm out}(t)^2]
$$
exist and are strictly positive, while
\begin{equation}\label{asymptotic_ortho}
\lim_{t\rightarrow\infty} {\rm tr}[\rho_{1}^{\rm out}(t)\rho_{2}^{\rm out}(t)]=0.
\end{equation}
\end{lemma}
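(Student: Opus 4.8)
The plan is to reduce all three traces to overlaps of \emph{conditional output vectors} and then read the limits off directly from the hypothesis on $T_{ll',t}$. First I would fix $l$, write the spectral decomposition $\rho_{ss,l}=\sum_m\Lambda_m^{(l)}|e_m^{(l)}\rangle\langle e_m^{(l)}|$, and introduce the (unnormalised) vectors $|\psi_{mm'}^{(l)}(t)\rangle:=(\langle e_{m'}^{(l)}|\otimes\id_{\mathcal{F}})V_l(t)|e_m^{(l)}\rangle\in\mathcal{F}$, so that $V_l(t)|e_m^{(l)}\rangle=\sum_{m'}|e_{m'}^{(l)}\rangle\otimes|\psi_{mm'}^{(l)}(t)\rangle$. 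Carrying out the partial trace over $\mathcal{H}_l$ in the definition of $\rho_l^{\rm out}(t)$ then gives the mixture
$$\rho_l^{\rm out}(t)=\sum_{m,m'}\Lambda_m^{(l)}|\psi_{mm'}^{(l)}(t)\rangle\langle\psi_{mm'}^{(l)}(t)|,$$
whence each trace becomes a finite sum of squared overlaps,
$${\rm tr}[\rho_1^{\rm out}(t)\rho_{l'}^{\rm out}(t)]=\sum_{m,m',n,n'}\Lambda_m^{(1)}\Lambda_n^{(l')}\bigl|\langle\psi_{mm'}^{(1)}(t)|\psi_{nn'}^{(l')}(t)\rangle\bigr|^2,$$
with $l'=1$ yielding the purity and $l'=2$ the cross term.

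The key observation I would make is that these overlaps are precisely matrix elements of the maps $T_{ll',t}$: a direct manipulation using $(|e_{m'}^{(l)}\rangle\otimes\id_{\mathcal{F}})(\langle e_{n'}^{(l')}|\otimes\id_{\mathcal{F}})=|e_{m'}^{(l)}\rangle\langle e_{n'}^{(l')}|\otimes\id_{\mathcal{F}}$ shows that
$$\langle\psi_{mm'}^{(l)}(t)|\psi_{nn'}^{(l')}(t)\rangle=\langle e_m^{(l)}|\,T_{ll',t}\bigl(|e_{m'}^{(l)}\rangle\langle e_{n'}^{(l')}|\bigr)\,|e_n^{(l')}\rangle.$$
Since the system spaces are finite-dimensional, the assumed convergence $T_{ll',t}(\cdot)\to\delta_{ll'}{\rm tr}[\rho_{ss,l}(\cdot)]\id_{\mathcal{H}_l}$ passes to each matrix element. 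For $l=l'=1$ this gives $\langle\psi_{mm'}^{(1)}(t)|\psi_{nn'}^{(1)}(t)\rangle\to\Lambda_{m'}^{(1)}\delta_{mn}\delta_{m'n'}$, i.e. the conditional vectors become asymptotically orthogonal with squared norms $\Lambda_{m'}^{(1)}$; for $l=1,l'=2$ the factor $\delta_{12}=0$ forces every such overlap to vanish in the limit.

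Substituting these limits into the finite sums (the limit/sum interchange being immediate in finite dimension) settles all three statements simultaneously:
$$\lim_{t\to\infty}{\rm tr}[(\rho_1^{\rm out}(t))^2]=\sum_{m,m'}(\Lambda_m^{(1)})^2(\Lambda_{m'}^{(1)})^2=({\rm tr}[\rho_{ss,1}^2])^2,$$
the analogous positive limit for output $2$, and $\lim_{t\to\infty}{\rm tr}[\rho_1^{\rm out}(t)\rho_2^{\rm out}(t)]=0$; strict positivity is automatic since ${\rm tr}[\rho^2]>0$ for any density operator. I do not expect a serious obstacle: the only steps needing care are the bookkeeping that turns the partial trace into the conditional-vector mixture and the verification of the $T_{ll',t}$ matrix-element identity, after which everything reduces to evaluating the hypothesis on rank-one operators. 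Were one to allow infinite-dimensional $\mathcal{H}_l$, the one extra point would be justifying the interchange of the limit with the (now infinite) sum, e.g. via a uniform trace-norm bound.
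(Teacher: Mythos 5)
Your proposal is correct and follows essentially the same route as the paper's proof: both decompose $\rho_l^{\rm out}(t)$ into conditional output vectors via the spectral decomposition of $\rho_{ss,l}$, identify the overlaps $\langle\psi^{(l)}_{mm'}(t)|\psi^{(l')}_{nn'}(t)\rangle$ with matrix elements of $T_{ll',t}$ on rank-one operators, and pass to the limit to obtain $\delta_{ll'}\sum_{m,m'}\Lambda_m^2\Lambda_{m'}^2=\delta_{ll'}({\rm tr}[\rho_{ss,l}^2])^2$. The only differences are the index convention for the conditional vectors and your (correct) closing remark about the limit/sum interchange in infinite dimension.
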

\begin{proof}
We write $\rho_{ss,l} = \sum_{m}\Lambda_{l,m} |e_{l,m}\rangle\langle e_{l,m}|$ where $\Lambda_{l,m}\geq 0$, and $\{e_{l,m}\}_m$ is an orthonormal basis of $\mathcal H_l$.
%Let $\{e_{l,i}, i=1,\ldots,\dim\hi_l\}$ be an eigenbasis of the stationary state $\rho_{ss,l}$ of $T_{ll}$, $l=1,2$, and $\Lambda_{l,i}$ the associated eigenvalues (repeated according to multiplicities). 
Then
\begin{align*}
\rho_{l}^{\rm out}(t)&=\sum_{m} \Lambda_{l,m} {\rm tr}_{\hi_l}[|V_l (t)e_{l,m}\rangle\langle V_{l}(t)e_{l,m}|]=\sum_{m,m'} \Lambda_{l,m} |\psi_{l,mm'}(t) \rangle\langle \psi_{l,mm'}(t)|,
\end{align*}
where $\psi_{l,mm'}(t)\in \mathcal F$ is the unique vector satisfying $\langle \chi|\psi_{l,mm'}(t)\rangle = \langle e_{l,m}\otimes \chi|V_l(t)e_{l,m'}\rangle$ for all $\chi\in \mathcal F$.
Now
\begin{align*}
\langle \psi_{l,mn}(t)|\psi_{l',m'n'}(t)\rangle &=\langle e_{l,n}|T_{ll',t}(|e_{l,m}\rangle\langle e_{l'm'}|)|e_{l',n'}\rangle,
\end{align*}
so we can write
\begin{align*}
{\rm tr}[\rho_{l}^{\rm out}(t)\rho_{l'}^{\rm out} (t)]&= \sum_{n,m}\sum_{n',m'} \Lambda_{l,n}\Lambda_{l',n'} |\langle\psi_{l,mn}(t)|\psi_{l',mn}(t)\rangle|^2\\
&=
\sum_{n,m}\sum_{n',m'}\Lambda_{l,n}\Lambda_{l',n'} |\langle e_{l,n}|T_{ll',t}(|e_{l,m}\rangle\langle e_{l'm'}|)|e_{l',n'}\rangle|^2\rightarrow \delta_{ll'} \sum_{n,n'}\Lambda_{l,n}^2\Lambda_{l,n'}^2.
\end{align*}
%This completes the proof.
\end{proof}

We can now proceed with the proof of Theorem \ref{equiv_thm}. The `if' part is straightforward. Assume now that parameter sets $(H^l,\{L_i^l\}_{i=1}^{k})$, $l=1,2$ are equivalent, and define $T_{ll'}$ as in Lemma \ref{firstlemma}. We consider the direct sum isometry
$$V_{\rm tot}(t):=V_1(t)\oplus V_2(t):\hi_1\oplus\hi_2\to \hi_1\otimes\mathcal F\oplus\hi_2\otimes\mathcal F=(\hi_1\oplus\hi_2)\otimes \mathcal F.$$
We identify the elements $X\in\mathcal B(\hi_1\oplus\hi_2)$ in the usual way with block matrices
$$
X=\begin{pmatrix} X_{11} & X_{12}\\ X_{21} & X_{22}\end{pmatrix},
$$
where $X_{ll'}\in \mathcal B(\hi_{l'},\hi_{l})$, the set of linear operators $\hi_{l'}\to\hi_{l}$. This identifies $\mathcal B(\hi_l,\hi_{l'})$ as a subspace $\mathcal B(\hi_1\oplus\hi_2)$, and each of these four subspaces is invariant under the channels $T_t$ associated with $V_{\rm tot}(t)$. Explicitly, we have
\begin{equation}\label{phirep}
T_t(X) =\begin{pmatrix} T_{11,t}(X_{11}) & T_{12,t}(X_{12})\\ T_{21,t}(X_{21}) & T_{22,t}(X_{22})\end{pmatrix}.
\end{equation}
In particular, any eigenvalue of $T_{ll',t}$ is also an eigenvalue of $T_t$, because the subspaces are invariant. Since each $T_t$ is completely positive and unital by construction, all eigenvalues of $T_{ll',t}$ have modulus at most one, hence the eigenvalues of $\mathbb W_{12}$ have real part $\leq 0$. If all of these are strictly negative, then we have $\lim_{t\rightarrow\infty} T_{12}(t)=\lim_{t\rightarrow\infty}e^{t \mathbb W_{12}}=0$, which according to Lemma \ref{secondlemma} contradicts the assumption that the output states are equal. Hence $\mathbb W_{12}$ must have a purely imaginary eigenvalue, so Lemma \ref{firstlemma} concludes the proof. $\qed$

%\begin{lemma}\label{iteration} For any fixed system operators $X_0,X_1,\ldots,X_{k}$, and all $s\geq 0$, we have equality between maps on $\mathcal{A}$
%\begin{equation}\label{condexp}
%\langle \Omega | \int_0^s \left(\sum_{i=1}^{k}j_t(X_i^*) dA_{i}(t) + j_t(X_0^*)dt \right)\,\, j_s(\cdot) \, |\Omega\rangle=\int_0^s T_{t}\circ\Phi\circ T_{s-t} \,dt,
%\end{equation}
%where $\Phi:\mathcal{A}\to\mathcal{A}$ is given by
%$$
%\Phi(X)=X_0^* X+\sum_i X_i^*[X ,L_i].
%$$
%\end{lemma}
\subsection{Proof of Proposition \ref{continuous_mc}}

Clearly we may assume $\mathcal C_{\mathsf D}(X^0)=X^0$ (that is, $X^0\in \mathcal B_0$) without loss of generality. The key ingredient is the following lemma which we prove first.
%%%%%%%%%%%%
\begin{lemma}\label{iteration} For any tuple of operators ${\bf X}:= (X^0,X^1,\dots,X^{k})\in \mathcal B_0\otimes M(\mathbb{C}^d)^{k}$, 
and all $s\geq 0$, the following equality between maps on $M(\mathbb{C}^d)$ holds
\begin{equation}\label{condexp}
 \sqrt{s} \left\langle\mathbb{F}_s({\bf X}) \Omega \left| j_s(\cdot) \Omega\right\rangle\right.=\int_0^s T_{t}\circ\Phi_{\bf X}\circ T_{s-t} 
(\cdot)\,dt,
\end{equation}
where $\Phi_{\bf X} : M(\mathbb{C}^d)\to M(\mathbb{C}^d)$ is the map given by
$$
\Phi_{\bf X}(Y)=X^{0*} Y+\sum_{i=1}^k X^{i*}[Y ,L^i].
$$
\end{lemma}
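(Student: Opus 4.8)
The plan is to prove the identity by showing that both sides, viewed as $M(\mathbb C^d)$-valued functions of $s$, solve the same linear initial value problem, and then invoke uniqueness. As the statement permits, I first reduce to $\mathcal C_{\mathsf D}(X^0)=X^0$, i.e. $X^0\in\mathcal B_0$, so that
$$
F_s:=\sqrt s\,\mathbb F_s(\mathbf X)=\int_0^s\Big(i\sum_{i=1}^k j_u(X^i)\,dA_i^*(u)+j_u(X^0)\,du\Big).
$$
I set $L_s(Y):=\langle\Omega|F_s^*\,j_s(Y)|\Omega\rangle$ for $Y\in M(\mathbb C^d)$, where $\langle\Omega|\cdot|\Omega\rangle$ denotes the conditional expectation over the field vacuum (an operator on $\mathcal H$); this is exactly the left-hand side of the lemma, and $L_0=0$. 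The goal is to establish
\begin{equation*}
\frac{d}{ds}L_s(Y)=L_s(\mathbb W(Y))+T_s\big(\Phi_{\mathbf X}(Y)\big),\qquad L_0=0,
\end{equation*}
and then to recognise the right-hand side of the lemma as its unique solution.

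To derive the differential equation I would apply the quantum Ito product rule \eqref{equation.product.integrals} to $F_s^*\,j_s(Y)$. From the QSDE \eqref{QSDE} one reads off $dF_s^*=-i\sum_i j_s(X^{i*})\,dA_i(s)+j_s(X^{0*})\,ds$, while the Langevin equation \eqref{langevin} gives $dj_s(Y)=\sum_i\big(j_s([Y,L^i])\,dA_i^*(s)+j_s([L^{i*},Y])\,dA_i(s)\big)+j_s(\mathbb W(Y))\,ds$. Expanding $d(F_s^*j_s(Y))$ and taking $\langle\Omega|\cdot|\Omega\rangle$, every surviving creation or annihilation increment is killed: $dA_i(s)$ annihilates $|\Omega\rangle$, and a creation increment $dA_i^*(s)$ acts on the future factor $\mathcal F_{(s,\infty)}$ and is removed by $\langle\Omega|$, all integrands being adapted. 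There remain: (i) the drift of $dF_s^*$, giving $\langle\Omega|j_s(X^{0*}Y)|\Omega\rangle\,ds=T_s(X^{0*}Y)\,ds$ since $j_s$ is a $*$-homomorphism and $T_s=\langle\Omega|j_s(\cdot)|\Omega\rangle$ by \eqref{channel}; (ii) the drift of $dj_s(Y)$, giving $L_s(\mathbb W(Y))\,ds$; and, crucially, (iii) the Ito correction $dF_s^*\cdot dj_s(Y)$, where the rule \eqref{itoformula} contracts the annihilation increments of $dF_s^*$ with the creation increments of $dj_s(Y)$ via $dA_i(s)dA_j^*(s)=\delta_{ij}\,ds$, producing a term of the form $\sum_i j_s(X^{i*}[Y,L^i])\,ds$ and hence $T_s\big(\sum_i X^{i*}[Y,L^i]\big)\,ds$ after the vacuum expectation. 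Collecting (i) and (iii) assembles exactly $T_s(\Phi_{\mathbf X}(Y))\,ds$, and (ii) supplies the $L_s(\mathbb W(Y))$ term, yielding the claimed ODE.

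It then remains to solve this linear equation. Writing $N_s:=\int_0^s T_t\circ\Phi_{\mathbf X}\circ T_{s-t}\,dt$ for the right-hand side of the lemma, I differentiate under the integral sign: the endpoint at $t=s$ contributes $T_s\circ\Phi_{\mathbf X}\circ T_0=T_s\circ\Phi_{\mathbf X}$, and differentiating $T_{s-t}=e^{(s-t)\mathbb W}$ in $s$ gives $\big(\int_0^s T_t\circ\Phi_{\mathbf X}\circ T_{s-t}\,dt\big)\circ\mathbb W=N_s\circ\mathbb W$, so $\frac{d}{ds}N_s=N_s\circ\mathbb W+T_s\circ\Phi_{\mathbf X}$ with $N_0=0$. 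Thus $L_s$ and $N_s$ satisfy the same linear ODE with the same initial datum, whence $L_s=N_s$; this is the standard Duhamel (variation-of-parameters) representation, with the right-multiplication by $\mathbb W$ generating $T_{s-t}$ on the right. I expect step (iii) to be the delicate point: correctly isolating the single quantum Ito cross-term (and confirming that no other contraction contributes), together with the rigorous justification that the pure creation and annihilation stochastic integrals have vanishing vacuum expectation by adaptedness. Once these are secured, the remaining semigroup manipulation is routine.
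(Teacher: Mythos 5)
Your proof follows essentially the same route as the paper's: derive the inhomogeneous linear equation $\frac{d}{ds}L_s = L_s\circ\mathbb W + T_s\circ\Phi_{\bf X}$ for the vacuum conditional expectation via the quantum Ito product rule (drift of $dF_s^*$, drift of $dj_s(Y)$, and the single $dA\,dA^*$ contraction), and then identify the right-hand side of \eqref{condexp} as the Duhamel (variation-of-constants) solution. The only difference is cosmetic: you explicitly verify that $N_s$ solves the same initial value problem by differentiating under the integral sign, whereas the paper simply asserts the solution form.
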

\begin{proof}
We let $F_s$ denote the left hand side of \eqref{condexp}. By applying the quantum Ito formula \eqref{itoformula} to the product of two adapted processes inside the conditional expectation, and eliminating the terms involving the annihilation processes acting on the vacuum, we get
\begin{align*}
d F_s(B)&=
\langle \Omega | \left( \sum_{i=1}^{k}j_{s}(X^{i*}) dA_{i}(s) +j_s(X^{0*})ds\right) dj_s(B) \, |\Omega\rangle+
\langle \Omega |  j_s(X^{0*})j_s(B) |\Omega\rangle ds\\
&+ \langle \Omega| \left( \sum_{i=1}^{k}j_s(X^{i*})dA_{i}(s) dj_s(B) \right) \, |\Omega\rangle
\end{align*}
for all system operators $B$. Using now the Langevin equation \eqref{langevin}, together with the Ito multiplication rules, and again eliminating the contributions from the annihilation processes, we get
\begin{align*}
F_s(B)&=\int_0^s F_t(\mathbb W(B))\,dt + \langle \Omega| \int_0^s\left( j_t(X^{0*})j_t(B)+
\sum_{i=1}^{k}j_t(X^{i*})j_t([B,L^i])\right) \, |\Omega \rangle dt \\
&=\int_0^s \left(F_t\circ \mathbb W(B) + T_t\circ \Phi_{\bf X}(B)\right) dt,
\end{align*}
where we have also used \eqref{channel}. Hence $F_s$ satisfies the (ordinary) differential equation
$$
\frac{dF_t}{dt} =F_t\circ \mathbb W +T_t\circ \Phi_{\bf X},
$$
with initial condition $F_0=0$. We can easily solve this equation: without the inhomogeneous part $T_t\circ \Phi_{\bf X}$, the solution would be simply $T_t=e^{t\mathbb W}$; hence the actual solution is obtained by concatenating $T_t\circ \Phi_{\bf X}$ with $T_{s-t}$, and integrating. This gives the claimed result.
\end{proof}

%%%%%%%%%%%%%%%%%%%%%%%%%%%%%%%%%%%%%%%%%%%%%%%%%%
Note that the covariance of the fluctuation operators is sesquilinear with 
respect to the operator coefficients ${\bf X}$ and ${\bf Y}$. We show that the limit exists by computing it explicitly using the Ito calculus. 
%By the same vacuum annihilation property we get
%$$
%\langle \varphi\otimes \Omega|\mathbb F_t(\mathbf X)^*\mathbb F_t(\mathbf Y)|\varphi\otimes \Omega\rangle= \frac{1}{t}\langle \varphi\otimes \Omega|Z_t \tilde Z_t |\varphi\otimes \Omega\rangle,
%$$
%where we have defined the adapted processes
%\begin{align*}
%Z_t &=\int_0^t \left(\sum_{i=1}^{k}j_s(X^{i*}) dA_{i}(s)+i\,j_s(X^{0*}) ds\right), & 
%\tilde Z_s&=\int_0^t \left(\sum_{i=1}^{k}j_s(Y^i) dA^*_{i}(s)-i\,j_s(Y^0) ds\right).
%\end{align*}
The differential of the product $\mathbb{F}_s(\mathbf X)^*\mathbb F_s(\mathbf Y)$ is given by the quantum Ito formula \eqref{itoformula}:
\begin{equation}\label{itoformula2}
d(\mathbb{F}_s(\mathbf X)^*\mathbb F_s(\mathbf Y)) = \mathbb{F}_s(\mathbf X)^* \cdot d \mathbb F_s(\mathbf Y)+ d \mathbb{F}_s(\mathbf X)^* \cdot \mathbb F_s(\mathbf Y)+  d \mathbb{F}_s(\mathbf X)^* \cdot d\mathbb F_s(\mathbf Y)
\end{equation}
For the last term, the Ito rule gives
$ d \mathbb{F}_s(\mathbf X)^* \cdot d\mathbb F_s(\mathbf Y) = \frac{1}{t}\sum_{i=1}^{k}j_s(X^{i*}Y^i) ds$,
and hence by using \eqref{channel} and \eqref{limitchannel}, we get
$$
\int_0^t\langle \varphi\otimes \Omega| d \mathbb{F}_s(\mathbf X)^* \cdot  d\mathbb F_s(\mathbf Y)  |\varphi\otimes \Omega\rangle = 
\frac{1}{t}\int_0^t \left\langle \varphi \left|T_s\left( \sum_{i=1}^{k}X^{i*}Y^i\right)\right|\varphi\right\rangle ds \stackrel{t\rightarrow\infty}{\longrightarrow} \mathrm{tr}[\rho_{ss}\sum_{i=1}^{k}X^{i*}Y^i].
$$
The expectation of the first term in \eqref{itoformula2} can be computed by applying Lemma \ref{iteration} with 
$\Phi_{\bf X}:=iX^{0*}(\cdot)+\sum_i X^{i*}[(\cdot),L_i]$; we get
\begin{align*}
\langle \varphi\otimes \Omega| \int_0^t  \mathbb{F}_s(\mathbf X)^* \cdot d \mathbb F_s(\mathbf Y) |\varphi\otimes \Omega\rangle 
&= \frac{1}{\sqrt{t}}\langle \varphi\otimes \Omega| \int_0^t \mathbb{F}_s(\mathbf X)^* j_s(-iY^0) \,ds |\varphi\otimes \Omega\rangle\\
&= \frac{1}{t}\langle \varphi |\int_0^t ds \int_0^s d r \,T_r \circ \Phi\circ T_{s-r}(-iY^0) \,|\varphi\rangle\\
&= \int_0^t ds \,\langle \varphi | \frac{1}{t}\left(\int_0^{t-s} dr\,T_r\right)\circ \Phi\circ T_{s}(-iY^0) \,|\varphi\rangle\stackrel{t\rightarrow\infty}{\longrightarrow} -\mathrm{tr}[\rho_{ss}\Phi\circ \mathbb W^{-1}(-iY^0)],
\end{align*}
where we have also used the limit relations \eqref{limitchannel} and \eqref{inverselimit}. The second term in \eqref{itoformula2} is obtained by taking the adjoint of the first term with the roles of ${\bf X}$ and ${\bf Y}$ interchanged; this gives
\begin{align*}
(\mathbf{X},\mathbf Y)_{\mathsf{D}}&={\rm tr}\left[\rho_{ss}\left(\sum_i X^{i*}Y^i+\Phi_{\bf X}\circ \mathbb W^{-1}(iY^0)+(\Phi_{\bf Y}\circ\mathbb W^{-1}(iY^0))^*\right)\right]\\
&={\rm tr}\left[\rho_{ss}\left(\sum_{i=1}^{k}X^{i*}Y^i - X^{0*}\mathbb W^{-1}(Y^0)-\mathbb W^{-1}(X^{0*})Y^{0} -i\sum_{i=1}^{k}X^{i*}[L^i,\mathbb W^{-1}(Y^0)]+i\sum_{i=1}^{k}[\mathbb W^{-1}(X^{0*}),L^{i*}]Y^i\right)\right]
\end{align*}
%\Phi[{\bf X}^*]\circ \mathbb W_0^{-1}(Y_0)-\hat{\Phi}[{\bf Y}]\circ \mathbb W_0^{-1}(X_0^*)\right)\right]
We then apply the identity
$$
\mathbb W(X^*Y)- X^*\mathbb W(Y)-\mathbb W(X^*)Y=\sum_i [L^i,X]^*[L^i,Y],
$$
which holds for arbitrary matrices $X,Y$, to the case where $X=\mathbb W^{-1}(X^0)$ and $Y=\mathbb W^{-1}(Y^0)$. Using the fact  that ${\rm tr}[\rho_{ss}\mathbb W(X^*Y)]=0$, we obtain the following formula for the inner product
\begin{equation}\label{eq.formula}
(\mathbf{X},\mathbf Y)_{\mathsf{D}} = 
\sum_{i=1}^k{\rm tr}\left[\rho_{ss}\left( X^i - i[L^i,\mathbb W^{-1}(X^0)] \right)^*  \left( X^i - i[L^i,\mathbb W^{-1}(X^0)] \right) \right]
\end{equation}
This proves the proposition.

\subsection{Proof of Theorem \ref{th.lan}} Since $\dot{\mathsf{D}}_a\in \mathcal{T}_{\mathsf{D}_0}^{id}$ we have  $\mathcal{E}_{\mathsf{D}_0} (\dot{\mathsf{D}}_a)=0$ and using this we find that the first order term is given by
\begin{equation}\label{eq.l1}
\mathbb L_1[u,u'](X):= \sum_{a,i} u_a  \dot{L}^{i*}_a [X,L^i] - \sum_{a,i} u'_a [X, L^{i*}]  \dot{L}^{i}_a .  
\end{equation}
The second order term is given by
\begin{align*}
\mathbb L_2[u,u'](X) &=\sum_{a, a^\prime}
u_{a} u_{a'} \left(\left(i \ddot{H}_{aa'}-\frac 12\sum_i (\ddot{L}^{i*}_{aa'} L^i+L^{i*}  \ddot{L}^i_{aa'} +
2 \dot{L}^{i*}_{a} \dot{L}^{i}_{a'} ) \right)X +\sum_i  \ddot{L}^{i*}_{aa'} X L^i \right)\\
&+\sum_{a,a'}u'_{a}u'_{a'} \left(X\left(-i\ddot{H}_{aa'}-\frac 12\sum_i (\ddot{L}^{i*}_{aa'} L^i+L^{i*} \ddot{L}^i_{aa'}
+2 \dot{L}^{i*}_{a} \dot{L}^i_{a'})\right) +\sum_i L^{i*} X \ddot{L}^i_{aa'}\right)\\
&+ 2\sum_{aa'}u_{a}u'_{a'}\sum_i  \dot{L}^{i*}_a X\dot{L}^i_{a'}.
\end{align*}
Using a version of Trotter-Kato theorem (cf. Theorem 2.2 in \cite{CatanaGutaBouten}), we obtain the limit
$$
\lim_{t\rightarrow\infty} e^{t \mathbb W_{\mathsf{D}_{u/\sqrt t}, \mathsf{D}_{u'/\sqrt t} }}(\id)=e^{f(u,u')}\id,
$$
where
$$
f(u,u')={\rm tr}\left[\rho^{\mathsf{D}_0}_{ss} \left(\frac 12 \mathbb L_2[u,u'](\id)-\mathbb L_1[u,u']\circ \mathbb W_{\mathsf{D}_0}^{-1}\circ \mathbb L_1[u,u'](\id) \right)\right].
$$

Now, from equation \eqref{eq.l1} we find $\mathbb L_1[u,u^\prime](\id)=0$  and
\begin{align*}
\mathbb L_2[u,u^\prime](\id) &=\sum_{aa'}u_{a}u_{a'} \left(i\ddot{H}_{aa'}+
\frac 12\sum_i (\ddot{L}^{i*}_{aa'} L^i-L^{i*} 
\ddot{L}^i_{aa'}-2\dot{L}^{i*}_a \dot{L}^i_{a'}) \right)\\
&+\sum_{aa'}u^\prime_{a}u^\prime_{a'} \left(-i\ddot{H}_{aa'}+
\frac 12\sum_i (-\ddot{L}^{i*}_{aa'} L^i+L^{i*} \ddot{L}^i_{aa'} -2 \dot{L}^{i*}_a \dot{L}^i_{a'} )\right)\\
&-\sum_{aa'}(u_{a}-u^\prime_a)(u_{a'}-u^\prime_{a'}) 
\sum_i \dot{L}^{i*}_a \dot{L}^i_{a'}
+2i {\rm Im} \sum_{aa'}u_a u^\prime_{a'}\sum_i \dot{L}^{i*}_a \dot{L}^i_{a'}\\
&+\sum_{aa'}(u_au_{a'}+u^\prime_a u^\prime_{a'})\sum_i \dot{L}^{i*}_a \dot{L}^i_{a'}\\
&= \sum_{aa'} \left[- (u_a- u^\prime_a) (u_{a'}- u^\prime_{a'}) \dot{L}^{i*}_{a}\dot{L}^{i}_{a'}
%= -\sum_i (G_i^u-G_i^v)^*(G_i^u-G_i^v)
+2i u_a u^\prime_{a'} {\rm Im}  \dot{L}^{i*}_a \dot{L}^i_{a'} +i (u_a u_{a'}-u^\prime_a u^\prime_{a'}) \left(\ddot{H}_{aa'}
+
{\rm Im}\sum_i \ddot{L}^{i*}_{aa'} L^i \right)\right].
\end{align*}
Therefore we have
\begin{eqnarray*}
f(u,u^\prime) &=&\frac{1}{2} {\rm tr} \left(\rho_{ss}^{\mathsf{D}_0}  \mathbb{L}_2 [u,u^\prime] (\id)\right) = 
- \frac{1}{8} (u-u^\prime)^T f^{\mathsf{D}_0} (u-u^\prime) + i u^T \sigma^{\mathsf{D}_0}  u^\prime  + i(u^{T} Su - u^{\prime T} S u^{\prime}).
\end{eqnarray*}
Above, $f^{\mathsf{D}_0}$ is the quantum Fisher information matrix at $\mathsf{D}_0$ whose entries have the simple form due to the fact that the tangent vectors $\dot{\mathsf{D}}_a$ belong to the space $\mathcal{T}^{id}_{\mathsf{D}_0}$ 
$$
f^{\mathsf{D}_0}_{aa'} = 4{\rm Re} ( \dot{\mathsf{D}}_a , \dot{\mathsf{D}}_{a'})_{\mathsf{D}_0} = 
4{\rm Re} \sum_i {\rm tr}(\rho^{\mathsf{D}_0}_{ss} \dot{L}^{i*}_a \dot{L}^{i}_{a'}).
$$ 
Moreover, $\sigma^{\mathsf{D}_0} $ is the symplectic matrix at $\mathsf{D}_0$ (see Definition \ref{def.ccr})
$$
\sigma^{\mathsf{D}_0}_{aa'} = {\rm Im} ( \dot{\mathsf{D}}_a , \dot{\mathsf{D}}_{a'})_{\mathsf{D}_0}=
{\rm Im}  \sum_i {\rm tr}(\rho^{\mathsf{D}_0}_{ss} \ \dot{L}^{i*}_{a}\dot{L}^{i}_{a'})
$$
and $S$ is the real symmetric matrix
$$
S_{aa'}=\frac 12{\rm tr}\left[\rho^{\mathsf{D}_0}_{ss}\left(\ddot{H}_{aa'}+{\rm Im}\sum_i \ddot{L}^{i*}_{aa'} L^i\right)\right].
$$

In conclusion, the overlaps of the system-output states have the following limit 
\begin{eqnarray*}
\lim_{t\to\infty} 
\left.\left\langle \Psi^{\rm s+o}_{u/\sqrt{t}}(t) \right| \Psi^{\rm s+o}_{u^\prime/\sqrt{t}}(t) \right\rangle
&=&
\exp\left( - \frac{1}{8} (u-u^\prime)^T f^{\mathsf{D}_0} (u-u^\prime) + i u^T \sigma^{\mathsf{D}_0}  u^\prime  + i(u^{T} Su - u^{\prime T} S u^{\prime}) \right) \\
&=& e^{i\phi(u) - i\phi(u^\prime)} \langle u|u^\prime\rangle.
\end{eqnarray*}
where $\phi(i):=u^{T} Su $ is a phase angle, and $|u\rangle = W(u)|\Omega\rangle$ is the coherent state on the CCR algebra $CCR(\mathcal{T}^{id}_{\mathsf{D}_0}, \sigma^{\mathsf{D}_0})$ introduced in Definition \ref{def.ccr}, so that the overlaps of two coherent states is given by equation \eqref{eq.overlap.coherent}.

%\section*{References}
%\bibliographystyle{plain}
%\bibliography{ref}

\end{document}